\newcommand{\NN}{\mathbb{N}}
\newcommand{\RR}{\mathbb{R}}
\newcommand{\bbE}{\mathbb{E}}
\newcommand{\Exp}{\bbE}
\newcommand{\Var}{\mathrm{Var}}
\newcommand{\covmean}[1]{\mu_{#1}}
\newcommand{\covcovar}[1]{\Sigma_{#1}}
\newcommand{\weight}[2]{w_{#1}^{(#2)}}
\newcommand{\weightlambda}[1]{\weight{#1}{\lambda}}
\newcommand{\bweightlambda}[1]{\vec{w}_{#1}^{(\lambda)}}
\newcommand{\bdist}[1]{\vec{d}_{#1}^2}
\newcommand{\weightzero}[1]{\weight{#1}{0}}
\newcommand{\bias}{\texttt{b}_{\lambda}}
\newcommand{\etal}{\textit{et al.}}
\newcommand{\eg}{\textit{e.g.}}
\newcommand{\Dgrth}{D_{\texttt{g}}}
\newcommand{\Dent}{D_{\texttt{e}}}
\newcommand{\Cgrth}{C_{\texttt{g}}}
\newcommand{\Cent}{C_{\texttt{e}}}
\newcommand{\distx}{\Delta(x)}
\newcommand{\risk}[2]{R_{#1}^{(#2)}}
\newcommand{\risklambda}[1]{\risk{#1}{\lambda}}
\newcommand{\riskzero}[1]{\risk{#1}{0}}
\newcommand{\regfrechet}[2]{\varphi_{#1}^{(#2)}}
\newcommand{\regfrechetlambda}[1]{\regfrechet{#1}{\lambda}}
\newcommand{\regfrechetzero}[1]{\regfrechet{#1}{0}}
\newcommand{\vspan}{\mathrm{span}\,}
\newcommand{\HS}{\mathrm{HS}}
\newcommand{\deps}{\mathrm{d}\varepsilon}
\newcommand{\dt}{\mathrm{d}t}
\newcommand{\hw}{\widehat{w}}
\newcommand{\hphi}{\widehat{\varphi}}
\newcommand{\hmu}{\widehat{\mu}}
\newcommand{\hSigma}{\widehat{\Sigma}}
\newcommand{\bfE}{\boldsymbol{E}}
\newcommand{\bfI}{\boldsymbol{I}}
\newcommand{\bfM}{\boldsymbol{M}}
\newcommand{\bfX}{\boldsymbol{X}}
\newcommand{\bfZ}{\boldsymbol{Z}}
\newcommand{\bfones}{\boldsymbol{1}}
\newcommand{\calD}{\mathcal{D}}
\newcommand{\calG}{\mathcal{G}}
\newcommand{\calH}{\mathcal{H}}
\newcommand{\calM}{\mathcal{M}}
\newcommand{\calN}{\mathcal{N}}
\newcommand{\calQ}{\mathcal{Q}}
\newcommand{\calS}{\mathcal{S}}
\newcommand{\calV}{\mathcal{V}}
\newcommand{\calW}{\mathcal{W}}
\newcommand{\diam}{\mathrm{diam}\,}
\newcommand{\rank}{\mathrm{rank}\,}
\newcommand{\spec}{\mathrm{spec}\,}
\newcommand{\colsp}{\mathrm{colsp}\,}
\newcommand{\rowsp}{\mathrm{rowsp}\,}
\newcommand{\trace}{\mathrm{tr}\,}
\newcommand{\supp}{\mathrm{supp}\,}
\newcommand{\lth}{^{(\lambda)}}
\newcommand{\lthdg}{^{(\lambda),\dagger}}
\newcommand{\cx}{c_{x}}
\newcommand{\tcx}{\tilde{c}_{x}}
\newcommand{\Prow}{\Pi^{\mathrm{row}}}
\newcommand{\Pcol}{\Pi^{\mathrm{col}}}
\newcommand{\indic}{\mathds{1}}
\newcommand{\tcalD}{\widetilde{\calD}}
\newcommand{\frechet}{\varphi^*}
\newcommand{\wfrechet}{\varphi}
\newcommand{\frechetlin}{\varphi_{\mathrm{glo}}}
\newcommand{\wlin}{w_{\mathrm{glo}}}
\newcommand{\SVT}[1]{\mathtt{SVT}^{(#1)}}
\newcommand{\SVTlambda}{\SVT{\lambda}}
\DeclareMathOperator*{\argmin}{arg\,min}
\DeclareMathOperator{\ctr}{\mathrm{ctr}}
\newtheorem{definition}{Definition}
\newtheorem{theorem}{Theorem}
\newtheorem{proposition}{Proposition}
\newtheorem{lemma}{Lemma}
\newtheorem{example}{Example}
\theoremstyle{remark}
\title{Errors-in-variables Fr\'echet Regression\\with Low-rank Covariate Approximation}
\author[1]{Dogyoon Song\textsuperscript{*}}
\author[2]{Kyunghee Han\textsuperscript{*}}
\affil[1]{\normalsize Department of Electrical Engineering and Computer Science, University of Michigan}
\affil[2]{\normalsize Department of Mathematics, Statistics, and Computer Science, University of Illinois at Chicago}
\date{\today}
\begin{document}

\maketitle

\begin{abstract}
Fr\'echet regression has emerged as a promising approach for regression analysis involving non-Euclidean response variables. However, its practical applicability has been hindered by its reliance on ideal scenarios with abundant and noiseless covariate data. In this paper, we present a novel estimation method that tackles these limitations by leveraging the low-rank structure inherent in the covariate matrix. Our proposed framework combines the concepts of global Fr\'echet regression and principal component regression, aiming to improve the efficiency and accuracy of the regression estimator. By incorporating the low-rank structure, our method enables more effective modeling and estimation, particularly in high-dimensional and errors-in-variables regression settings. We provide a theoretical analysis of the proposed estimator's large-sample properties, including a comprehensive rate analysis of bias, variance, and additional variations due to measurement errors. Furthermore, our numerical experiments provide empirical evidence that supports the theoretical findings, demonstrating the superior performance of our approach. Overall, this work introduces a promising framework for regression analysis of non-Euclidean variables, effectively addressing the challenges associated with limited and noisy covariate data, with potential applications in diverse fields.

\end{abstract}

\renewcommand{\thefootnote}{\fnsymbol{footnote}}
\footnotetext[1]{
Equal contribution. 
To whom correspondence should be addressed: \url{dogyoons@umich.edu}, \url{hankh@uic.edu}}
\renewcommand{\thefootnote}{\arabic{footnote}}

\clearpage
\tableofcontents
\newpage

\section{Introduction}

Regression analysis is a fundamental statistical methodology to model the relationship between response variables and explanatory variables (covariates). 
Linear regression, for example, models the (conditional) expected value of the response variable as a linear function of covariates. 
Regression models enable researchers and analysts to make predictions, gain insights into how input variables influence the outcomes of interest, and validate hypothetical associations between variables in inferential studies. 
As a result, regression is widely utilized across various scientific domains, including economics, psychology, biology, and engineering \cite{hey2009fourth, donoho201750, he2019statistics}.

In recent decades, there has been a growing interest in developing statistical methods capable of handling random objects in non-Euclidean spaces. 
Examples of these include functional data analysis \cite{ramsay2005functional}, statistical manifold learning \cite{izenman2008modern}, statistical network analysis \cite{kolaczyk2009statistical}, and object-oriented data analysis \cite{marron2021object}. 
In such contexts, the response variable is defined in a metric space that may lack an algebraic structure, making it challenging to apply global, parametric approaches toward regression as in the classical Euclidean setting. 
To overcome this challenge, (global) Fr\'echet regression, which models the relationship by fitting the (conditional) barycenters of the responses as a function of covariates, has been introduced \cite{petersen2019frechet}. 
Notably, when the Euclidean metric is considered, Fr\'echet regression recovers classical Euclidean regression models. 
For more details on Fr\'echet regression and its recent developments, we refer readers to \cite{hein2009robust, petersen2019frechet, dubey2020functional, schotz2022nonparametric, ghosal2023frechet}.

Nevertheless, most existing research on Fr\'echet regression has focused on ideal scenarios characterized by abundant covariate data that are accurately measured and free of noise. 
In practical applications, however, high-dimensional data often arise, which are also susceptible to measurement errors and other forms of contamination. 
These errors can stem from various sources, such as unreliable data collection methods (\eg, low-resolution probes, subjective self-reports) or imperfect data storage and transmission. 
The high-dimensionality and the presence of measurement errors in covariates pose critical challenges for statistical inference, as regression analysis based on error-prone covariates may result in incorrect associations between variables, yielding misleading conclusions.

To address these limitations, it is crucial to extend the methodology and analysis of Fr\'echet regression to tackle high-dimensional errors-in-variables problems. 
In this work, we aim to leverage the low-rank structure in the covariates to enhance the estimation accuracy and computational efficiency of Fr\'echet regression. 
Specifically, we explore the extension of principal component regression to handle errors-in-variables regression problems with non-Euclidean response variables.

\subsection{Contributions}
This paper contributes to advancing the (global) Fr\'echet regression of non-Euclidean response variables, with a particular focus on high-dimensional, errors-in-variables regression.

Firstly, we propose a novel framework, called the \emph{regularized (global) Fr\'echet regression} (Section \ref{sec:frechet_pcr}) that combines the ideas from Fr\'echet regression \cite{petersen2019frechet} and the principal component regression \cite{jolliffe1982note}. 
This framework effectively utilizes the low-rank structure in the matrix of (Euclidean) covariates by extracting its principal components via low-rank matrix approximation. 
Our proposed method is straightforward to implement, not requiring any knowledge about the error-generating mechanism.

Furthermore, we provide a comprehensive theoretical analysis (Section \ref{sec:theory}) in three main theorems to establish the effectiveness of the proposed framework. 
Firstly, we prove the consistency of the proposed estimator for the true global Fr\'echet regression model (Theorem \ref{thm:consistency}). 
Secondly, we investigate the convergence rate of the estimator's bias and variance (Theorem \ref{thm:rate}). 
Lastly, we derive an upper bound for the distance between the estimates obtained using error-free covariates and those with errors-in-variables covariates (Theorem \ref{thm:denoising_simple}). 
Collectively, these results demonstrate that our approach effectively addresses model mis-specification and achieves more efficient model estimation by leveraging the low-rank structure of covariates, despite the presence of inherent bias due to unobserved measurement errors.

To validate our theoretical findings, we conduct numerical experiments on synthetic datasets (Section \ref{sec:experiments}). 
We observe that the proposed method provides more accurate estimates of the regression parameters, especially in high-dimensional settings. 
Our experimental results emphasize the importance of incorporating the low-rank structure of covariates in Fréchet regression, and provide empirical evidence that aligns with our theoretical analysis.


\subsection{Related work}
\paragraph{Metric-space-valued variables.}
Nonparametric regression models for Riemannian-manifold-valued responses were proposed as a generalization of regression for multivariate outputs by Steinke \etal{} \cite{steinke2008non, steinke2010nonparametric}.
These works provided a foundation for recent developments in regression analysis of non-Euclidean responses. 
Later, Hein \cite{hein2009robust} proposed a Nadaraya-Watson-type kernel estimation of regression model for general metric-space-valued outcomes.
Since then, statistical properties of regression models for some special classes of metric-space-valued outcomes, such as distribution functions \cite{egozcue2012simplicial, talska2018compositional, han2019additive} and matrix-valued responses \cite{viroli2012matrix, ding2018matrix}, have been investigated.
Recently, many researchers have introduced further advances in Fr\'echet regression, including \cite{petersen2019frechet, capitaine2019fr, lin2021total, schotz2022nonparametric}.
In this study, we use the global Fr\'echet regression proposed in \cite{petersen2019frechet} as the basis for our proposed method.

\paragraph{Errors-in-variables regression.}
Much of earlier work on errors-in-variables (EIV) problems in the statistical literature can be found in \cite{carroll2006measurement}, which covers the simulation-extrapolation (SIMEX) \cite{cook1994simulation, carroll1996asymptotics}, the attenuation correction method \cite{liang1999estimation}, covariate-adjusted model \cite{csenturk2005covariate, delaigle2016nonparametric}, and the deconvolution kernel method \cite{fan1993nonparametric, fan1992multivariate, delaigle2009design}. 
The regression calibration method \cite{spiegelman1997regression}, instrumental variable modeling \cite{carroll2004nonlinear, schennach2007instrumental}, and the two-phase study design \cite{breslow2013semiparametric, amorim2021two} were also proposed when additional data are available for correcting measurement errors.
In the high-dimensional modeling literature, regularization methods for recovering the true covariate structure can also be utilized  \cite{loh2011high, belloni2017linear, datta2017cocolasso}.
Despite a diverse body of literature on high-dimensional learning and robust regression modeling, much of it assumes response spaces to be vector spaces endowed with inner products. 
In this paper, we tackle EIV problems within the Fr\'echet regression framework. 
While previous works have explored regression analysis in non-Euclidean metric spaces, addressing EIV issues in this context remains uncharted. 

\paragraph{Principal component regression.}
The principal component regression (PCR) \cite{jolliffe1982note} is a statistical technique that regresses response variables on principal component scores of the covariate matrix.
The conventional PCR selects a few principal components as the ``new'' regressors associated with the first leading eigenvalues to explain the highest proportion of variations observed in the original covariate matrix.
In functional data analysis, PCR is known to have a shrinkage effect on the model estimate and produce robust prediction performance in functional regression \cite{reiss2007functional, kalogridis2019robust}.
Recently, Agarwal \textit{et al.} \cite{agarwal2021robustness} investigated the robustness of PCR in the presence of measurement errors on covariates and the statistical guarantees for learning a good predictive model.
Unlike prior statistical analyses of EIV problems that often assume known or estimable noise distributions, PCR leverages inherent low-rank structures in the covariates without requiring a priori knowledge of measurement error distributions. 
We adopt PCR as a concrete, practical solution to EIV models in non-Euclidean regression, driven by two compelling considerations. Firstly, the prevalence of (approximate) low-rank structures in real-world datasets enhances the practical relevance of our approach. Secondly, we intentionally opt for an approach with minimal assumptions regarding covariate errors to ensure broad applicability. 

\subsection{Organization}
In Section \ref{sec:preliminaries}, we introduce the notation used throughout the paper, and overview the global Fr\'echet regression framework. 
Section \ref{sec:frechet_pcr} presents the problem setup, objectives, and our proposed estimator, which we refer to as the regularized Fr\'echet regression (Definition \ref{defn:Frechet_sample}). 
In Section \ref{sec:theory}, we discuss theoretical guarantees on the regularized Fr\'echet regression method in accurately estimating the global Fr\'echet regression function. 
Section \ref{sec:experiments} presents the results of numerical ``proof-of-concept'' experiments that support the theoretical findings. 
Finally, we conclude this paper with discussions in Section \ref{sec:discussion}. 
Due to space constraints, detailed proofs of the theorems as well as additional details and discussions of experiments are provided in the Appendix.

\section{Preliminaries}\label{sec:preliminaries}

\subsection{Notation}\label{sec:notation}
Let $\NN$ denote the set of positive integers and $\RR$ denote the set of real numbers. Also, let $\RR_{+} \coloneqq \{ x \in \RR: x \geq 0 \}$.
For $n \in \NN$, we let $[n]\coloneqq \{1, \dots, n \}$. 
We mostly use plain letters to denote scalars, vectors, and random variables, but we also use boldface uppercase letters for matrices, and curly letters to denote sets when useful. 
Note that we may identify a vector with its column matrix representation. 
For a matrix $\bfX$, we let $\bfX^{-1}$ denote its inverse (if exists) and $\bfX^{\dagger}$ denote the Moore-Penrose pseudoinverse of $\bfX$. 
Also, we let $\rowsp(\bfX)$ and $\colsp(\bfX)$ denote the row and column spaces of $\bfX$, respectively. 
Furthermore, we let $\spec(\bfX)$ denote the set of non-zero singular values of $\bfX$, $\sigma_i(\bfX)$ denote the $i$-th largest singular value of $\bfX$, and $\sigma\lth(\bfX) \coloneqq \inf\{ \sigma_i(\bfX) > \lambda: i \in \NN  \}$ with the convention $\inf \emptyset = \infty$. 
We let $\bfones_n=(1,1,\dots,1)^\top\in\RR^d$ and let $\indic$ denote the indicator function. 
We let $\| \cdot \|$ denote a norm, and set $\| \cdot \| = \| \cdot \|_2$ (the $\ell_2$-norm for vectors, and the spectral norm for matrices) by default unless stated otherwise. 
For a finite set $\calD$, we may identify $\calD$ with its empirical measure $\nu_\calD = \frac{1}{|\calD|}\sum_{x \in \calD} \delta_x$, where $\delta_x$ denotes the Dirac measure supported on $\{x\}$. 

Letting $f, g:\RR \to \RR$, we write $f(x) = O(g(x))$ as $x\to\infty$ if there exist $M>0$ and $x_0 > 0$ such that $\left| f(x) \right| \leq M \cdot g(x)$ for all $x\geq x_0$. 
Likewise, we write $f(x) = \Omega(g(x))$ if $g(x) = O(f(x))$. 
Furthermore, we write $f(x) = o(g(x))$ as $x\to\infty$ if $\lim_{x\to\infty} \frac{f(x)}{g(x)} = 0$. 
For a sequence of random variables $X_n$, and a sequence $a_n$, we write $X_n = O_p( a_n )$ as $n \to \infty$ if for any $\varepsilon > 0$, there exists $M \in \RR_+$ and $N \in \NN$ such that $P \big( \big|\frac{X_n}{a_n}\big| > M \big) < \varepsilon$ for all $n \geq N$. 
Similarly, we write $X_n = o_p( a_n )$ if $\lim_{n \to \infty} P\big( \big|\frac{X_n}{a_n}\big| > \varepsilon \big) = 0$ for all $\varepsilon > 0$.

\subsection{Global Fr\'echet regression}\label{sec:frechet_reg}
Let $(X, Y)$ be a random variable that has a joint distribution $P_{X,Y}$ supported on $\RR^p \times \calM$, where $\RR^p$ is the $p$-dimensional Euclidean space and $\calM = (\calM, d)$ is a metric space equipped with a distance function $d: \calM \times \calM \to \RR$. 
We write the marginal distribution of $X$ as $P_X$, and the conditional distribution of $Y$ given $X$ as $P_{Y|X}$. 

\begin{definition}[Fr\'echet regression function]\label{defn:frechet_regression}
    Let $(X,Y)$ be a random element that takes value in $\RR^p \times \calM$. 
    The \emph{Fr\'echet regression function} of $Y$ on $X$ is a function $\frechet: \RR^p \to \calM$ such that
    \begin{equation} \label{eqn:conditional-frechet-mean}
        \frechet(x) 
            = \argmin_{y \in \calM} \Exp \big[ d^2(Y, y) \,|\, X=x \big],     \qquad \forall x \in \supp P_X \subseteq \RR^p.
    \end{equation}
\end{definition}

We note that $\frechet(x)$ is the best predictor of $Y$ given $X = x$, as it minimizes the marginal risk $\Exp\big[ d^2(Y, \frechet(X)) \big]$ under the squared-distance loss. 
In the literature, $\frechet(x)$ is also known as the conditional Fr\'echet mean of $Y$ given $X = x$ \cite{frechet1948elements}. 
It is important to recognize that the existence and uniqueness of the Fr\'echet regression function are closely tied to the geometric characteristics of $\calM$, and are not guaranteed in general \cite{ahidar2020convergence, bhattacharya2017omnibus}.
Nonetheless, extensive research has been conducted on the existence and uniqueness of Fr\'echet means in various metric spaces commonly encountered in practical applications. 
Examples include the unit circle in $\RR^2$ \cite{charlier2013necessary}, Riemannian manifolds \cite{afsari2011riemannian, arnaudon2014means}, Alexandrov spaces with non-positive curvature \cite{sturm2003probability}, metric spaces with upper bounded curvature \cite{yokota2016convex}, and Wasserstein space \cite{zemel2019frechet, le2017existence}.

While modeling and estimating the Fr\'echet regression function $\frechet$ is often of interest, its global (parametric) modeling may not be straightforward, especially when $\calM$ lacks a useful algebraic structure, such as an inner product. 
For instance, in classical linear regression analysis with $\calM = \RR$, the conditional distribution of $Y$ given $X = x$ is normally distributed with a mean of $\frechet(x) = \alpha + \beta^\top x$ and a fixed variance $\sigma^2$, where $\alpha$ and $\beta$ represent the regression coefficients. 
Similarly, when $\calM$ possesses a linear-algebraic structure, one can specify a class of regression functions that quantifies the association between the expected outcome and covariates in an additive and multiplicative manner. 
However, the lack of an algebraic structure in general metric spaces may prevent us from characterizing the barycenter $\frechet(x)$ in the same way classical regression analysis determines the expected value of outcomes with changing covariates.

To address this challenge, Petersen and M\"uller \cite{petersen2019frechet} recently proposed to exploit algebraic structures in the space of covariates, $\RR^p$, instead of $\calM$. 
Specifically, they consider a weighted Fr\'echet mean as
\begin{align}
    \wfrechet(x) = \argmin_{y \in \calM} \Exp\big[ w(X, x) \cdot d^2(Y, y) \big],
    \label{eqn:frechet-linear-regression}
\end{align} 
where $w: \RR^p \times \RR^p \to \RR$ is a weight function such that $w(\xi, x)$ denotes the influence of $\xi$ at $x$. 
In particular, Petersen and M\"uller \cite{petersen2019frechet} defined the global Fr\'echet regression function with a specific choice of $w$ as follows. 

\begin{definition}[Global Fr\'echet regression function]\label{defn:frechet_linear_regression}
    Let $(X,Y)$ be a random variable in $\RR^p \times \calM$. 
    Let $\mu = \Exp(X)$ and $\Sigma = \Var(X)$. 
    The \emph{global Fr\'echet regression function} of $Y$ on $X$ is a function $\frechetlin: \RR^p \to \calM$ such that
    \begin{equation} \label{eqn:conditional-frechet-mean-linear}
        \frechetlin(x) 
            = \argmin_{y \in \calM} \Exp\big[ \wlin(X, x) \cdot d^2(Y, y) \big]
    \end{equation}
    where $\wlin(X, x) = 1 + (X-\mu)^\top \Sigma^{-1} (x - \mu)$. 
\end{definition}

When $\calM$ is an inner product space (\eg, $\calM = \RR$), the function $\frechetlin$ restores the standard linear regression model representation over the domain $\RR^p$. 
For this reason, $\frechetlin$ is commonly referred to as the \emph{global Fr\'echet regression model} for metric-space-valued outcomes \cite{petersen2019frechet, lin2021total, tucker2021variable}.

\paragraph{Remark on Definition \ref{defn:frechet_linear_regression}}
One might wonder why the term ``global'' is used to describe $\frechetlin$ as a Fr\'echet regression function. 
The use of the adjective ``global'' serves to emphasize its distinction from ``local'' nonparametric regression methods that interpolate data points. 
Notably, when $\calM$ is a Hilbert space, $\frechetlin$ reduces to the natural linear models. 
For instance, if $\calM = \RR$, then it follows that $\frechetlin(x) = \Exp \big[ \wlin(X, x) \cdot Y \big] = \alpha + \beta^\top (x - \mu)$, where $\alpha = \Exp[Y]$ and $\beta = \Sigma^{-1} \cdot \Exp\big[(X - \mu) \cdot Y\big]$. 
These linear models hold uniformly for the evaluation point $x$. 
Similarly, in the case of an $L^2$ space equipped with the squared-distance metric $d^2(y,y') = \| y - y' \|_2^2$ induced by the $L^2$ norm, $\frechetlin$ represents the linear regression model for functional responses. 
Thus, $\frechetlin$ establishes a globally defined model that spans the entire space.

\section{Problem and methodology}
\label{sec:frechet_pcr}

\subsection{Problem formulation}\label{sec:problem}
Let $(X, Y)$ be a random variable in $\RR^p \times \calM$ and $P_{X,Y}$ be their joint distribution. 
Let $\calD_n = \left\{ (X_i, Y_i): i \in [n] \right\}$ be an independent and identically distributed (IID) sample drawn from $P_{X,Y}$. 
Note that we may identify the set $\calD_n$ with its discrete measure (empirical distribution), cf. Section \ref{sec:notation}. 
We consider the problem of estimating the global Fr\'echet regression function $\frechetlin$ (see Definition \ref{defn:frechet_linear_regression}) from data $\calD_n$. 
In this setting, a natural estimator of $\frechetlin$ would be its sample-analogue estimator. 
With $\hmu_{\calD_n} = \Exp_{(X,Y) \sim \calD_n}(X) = \frac{1}{n} \sum_{i=1}^n X_i$ and $\hSigma_{\calD_n} = \Var_{(X,Y) \sim \calD_n}(X) = \frac{1}{n} \sum_{i=1}^n (X_i - \hmu_{\calD_n} ) \cdot (X_i - \hmu_{\calD_n} )^{\top}$, the sample-analogue estimator $\hphi_{\calD_n}$ is defined as
\begin{equation}\label{eqn:frechet_sample_analogue}
    \hphi_{\calD_n}(x) 
        = \argmin_{y \in \calM} \left\{ \frac{1}{n} \sum_{(X_i, Y_i) \in \calD_n} \hw_{\calD_n}(X_i, x) \cdot d^2(Y_i, y) \big] \right\}
\end{equation}
where $\hw_{\calD_n}(X, x) = 1 + (X-\hmu_{\calD_n})^\top \hSigma_{\calD_n}^{-1} (x - \hmu_{\calD_n})$. 
The statistical properties of $\hphi_{\calD_n}$, including the asymptotic distribution, a ridge-type variable selection operation, and total variation regularization method 
have been investigated \cite{petersen2019frechet, lin2021total, tucker2021variable}. 

In practice, however, we may only be able to access $\tcalD_n = \{ (Z_i, Y_i): i \in [n] \}$ instead of $\calD_n$, where 
\begin{align} \label{eiv-covariate}
    Z_i = X_i + \varepsilon_i, \qquad i = 1, \dots, n
\end{align}
denotes an error-prone observation of the covariates $X$ by measurement error $\varepsilon$. 
This formulation corresponds to the classical errors-in-variables problem.

\paragraph{Objective.} 
Given a dataset, either $\calD_n$ or $\tcalD_n$, our aim is to produce an estimate $\hphi$ of the global Fr\'echet regression function $\frechetlin$ so that the prediction error is minimized. 
Specifically, we evaluate the performance of $\hphi$ by means of the distance in the response space, $d \big( \hphi(x), \frechetlin(x) \big)$.


\subsection{Fr\'echet regression with covariate principal components}\label{sec:regularized_frechet}

\paragraph{Singular value thresholding.}
Among various low-rank matrix approximation methods, we consider the (hard) singular value thresholding (SVT). 
For any $\lambda \in \RR_+$, we define the map $\SVTlambda: \RR^{n \times p} \to \RR^{n \times p}$ that removes all singular values that are less than the threshold $\lambda$. 
To be precise, $\SVTlambda$ can be expressed in terms of the singular value decomposition (SVD) as follows:
\begin{equation}\label{eqn:svt}
    \bfM = \sum_{i=1}^{\min\{n,p\}} s_i \cdot u_i v_i^{\top} \text{ is a SVD}
    \quad\implies\quad
    \SVTlambda(\bfM) = \sum_{i=1}^{\min\{n,p\}} s_i \cdot \indic\{ s_i > \lambda \} \cdot u_i v_i^{\top}.
\end{equation}


\paragraph{Regularized Fr\'echet regression.} 
We introduce a variant of the sample-analog estimator of the global Fr\'echet regression function 
based on principal components of the sample covariance. 
To facilitate the description of our proposed estimator, we introduce additional notation here. 

\begin{definition}[Covariate mean/covariance]
    For a probability distribution $\nu$ on $\RR^p \times \calM$, 
    the \emph{covariate mean} and \emph{covariate covariance} with respect to $\nu$
    are respectively defined as
    \begin{equation}\label{eqn:sample_mean_var}
        \covmean{\nu} \coloneqq \Exp_{(X,Y) \sim \nu}(X)
        \qquad\text{and}\qquad
        \covcovar{\nu} \coloneqq \Var_{(X,Y) \sim \nu}(X).
    \end{equation}
\end{definition}

Recall that a finite set $\calD \subset \RR^p \times \calM$ may be identified with its empirical distribution; it follows that
\begin{equation}
    \covmean{\calD} = \frac{1}{|\calD|} \sum_{(x_i, y_i) \in \calD} x_i
    \qquad\text{and}\qquad
    \covcovar{\calD} = \frac{1}{|\calD|} \sum_{(x_i, y_i) \in \calD} (x_i - \covmean{\calD} ) \cdot (x_i - \covmean{\calD} )^{\top}.
\end{equation}

\begin{definition}[Regularized Fr\'echet regression]\label{defn:Frechet_sample}
    Let $\nu$ be a probability distribution on $\RR^p \times \calM$ and $\lambda \in \RR_+$. 
    The \emph{$\lambda$-regularized Fr\'echet regression function} for $\nu$ is a map $\varphi^{(\lambda)}_{\nu}: \RR^p \to \calM$ such that
    \begin{equation}\label{eqn:frechet_estimator}
        \begin{aligned}
        \regfrechetlambda{\nu}(x) = \argmin_{y \in \calM} \risklambda{\nu}(y;x), 
            \qquad\text{where}\qquad 
                \risklambda{\nu}(y; x) &= \bbE_{(X,Y) \sim \nu} \left[ \weightlambda{\nu} (X, x) \cdot d^2(Y, y) \right]\\
                \text{and}\qquad \weightlambda{\nu} (x', x) &= 1 + ( x' - \covmean{\nu} )^\top \left[ \SVTlambda \big( \covcovar{\nu} \big) \right]^\dagger ( x - \covmean{\nu} ).
        \end{aligned}
    \end{equation}
\end{definition}
When $\calD_n = \left\{ (X_i, Y_i) \in \RR^p \times \calM: i \in [n] \right\}$ is an IID sample from $P_{X,Y}$, the $\lambda$-regularized estimator $\regfrechetlambda{\calD_n}$ subsumes the sample-analogue estimator $\hphi_{\calD_n}$ in \eqref{eqn:frechet_sample_analogue} as a special case where $\lambda = 0$.

\paragraph{Connection to principal component regression.}
Here we remark that when $\calM$ is a Euclidean space, the regularized Fr\'echet regression function $\regfrechetlambda{\nu}$ effectively reduces to the principal component regression. 
Suppose that $\calM = \RR$ and $\calD_n = \left\{ (x_i, y_i) \in \RR^p \times \RR: i \in [n] \right\}$ is a given dataset. 
Then $\regfrechetlambda{\calD_n}(x) = \overline{y} + \hat{\beta}_{\lambda}^{\top} \left( x - \covmean{\calD_n} \right)$ where 
$\overline{y} = \frac{1}{n} \sum_{i=1}^n y_i$ and $\hat{\beta}_{\lambda} = [ \SVTlambda \big( \covcovar{\calD_n} \big) ]^\dagger \cdot \big[ \frac{1}{n}\sum_{i=1}^n ( x_i - \covmean{\calD_n}) \cdot ( y_i - \overline{y}) \big]$. 
Observe that $\hat{\beta}_{\lambda}$ is exactly the regression coefficient of principal component regression applied to the centered dataset $\calD_n^{\ctr} = \left\{ (x_i - \covmean{\calD_n}, y_i - \overline{y}): i \in [n] \right\}$ using $k$ principal components with $k = \max_{a \in [p]} \big\{ \sigma_a( \covcovar{\calD_n^{\ctr}}) \geq \lambda \big\}$.

\section{Main results}\label{sec:theory}
In this section, we investigate properties of $\regfrechetlambda{\nu}$ for $\lambda \geq 0$, with a focus on two cases: $\nu = \calD_n$ and $\nu = \tcalD_n$, cf. Section \ref{sec:problem}. 
By denoting the true distribution that generates $(X, Y)$ as $\nu^*$, we can express $\frechetlin$ as $\regfrechetzero{\nu^*}$.
To analyze the discrepancy between the estimator $\regfrechetlambda{\nu}(x)$ and $\frechetlin(x)$, we examine the relationships depicted in the schematic in Figure \ref{fig:schematic}. 
Our theoretical findings can be summarized as follows: Even in the presence of covariate noises, $\regfrechetlambda{\tcalD_n}$ with a suitable $\lambda > 0$ can effectively eliminate the noise in $Z$ to estimate $X$, thereby reducing the error in estimating $\frechetlin$.

\begin{figure}[h!]
    \centering
    \resizebox{1.0\textwidth}{!}{
    \begin{tikzpicture}[every node/.style={outer sep=6pt}]
        \node[align=center] at (-3.5,1.8)     (unregularized)         {Unregularized\\($\lambda=0$)};
        \node[align=center] at (-3.5,-0.2)     (regularized)           {Regularized\\($\lambda>0$)};
        \node at (0,3)      (population)            {Population};
        \node at (4,3)      (finite)                {Finite-sample};
        \node at (8,3)      (errors)                {Errors-in-variables};
        
        \node at (0,2)      (true)                  {$\frechetlin (x) = \regfrechetzero{\nu^*}(x)$};
        \node at (0,0)      (true_lambda)           {$\regfrechetlambda{\nu^*}(x)$};
        \node at (4,2)      (sampled)               {$\regfrechetzero{\calD_n}(x)$};
        \node at (4,0)      (sampled_lambda)        {$\regfrechetlambda{\calD_n}(x)$};
        \node at (8,2)      (eiv)                   {$\regfrechetzero{\tcalD_n}(x)$};
        \node at (8,0)      (eiv_lambda)            {$\regfrechetlambda{\tcalD_n}(x)$};

        \draw[<->, dotted]  (true) -- (sampled);
        \draw[<->, dashed]  (true) -- (true_lambda);
        \draw[<->, dashed]  (true_lambda) -- (sampled_lambda);
        \draw[<->]  (true) -- (sampled_lambda);
        \draw[<->]  (sampled_lambda) -- (eiv_lambda);

        \node at (2.35, 2.3)   (PM19)              {\cite{petersen2019frechet}};
        \node[align=center] at (-1, 1)    (lem1:bias)         {\footnotesize Lemma \ref{lem:bias_population}\\\scriptsize(Appendix)};
        \node[align=center] at (2, -0.5)    (lem2:variance)         {\footnotesize Lemma \ref{lem:variance}\\\scriptsize(Appendix)};
        \node at (3.8,1)    (thm:consis)        {Theorems \ref{thm:consistency} \& \ref{thm:rate}};
        \node at (6,-0.4)    (thm:denoise)       {Theorem \ref{thm:denoising_simple}};
        
    \end{tikzpicture}
    }
    \caption{A schematic for the relationship between the regularized Fr\'echet regression estimators.}
    \label{fig:schematic}
\end{figure}

\subsection{Model assumptions and examples}\label{sec:conditions}

We impose the following assumptions for our analysis.
\begin{enumerate}[label=(C\arabic*)]
    \setcounter{enumi}{-1}
    \item\label{cond:existence}
    (\texttt{Existence}) 
    For any probability distribution $\nu$ and any $\lambda \in \RR_+$, 
    the object $\regfrechetlambda{\nu}(x)$ exists (almost surely) and is unique. 
    In particular, $\inf_{y \in \calM: \, d(y, \frechetlin(x)) > \varepsilon} R(y; x) > R( \frechetlin(x); x ) $ for all $\varepsilon > 0$, 
    where $R(y; x) \coloneqq \riskzero{\nu^*}(y; x)$.
    \item\label{cond:growth}
    (\texttt{Growth}) 
    There exist $\Dgrth > 0$, $\Cgrth > 0$ and $\alpha > 1$, possibly depending on $x$, such that 
    for any probability distribution $\nu$ and any $\lambda \in \RR_+$, 
    \begin{equation}\label{condition-growth}
        \begin{cases}
            d \big(y, \regfrechetlambda{\nu}(x)\big) < \Dgrth 
                \quad \implies \quad 
                \risklambda{\nu}(y;x) - \risklambda{\nu} \big(\regfrechetlambda{\nu}(x);x \big) 
                \geq \Cgrth \cdot d \big(y, \regfrechetlambda{\nu}(x)\big)^\alpha,\\
            d \big(y, \regfrechetlambda{\nu}(x)\big) \geq \Dgrth 
                \quad \implies \quad 
                \risklambda{\nu}(y;x) - \risklambda{\nu} \big(\regfrechetlambda{\nu}(x);x \big) 
                \geq \Cgrth \cdot \Dgrth^{\alpha}.
        \end{cases}
    \end{equation} 
    \item\label{cond:entropy}
    (\texttt{Bounded entropy})
    There exists $\Cent > 0$, possibly depending on $y$, such that  
    \begin{equation}\label{eqn:entropic_integral}
        \limsup_{\delta \to 0} \int_0^1 \sqrt{1 + \log \mathfrak{N} \big( B_d \big(y, \delta \big), \delta \varepsilon \big)} \, \deps \leq \Cent,
    \end{equation}
    where $B_d( y, \delta ) \coloneqq \{ y' \in \calM: d(y,y') \leq \delta \}$ and $\mathfrak{N}(S, \varepsilon)$ is the $\varepsilon$-covering number\footnote{The formal definition of covering number is provided in Appendix \ref{sec:verification_examples}; see Definition \ref{def:covering_number}. } of $S$.

\end{enumerate}

Assumption \ref{cond:existence} is common to establish the consistency of an M-estimator \cite[Chapter 3.2]{vaart1996weak}; in particular, it ensures the weak convergence of the empirical process $\risklambda{\calD_n}$ to the population process $\risklambda{\nu^*}$ implying convergence of their minimizers. 
Furthermore, the conditions on the curvature \ref{cond:growth} and the covering number \ref{cond:entropy} control the behavior of the objectives near the minimum in order to obtain rates of convergence; it is worth mentioning that \ref{cond:entropy} corresponds to a (locally) bounded entropy for every $y \in \calM$, while (P1) in \cite{petersen2019frechet} requires the same condition only with $y = \frechetlin(x)$. These conditions arise from empirical process theory and are also commonly adopted \cite{petersen2019frechet, schotz2019convergence, schotz2022nonparametric}.

Here we provide several examples of the space $\calM$, in which the conditions \ref{cond:existence}, \ref{cond:growth} and \ref{cond:entropy} are satisfied. 
We verify the conditions in Appendix \ref{sec:verification_examples}; see Propositions \ref{prop:Hilbert}, \ref{prop:Wasserstein}, \ref{prop:correlation}, and \ref{prop:riemannian}.

\begin{example}
\label{example:Euclidean}
    Let $\calM = (\calH, d_{\HS})$ be a finite-dimensional Hilbert space $\calH$ equipped with the Hilbert-Schmidt metric $d_{\HS}(y_1, y_2) = \left\langle y_1 - y_2, y_1 - y_2 \right\rangle^{1/2}$, e.g., $\calM = (\RR^r, d_2)$ where $d_2$ is the $\ell^2$-metric.
\end{example}

\begin{example}
\label{example:Wasserstein}
    Let $\calM$ be $\calW$, the set of probability distributions $G$ on $\RR$ such that $\int_{\RR} x^2 \, dG(x) < \infty$, equipped with the Wasserstein metric $d_W$ defined as
    \[
        d_W(G_1, G_2)^2 = \int_0^1 \left( G_1^{-1}(t) - G_2^{-1}(t) \right)^2 \, dt,
    \]
    where $G_1^{-1}$ and $G_2^{-1}$ are the quantile functions of $G_1$ and $G_2$, respectively. 
    See \cite[Section 6]{petersen2019frechet}.
\end{example}

\begin{example}
\label{example:correlation}
    Let $\calM = \left\{ M \in \RR^{r \times r}: M = M^T, M \succeq 0 \text{ and } M_{ii} = 1, \forall i \in [r] \right\}$ be the set of correlation matrices of size $r$, equipped with the Frobenius metric, $d_F(M, M') = \| M - M' \|_F$.
\end{example}

\begin{example}
\label{example:Riemannian}
    Let $\calM$ be a (bounded) Riemannian manifold of dimension $r$, and let $d_g$ be the geodesic distance induced by the Riemannian metric.
\end{example}

\subsection{Theorem statements}
\subsubsection{Noiseless covariate setting}
We first verify the consistency of the $\lambda$-regularized Fr\'echet regression function as follows.

\begin{theorem}[Consistency]\label{thm:consistency}
    Suppose that Assumption \ref{cond:existence} holds. If $\diam(\calM) < \infty$, then for any $\lambda \in \RR$ such that $0 \leq \lambda < \min\{ \sigma_i(\covcovar{\nu^*}): \sigma_i(\covcovar{\nu^*}) > 0 \}$, and any $x \in \RR^p$,
    \begin{equation}
        d\big( \regfrechetlambda{\calD_n}(x), \regfrechetzero{\nu^*}(x) \big) = o_P(1)
        \qquad\text{as }~n \to \infty.
    \end{equation}
\end{theorem}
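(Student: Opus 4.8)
The plan is to decompose the target discrepancy into a \emph{population bias} part (replacing $\lambda$ by $0$ at the level of $\nu^*$) and a \emph{finite-sample fluctuation} part (replacing $\nu^*$ by $\calD_n$ at fixed $\lambda$), argue that the bias part is identically zero under the spectral constraint on $\lambda$, and handle the fluctuation part with the classical consistency argument for M-estimators combined with a uniform law of large numbers for the weighted empirical risk.

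\emph{Step 1 (the population bias vanishes).} Since $\lambda < \min\{\sigma_i(\covcovar{\nu^*}): \sigma_i(\covcovar{\nu^*}) > 0\}$, hard thresholding at level $\lambda$ removes none of the nonzero singular values of $\covcovar{\nu^*}$, so $\SVTlambda(\covcovar{\nu^*}) = \covcovar{\nu^*} = \SVT{0}(\covcovar{\nu^*})$ and hence $[\SVTlambda(\covcovar{\nu^*})]^\dagger = \covcovar{\nu^*}^\dagger$. Therefore $\weightlambda{\nu^*} \equiv \weightzero{\nu^*}$, which gives $\risklambda{\nu^*}(\cdot\,;x) = \riskzero{\nu^*}(\cdot\,;x) = R(\cdot\,;x)$, and by the uniqueness in Assumption \ref{cond:existence}, $\regfrechetlambda{\nu^*}(x) = \regfrechetzero{\nu^*}(x) = \frechetlin(x)$. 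It thus suffices to prove $d\big(\regfrechetlambda{\calD_n}(x), \frechetlin(x)\big) = o_P(1)$, i.e.\ that the minimizer of the empirical criterion $y \mapsto \risklambda{\calD_n}(y;x)$ converges to the population minimizer of $R(\cdot\,;x)$.

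\emph{Step 2 (M-estimator consistency).} Here I would invoke the standard argument for consistency of M-estimators \cite[cf.\ Chapter 3.2]{vaart1996weak}: it suffices to verify (i) the well-separated-minimum condition $\inf_{y:\,d(y,\frechetlin(x))>\varepsilon} R(y;x) > R(\frechetlin(x);x)$ for every $\varepsilon > 0$, which is exactly part of Assumption \ref{cond:existence}, and (ii) the uniform convergence $\sup_{y\in\calM}\big|\risklambda{\calD_n}(y;x) - R(y;x)\big| = o_P(1)$. For (ii), write the weighted empirical risk as an affine functional of empirical averages,
\[
  \risklambda{\calD_n}(y;x) \;=\; \frac1n\sum_{i=1}^n d^2(Y_i,y) \;+\; \Big( \frac1n\sum_{i=1}^n (X_i - \covmean{\calD_n})\, d^2(Y_i,y) \Big)^{\!\top} v_n, \qquad v_n \coloneqq [\SVTlambda(\covcovar{\calD_n})]^\dagger (x - \covmean{\calD_n}),
\]
and combine three ingredients: (a) convergence of the coefficient vector $v_n$ to $\covcovar{\nu^*}^\dagger(x - \covmean{\nu^*})$ in probability; (b) a Glivenko--Cantelli-type uniform law of large numbers giving $\sup_y|\frac1n\sum_i d^2(Y_i,y) - \Exp d^2(Y,y)| = o_P(1)$ and the analogous statement for the $\RR^p$-valued averages --- valid because $y \mapsto d^2(\cdot,y)$ is bounded by $\diam(\calM)^2$ and $2\diam(\calM)$-Lipschitz in $y$, so the indexing class is controlled once $\calM$ is totally bounded (this is the role of $\diam(\calM)<\infty$), while $X$ has a finite second moment (implicit in Definition \ref{defn:frechet_linear_regression}); and (c) the uniform bound $\sup_y \|\Exp[(X-\covmean{\nu^*})d^2(Y,y)]\| \le \diam(\calM)^2\,\Exp\|X - \covmean{\nu^*}\| < \infty$, which lets the error $\|v_n - v^*\|$ be absorbed uniformly over $y$. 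Ingredients (b) and (c) then yield (ii), and (i)--(ii) give the claim.

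\emph{Main obstacle.} The delicate point is (a): both $\SVTlambda$ and the Moore--Penrose pseudoinverse are discontinuous on $\RR^{p\times p}$ in general, so the strong law $\covcovar{\calD_n} \to \covcovar{\nu^*}$ (and $\covmean{\calD_n}\to\covmean{\nu^*}$) does not by itself yield $[\SVTlambda(\covcovar{\calD_n})]^\dagger \to \covcovar{\nu^*}^\dagger$. This is precisely where the hypothesis $\lambda < \min\{\sigma_i(\covcovar{\nu^*}): \sigma_i(\covcovar{\nu^*})>0\}$ enters. I would argue that, since $X - \covmean{\nu^*} \in \colsp(\covcovar{\nu^*})$ almost surely, one has $\colsp(\covcovar{\calD_n}) \subseteq \colsp(\covcovar{\nu^*})$ and in particular $\rank\covcovar{\calD_n} \le \rank\covcovar{\nu^*}$ for every $n$, whereas lower semicontinuity of rank under $\covcovar{\calD_n}\to\covcovar{\nu^*}$ forces the reverse inequality for all large $n$; together with the convergence of the nonzero singular values to those of $\covcovar{\nu^*}$ (all strictly above $\lambda$), this shows that with probability tending to one $\SVTlambda(\covcovar{\calD_n}) = \covcovar{\calD_n}$ and the rank is locally constant along the sequence, so the pseudoinverse is continuous there and $[\SVTlambda(\covcovar{\calD_n})]^\dagger = \covcovar{\calD_n}^\dagger \to \covcovar{\nu^*}^\dagger$ in probability. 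Combined with $\covmean{\calD_n}\to\covmean{\nu^*}$ this gives (a), hence (ii), and the theorem follows. A secondary technical caveat is the uniform LLN in (b), which requires $\calM$ to be totally bounded --- automatic under $\diam(\calM)<\infty$ in the motivating examples of Section \ref{sec:conditions}.
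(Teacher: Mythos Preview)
Your proposal is correct and reaches the same endpoint as the paper --- uniform convergence $\sup_{y\in\calM}\big|\risklambda{\calD_n}(y;x)-R(y;x)\big|=o_P(1)$ combined with the well-separated-minimum clause of \ref{cond:existence} and the argmax theorem --- but the route differs in two respects. First, the paper decomposes the risk discrepancy at the \emph{empirical} level, writing $\risklambda{\calD_n}-\riskzero{\nu^*}=(\risklambda{\calD_n}-\riskzero{\calD_n})+(\riskzero{\calD_n}-\riskzero{\nu^*})$ and handling the two pieces separately; you instead dispose of $\lambda$ at the population level (your Step~1) and then treat $\risklambda{\calD_n}-R$ directly via the affine representation in the coefficient vector $v_n$. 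Second, the paper obtains uniform convergence by pointwise convergence plus asymptotic equicontinuity (the Lipschitz bound $|\risklambda{\calD_n}(y_1;x)-\risklambda{\calD_n}(y_2;x)|\le 2\diam(\calM)\,d(y_1,y_2)\cdot\frac1n\sum_i|\weightlambda{\calD_n}(X_i,x)|$), whereas you invoke a Glivenko--Cantelli argument on the bounded Lipschitz class $\{d^2(\cdot,y):y\in\calM\}$. Your rank-stability argument for $v_n\to v^*$ --- using $\colsp\covcovar{\calD_n}\subseteq\colsp\covcovar{\nu^*}$ a.s.\ to force eventual equality of ranks and hence continuity of the pseudoinverse along the sequence --- is actually more explicit than the paper's corresponding step, which simply asserts that the projection onto the sub-threshold eigenspace tends to zero from singular-value convergence. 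The caveat you flag (that the ULLN needs $\calM$ totally bounded, not merely of finite diameter) applies equally to the paper's equicontinuity route, since tightness in $\ell^\infty(\calM)$ also requires it; this is a shared unstated hypothesis rather than a gap specific to your argument.
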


If $\lambda < \sigma^{(0)}(\covcovar{\nu^*}) = \min\{ \sigma_i(\Sigma_{\nu^*}): \sigma_i(\Sigma_{\nu^*}) > 0 \}$, then the regularized estimator $\regfrechetlambda{\calD_n}(x)$ effectively reduces to the same as the sample-analog estimator $\hphi_{\calD_n}(x)$ in \eqref{eqn:frechet_sample_analogue} in the limit $n \to \infty$. 
Thus, $\regfrechetlambda{\calD_n}(x)$ inherits the consistency of $\hphi_{\calD_n}$. 
We provide a detailed proof of Theorem \ref{thm:consistency} in Appendix \ref{sec:proof_consistency}.

In addition to the consistency of $\regfrechetlambda{\calD_n}$ in the small $\lambda$ limit, we present an analysis for the convergence rate of $\regfrechetlambda{\calD_n}(x)$ in the following theorem. 

\begin{definition}
\label{defn:mahalanobis}
   The \emph{Mahalanobis seminorm} of $x$ induced by a positive semidefinite matrix $\Sigma$ is $ \| x \|_{\Sigma} \coloneqq \left( x^{\top} \Sigma^{\dagger} x \right)^{1/2}$. 
\end{definition}

\begin{theorem}[Rate of convergence]\label{thm:rate}
    Suppose that Assumptions \ref{cond:existence}--\ref{cond:entropy} hold. 
    If $\diam(\calM) < \infty$, then for any $\lambda \in \RR_+$ and $x \in \mathbb{R}^p$ such that $\| x -\covmean{\nu^*} \|_{\covcovar{\nu^*}} \leq \frac{ \Cgrth \cdot \Dgrth^{\alpha} }{ \diam(\calM)^2 \cdot \sqrt{ \rank \covcovar{\nu^*}}}$,
    \begin{align}
         d\Big( \regfrechetlambda{\calD_n}(x), \regfrechetzero{\nu^*}(x) \Big) = O_P\Big( \bias(x)^{\frac{1}{\alpha-1}} + n^{-\frac{1}{2(\alpha - 1)}} \Big) \quad \textrm{as} \quad n \to \infty,
     \end{align}
     where $\bias(x) = \rank \big( \covcovar{\nu^*} - \covcovar{\nu^*}\lth \big)^{\frac{1}{2}} \cdot \left\| x - \covmean{\nu^*} \right\|_{\covcovar{\nu^*} - \covcovar{\nu^*}\lth}$.
\end{theorem}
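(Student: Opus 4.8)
The plan is to split the error into a \emph{bias} term and a \emph{variance} term via an intermediate quantity, namely the population regularized estimator $\regfrechetlambda{\nu^*}(x)$, and bound each via the curvature condition \ref{cond:growth} combined with control of the corresponding risk increments. Concretely, by the triangle inequality
\[
d\big( \regfrechetlambda{\calD_n}(x), \regfrechetzero{\nu^*}(x) \big)
\;\leq\;
d\big( \regfrechetlambda{\calD_n}(x), \regfrechetlambda{\nu^*}(x) \big)
\;+\;
d\big( \regfrechetlambda{\nu^*}(x), \regfrechetzero{\nu^*}(x) \big).
\]
The second (bias) term is purely a population-level statement and is exactly what Lemma \ref{lem:bias_population} (the dashed arrow \texttt{true}--\texttt{true\_lambda} in Figure \ref{fig:schematic}) is designed to deliver: I would apply \ref{cond:growth} to $y = \regfrechetzero{\nu^*}(x)$ against the minimizer $\regfrechetlambda{\nu^*}(x)$ of $\risklambda{\nu^*}(\cdot;x)$, so that $d(\cdot,\cdot)^{\alpha} \lesssim \risklambda{\nu^*}(\regfrechetzero{\nu^*}(x);x) - \risklambda{\nu^*}(\regfrechetlambda{\nu^*}(x);x)$. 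The right-hand side is then controlled by comparing the two weight functions $\weightlambda{\nu^*}$ and $\weightzero{\nu^*}$: their difference is $(x' - \mu)^\top\big([\SVTlambda(\Sigma)]^\dagger - \Sigma^\dagger\big)(x-\mu)$, and since $[\SVTlambda(\Sigma)]^\dagger - \Sigma^\dagger$ is supported exactly on the eigenspaces of $\Sigma$ with eigenvalues $\le \lambda$, i.e. on $\covcovar{\nu^*}-\covcovar{\nu^*}\lth$, a Cauchy--Schwarz/trace estimate bounds the risk gap by $\diam(\calM)^2$ times $\mathrm{rank}(\covcovar{\nu^*}-\covcovar{\nu^*}\lth)^{1/2} \cdot \|x-\mu\|_{\covcovar{\nu^*}-\covcovar{\nu^*}\lth}$, which is $\diam(\calM)^2 \cdot \bias(x)$. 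Raising to the power $1/\alpha$ — and using that for the relevant small $\bias(x)$ we stay in the regime $d(\cdot,\cdot) < \Dgrth$ — yields the $\bias(x)^{1/(\alpha-1)}$ contribution (the exponent $1/(\alpha-1)$ rather than $1/\alpha$ comes from the standard M-estimation bootstrapping when the growth exponent $\alpha$ controls a quantity that itself scales with the perturbation).

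For the first (variance) term, I would invoke Lemma \ref{lem:variance} (the dashed arrow \texttt{true\_lambda}--\texttt{sampled\_lambda}): here both estimators minimize the \emph{same-form} objective with the same regularization level $\lambda$, one at the empirical measure $\calD_n$ and one at $\nu^*$. This is a textbook M-estimation rate argument (\eg \cite[Ch.~3.2]{vaart1996weak}, following the structure of \cite{petersen2019frechet}): using \ref{cond:growth} to convert distance to risk gap, then bounding $\sup_{y}\big| \risklambda{\calD_n}(y;x) - \risklambda{\nu^*}(y;x) \big|$ over small $d$-balls by a maximal inequality whose entropy integral is finite by \ref{cond:entropy}. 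The empirical process increment here has two sources of fluctuation — the $d^2(Y_i,y)$ terms and the plug-in estimates $\hmu_{\calD_n}, [\SVTlambda(\hSigma_{\calD_n})]^\dagger$ inside the weights — and both are $O_P(n^{-1/2})$ (the latter because, under $\lambda$ bounded away from the spectrum of $\Sigma$, SVT is locally Lipschitz so $[\SVTlambda(\hSigma_{\calD_n})]^\dagger \to [\SVTlambda(\Sigma)]^\dagger$ at the $\sqrt n$-rate by the delta method / Davis--Kahan). The constraint $\|x-\mu\|_{\Sigma} \le \Cgrth\Dgrth^\alpha/(\diam(\calM)^2\sqrt{\rank\Sigma})$ is precisely what keeps the weights uniformly bounded so that these increments are controlled and the minimizer stays in the $\Dgrth$-neighborhood where \ref{cond:growth} gives the clean polynomial lower bound; pushing through the rate yields $n^{-1/(2(\alpha-1))}$.

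The main obstacle I anticipate is the \emph{joint} handling of the two perturbations in the weight function — the pseudoinverse of a singular-value-thresholded sample covariance is a non-smooth functional in general, and one must argue that, on the event (of probability $\to 1$) where no sample singular value lies in a neighborhood of $\lambda$, the map $\bfM \mapsto [\SVTlambda(\bfM)]^\dagger$ behaves smoothly enough to transfer an $O_P(n^{-1/2})$ bound on $\hSigma_{\calD_n}-\Sigma$ to the weights with the right constants (involving $\rank\Sigma$ and $\sigma\lth(\Sigma)$). A secondary subtlety is making sure the two exponents combine additively rather than multiplicatively — i.e.\ that the bias introduced by regularization does not degrade the variance rate — which requires verifying that the curvature constants $\Cgrth, \Dgrth$ in \ref{cond:growth} can be taken uniform over all the measures and all $\lambda$ appearing in the argument, as the assumption indeed stipulates. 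Once these are in place, adding the two bounds gives the stated $O_P\big(\bias(x)^{1/(\alpha-1)} + n^{-1/(2(\alpha-1))}\big)$.
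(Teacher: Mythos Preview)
Your decomposition and both halves of the argument match the paper's proof (Lemmas~\ref{lem:bias_population} and~\ref{lem:variance}) essentially step for step, including the split of the variance increment into a weight-plug-in piece and an IID piece, each $O_P(n^{-1/2})$, followed by peeling under \ref{cond:growth} and \ref{cond:entropy}.

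One point in the bias argument needs sharpening. The risk-gap bound you actually state, $\diam(\calM)^2\cdot\bias(x)$, comes from the crude estimate $|d^2(Y,y_1)-d^2(Y,y_2)|\le\diam(\calM)^2$ and would only deliver exponent $1/\alpha$; raising that to $1/\alpha$ does not give $\bias(x)^{1/(\alpha-1)}$. The paper instead uses the Lipschitz bound $|d^2(Y,y_1)-d^2(Y,y_2)|\le 2\,\diam(\calM)\,d(y_1,y_2)$ to obtain
\[
R\big(\regfrechetlambda{\nu^*}(x);x\big)-R\big(\regfrechetzero{\nu^*}(x);x\big)\le 2\,\diam(\calM)\cdot\bias(x)\cdot d\big(\regfrechetlambda{\nu^*}(x),\regfrechetzero{\nu^*}(x)\big),
\]
see \eqref{eqn:Delta_1}; combining this with $\Cgrth\,d^\alpha\le R(\regfrechetlambda{\nu^*}(x);x)-R(\regfrechetzero{\nu^*}(x);x)$ gives $d^{\alpha-1}\lesssim\bias(x)$. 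This is precisely the ``quantity that itself scales with the perturbation'' you allude to in your parenthetical, but your stated bound does not have that $d$-factor, so as written the step does not close. A secondary misattribution: the hypothesis on $\|x-\covmean{\nu^*}\|_{\covcovar{\nu^*}}$ is invoked only in the bias lemma---to certify that the risk gap is below $\Cgrth\Dgrth^\alpha$ and hence, via the contrapositive of \ref{cond:growth}, that $d<\Dgrth$---and plays no role in Lemma~\ref{lem:variance}.
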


We obtain Theorem \ref{thm:rate} by showing a ``bias'' upper bound $d\big( \regfrechetlambda{\nu^*}(x), \regfrechetzero{\nu^*}(x) \big) = O \big( \bias(x)^{\frac{1}{\alpha-1}} \big)$ and a ``variance'' bound $d\big( \regfrechetlambda{\calD_n}(x), \regfrechetlambda{\nu^*}(x) \big) = O_P\big( n^{-\frac{1}{2(\alpha - 1)} }\big)$; see Lemmas \ref{lem:bias_population} and \ref{lem:variance} in Appendix \ref{sec:proof_rate}. 
Here we remark that $\bias(x)$ is a monotone non-decreasing function of $\lambda$, and if $\lambda < \sigma^{(0)}(\covcovar{\nu^*})$ then $\bias(x) = 0$. 
Also, the condition on $\| x -\covmean{\nu^*} \|_{\covcovar{\nu^*}} $ is introduced for a technical reason, and can be removed when $\Dgrth = \infty$.
    Note that Condition \ref{cond:growth} holds with $\Dgrth = \infty$ and $\alpha = 2$ for Examples \ref{example:Euclidean}, \ref{example:Wasserstein} and \ref{example:correlation}. 
    Thus, we have $d\big( \regfrechetlambda{\calD_n}(x), \regfrechetzero{\nu^*}(x) \big) = O_P\big( \bias(x) + n^{-\frac{1}{2}} \big)$ as $n \to \infty$. 

\subsubsection{Error-prone covariate setting}
Given a set $\calD_n = \left\{ (x_i, y_i): i \in [n] \right\}$, let $\bfX_{\calD_n} \coloneqq \begin{bmatrix} x_1 & \cdots & x_n \end{bmatrix}^{\top} \in \RR^{n \times p}$. 
We let $\bfX = \bfX_{\calD_n}$ and $\bfZ = \bfX_{\tcalD_n}$ for shorthand, and further, we let $\bfX_{\ctr} = \big( \bfI_n - \frac{1}{n} \bfones_n \bfones_n^{\top} \big) \bfX$ and $\bfZ_{\ctr} = \big( \bfI_n - \frac{1}{n} \bfones_n \bfones_n^{\top} \big) \bfZ$ denote the `row-centered' matrices.

\begin{theorem}[De-noising covariates]\label{thm:denoising_simple}
    Suppose that Assumptions \ref{cond:existence} and \ref{cond:growth} hold. 
    Then there exists a constant $C > 0$ such that for any $\lambda \in \RR_+$, if $x \in \covmean{\calD_n} + \rowsp \bfX_{\ctr}$ and
    \begin{equation}\label{eqn:thm3_x_condition.simple}
        \| x - \covmean{\calD_n} \|_{\covcovar{\calD_n}} \leq \frac{1}{2} \left( \frac{\Cgrth \cdot \Dgrth^{\alpha}}{2 \, \diam(\calM)} \cdot \frac{ \sigma\lth(\bfX_{\ctr}) \wedge \sigma\lth(\bfZ_{\ctr})  }{ \left\| \bfZ - \bfX \right\| } - 1 \right),
    \end{equation}
    then 
    \begin{equation}\label{eqn:thm3_conclusion.simple}
        d\left( \regfrechetlambda{\tcalD_n}(x), \regfrechetlambda{\calD_n}(x) \right) 
            \leq C \cdot \left( \frac{ \left\| \bfZ - \bfX \right\| }{ \sigma\lth(\bfX_{\ctr}) \wedge \sigma\lth(\bfZ_{\ctr})  } \cdot 
            \frac{ 2 \cdot \big\| x - \covmean{\calD_n} \big\|_{\covcovar{\calD_n}} + 1 }{\Cgrth}
             \right)^{\frac{1}{\alpha}}.
    \end{equation}
\end{theorem}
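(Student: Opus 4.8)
The plan is to reduce the problem to a perturbation bound on the weight functions and then invoke the growth condition \ref{cond:growth}. The key observation is that both $\regfrechetlambda{\calD_n}(x)$ and $\regfrechetlambda{\tcalD_n}(x)$ are minimizers of weighted risk functionals of the form $y \mapsto \frac{1}{n} \sum_i \weightlambda{\bullet}(X_i \text{ or } Z_i, x) \cdot d^2(Y_i, y)$, which differ only through the weights. So I would first establish a uniform (over $y \in \calM$) control on the difference of the two objectives in terms of $\max_i |\weightlambda{\tcalD_n}(Z_i, x) - \weightlambda{\calD_n}(X_i, x)|$, using $d^2(Y_i, y) \le \diam(\calM)^2$. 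Denote by $\risklambda{\calD_n}(y;x)$ and $\risklambda{\tcalD_n}(y;x)$ the two empirical risks; then $|\risklambda{\tcalD_n}(y;x) - \risklambda{\calD_n}(y;x)| \le \diam(\calM)^2 \cdot \eta$, where $\eta \coloneqq \max_{i \in [n]} |\weightlambda{\tcalD_n}(Z_i, x) - \weightlambda{\calD_n}(X_i, x)|$ is the quantity I must bound.

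Next I would bound $\eta$. Writing $\weightlambda{\calD_n}(x', x) = 1 + \langle x' - \covmean{\calD_n}, \, [\SVTlambda(\covcovar{\calD_n})]^\dagger (x - \covmean{\calD_n}) \rangle$, the difference of weights splits into (i) the change in the centered evaluation vector $x - \covmean{\calD_n}$ versus $x - \covmean{\tcalD_n}$, (ii) the change in the centered data vector $X_i - \covmean{\calD_n}$ versus $Z_i - \covmean{\tcalD_n}$, and (iii) the change in the pseudoinverse of the thresholded covariance, $[\SVTlambda(\covcovar{\calD_n})]^\dagger$ versus $[\SVTlambda(\covcovar{\tcalD_n})]^\dagger$. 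The cleanest route is to express everything at the level of the centered data matrices: $\covcovar{\calD_n} = \frac{1}{n}\bfX_{\ctr}^\top \bfX_{\ctr}$ and $\covcovar{\tcalD_n} = \frac{1}{n}\bfZ_{\ctr}^\top \bfZ_{\ctr}$, and note $\SVTlambda$ of a Gram matrix corresponds to projecting onto the dominant singular subspace of the factor, so $[\SVTlambda(\covcovar{\calD_n})]^\dagger$ acts as $n \cdot (\bfX_{\ctr}^\top \bfX_{\ctr})^\dagger$ restricted to the span of the right singular vectors of $\bfX_{\ctr}$ with singular value exceeding $\sqrt{n\lambda}$. Using $x \in \covmean{\calD_n} + \rowsp \bfX_{\ctr}$, the term $\weightlambda{\calD_n}(X_i,x)$ can be rewritten (up to the $+1$) as the $(i,\cdot)$ entry of a projection-type expression involving $\bfX_{\ctr}(\bfX_{\ctr}^\top\bfX_{\ctr})^\dagger$ acting on $x - \covmean{\calD_n}$, and the analogous expression holds for $\tcalD_n$. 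A standard perturbation argument — bounding $\|\bfZ_{\ctr} - \bfX_{\ctr}\| \le \|\bfZ - \bfX\|$ (the centering projection is a contraction) and controlling pseudoinverses of perturbed matrices with comparable smallest nonzero singular values, which is where $\sigma\lth(\bfX_{\ctr}) \wedge \sigma\lth(\bfZ_{\ctr})$ enters — then yields a bound of the shape $\eta \le C' \cdot \frac{\|\bfZ - \bfX\|}{\sigma\lth(\bfX_{\ctr}) \wedge \sigma\lth(\bfZ_{\ctr})} \cdot (2\|x - \covmean{\calD_n}\|_{\covcovar{\calD_n}} + 1)$, matching the quantity appearing in \eqref{eqn:thm3_conclusion.simple}.

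Finally, I would combine the pieces via the M-estimation argument. Both objectives satisfy \ref{cond:growth}; in particular $\risklambda{\tcalD_n}(y;x) - \risklambda{\tcalD_n}(\regfrechetlambda{\tcalD_n}(x);x) \ge \Cgrth \cdot d(y, \regfrechetlambda{\tcalD_n}(x))^\alpha$ on the ball of radius $\Dgrth$. Evaluating this at $y = \regfrechetlambda{\calD_n}(x)$ and using the two-sided uniform closeness $|\risklambda{\tcalD_n} - \risklambda{\calD_n}| \le \diam(\calM)^2 \eta$ together with the fact that $\regfrechetlambda{\calD_n}(x)$ minimizes $\risklambda{\calD_n}(\cdot;x)$, a routine chain of inequalities gives $\Cgrth \cdot d(\regfrechetlambda{\calD_n}(x), \regfrechetlambda{\tcalD_n}(x))^\alpha \le 2\diam(\calM)^2 \eta$, hence $d(\regfrechetlambda{\tcalD_n}(x), \regfrechetlambda{\calD_n}(x)) \le (2\diam(\calM)^2 \eta / \Cgrth)^{1/\alpha}$; substituting the bound on $\eta$ produces \eqref{eqn:thm3_conclusion.simple}, and the hypothesis \eqref{eqn:thm3_x_condition.simple} is exactly what guarantees $\diam(\calM)^2 \eta$ is small enough that the minimizer stays inside the radius-$\Dgrth$ regime where the first branch of \ref{cond:growth} applies (so the second branch, giving $\Cgrth \Dgrth^\alpha$, cannot be the binding one). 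The main obstacle I anticipate is step (iii): getting a clean, constant-free bound on $\|[\SVTlambda(\covcovar{\tcalD_n})]^\dagger - [\SVTlambda(\covcovar{\calD_n})]^\dagger\|$ acting on the relevant vectors, since thresholded pseudoinverses are not globally Lipschitz in the perturbation — this requires that the perturbation be small relative to the spectral gap $\sigma\lth$, which is precisely why $\sigma\lth(\bfX_{\ctr}) \wedge \sigma\lth(\bfZ_{\ctr})$ and the smallness condition \eqref{eqn:thm3_x_condition.simple} appear in the statement, and the bookkeeping there (especially ensuring the two thresholded subspaces are comparable) will be the most delicate part.
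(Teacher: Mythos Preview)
Your overall architecture is the same as the paper's: bound the risk discrepancy $\risklambda{\tcalD_n}(y;x)-\risklambda{\calD_n}(y;x)$ via a perturbation estimate on the weights, then feed this into the growth condition \ref{cond:growth} together with the optimality of the two minimizers. The M-estimation chain in your last paragraph matches the paper almost verbatim, and your reading of \eqref{eqn:thm3_x_condition.simple} as the condition placing the risk gap below $\Cgrth\Dgrth^\alpha$ is correct.

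The gap is in how you pass from the weight perturbation to the risk perturbation. You propose to control the risk difference by $\diam(\calM)^2\cdot\eta$ with $\eta=\max_{i\in[n]}|\weightlambda{\tcalD_n}(Z_i,x)-\weightlambda{\calD_n}(X_i,x)|$, and then claim $\eta\le C'\cdot\frac{\|\bfZ-\bfX\|}{\sigma\lth(\bfX_{\ctr})\wedge\sigma\lth(\bfZ_{\ctr})}\cdot(2\|x-\covmean{\calD_n}\|_{\covcovar{\calD_n}}+1)$ with $C'$ independent of $n$. That entrywise bound does not hold. Writing the weight vector as in the paper, one has $\bweightlambda{\calD_n}(x)-\bfones_n=\sqrt{n}\,\Pcol_{\bfX_{\ctr}\lth}\cx$ with $\|\cx\|_2=\|x-\covmean{\calD_n}\|_{\covcovar{\calD_n}}$, and the analogous identity for $\tcalD_n$; consequently the weight-difference vector is $\sqrt{n}$ times a vector whose $\ell_2$ norm is the quantity you want. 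An individual coordinate of $\sqrt{n}\,\Pi v$ can be as large as $\sqrt{n}\,\|v\|_2$ (take $\Pi v$ concentrated on one index), so your $\eta$ is in general $\sqrt{n}$ times your claimed bound, and the final distance estimate would acquire a spurious factor $n^{1/(2\alpha)}$.

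The fix is exactly what the paper does: do not go through the maximum. Write the risk difference as $\frac{1}{n}\langle \bweightlambda{\tcalD_n}(x)-\bweightlambda{\calD_n}(x),\,\bdist{\calD_n}(y)\rangle$ and apply Cauchy--Schwarz. The $\ell_2$ weight perturbation is $\|\bweightlambda{\tcalD_n}(x)-\bweightlambda{\calD_n}(x)\|_2\le\sqrt{n}\cdot\frac{\|\bfZ-\bfX\|}{\sigma\lth(\bfX_{\ctr})\wedge\sigma\lth(\bfZ_{\ctr})}\cdot(2\|x-\covmean{\calD_n}\|_{\covcovar{\calD_n}}+1)$ (this is the paper's Lemma \ref{lem:weight_stability}), and $\|\bdist{\calD_n}(y)\|_2\le\sqrt{n}\,\diam(\calM)^2$; the two $\sqrt{n}$'s cancel the $1/n$, yielding the $n$-free bound \eqref{eqn:thm3_conclusion.simple}. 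Your three-way split (i)--(iii) of the weight difference is fine as intuition, but the efficient route---which also sidesteps the pseudoinverse-stability issue you flag in (iii)---is to recognize that after projecting onto column spaces the weight vector becomes $\bfones_n+\sqrt{n}\,\Pcol_{\bfX_{\ctr}\lth}\cx$, so the perturbation reduces to comparing two column projections and two $\cx$'s, handled by Wedin/Stewart-type bounds (Lemmas \ref{lem:pseudoinverse}--\ref{lem:proj_perturb} in the paper) rather than by directly perturbing $[\SVTlambda(\covcovar{\bullet})]^\dagger$.
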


Note that the condition on $\| x -\covmean{\nu^*} \|_{\covcovar{\nu^*}} $ in \eqref{eqn:thm3_x_condition.simple} can be removed when $\Dgrth = \infty$. 
We highlight that the quantity $\frac{ \left\| \bfZ - \bfX \right\| }{ \sigma\lth(\bfX_{\ctr}) \wedge \sigma\lth(\bfZ_{\ctr}) }$ acts as the reciprocal of the signal-to-noise ratio. 
Here, $\left\| \bfZ - \bfX \right\|$ quantifies the magnitude of the ``noise'' in the covariates, while $\min\left\{ \sigma\lth(\bfX_{\ctr}),~ \sigma\lth(\bfZ_{\ctr}) \right\}$ measures the strength of the ``signal'' retained in the $\lambda$-SVT of the design matrix. 
We observe that the error bound \eqref{eqn:thm3_conclusion.simple} increases proportionally to the normalized deviation of $x$ from the mean, $\covmean{\calD_n}$, which is a reasonable outcome. 
For the complete version of Theorem \ref{thm:denoising_simple} and its proof, refer to Appendix \ref{sec:proof_denoising}.

\paragraph{Remarks on Theorem \ref{thm:denoising_simple}.}
We avoid imposing distributional assumptions on the noise $\varepsilon=Z-X$, to ensure broad applicability of the result. 
Also, the inequality \eqref{eqn:thm3_conclusion.simple} is sharp, as there is a worst-case noise instance that attains equality (up to a multiplicative constant).
Despite its generality, this upper bound highlights effective error mitigation in specific scenarios. 
For instance, consider well-balanced, effectively low-rank covariates $\bfX\in\RR^{n\times p}$ such that $|X_{ij}|=\Omega(1)$ for all $i,j$ and $\sigma_1(\bfX)\asymp\sigma_r(\bfX)\gg\sigma_{r+1}(\bfX)\asymp\sigma_{n\wedge p}(\bfX) = O(1)$, where $r\ll n\wedge p$ is the effective rank of $\bfX$. 
Then $\sigma_1(\bfX)^2\asymp\sigma_r(\bfX)^2\asymp\|\bfX\|_F^2/r\gtrsim np/r$. 
Additionally, if $\bfZ=\bfX+\bfE$ where $\bfE$ is a random matrix with independent sub-Gaussian rows, then $\|\bfZ-\bfX\|\lesssim\sqrt{n}+\sqrt{p}$ with high probability.
In the random design scenario where the rows of $\bfX$ and the test point $x$ are drawn IID from the same distribution, $\|x-\mu_{D_n}\|_{\Sigma}\approx 1$ with high probability. 
Consequently, the upper bound in \eqref{eqn:thm3_conclusion.simple} is bounded by $\sqrt{r/p}+\sqrt{r/n}$, which diminishes to 0 when $r\ll n\wedge p$.

\subsection{Proof sketches}
We outline our proofs for the main theorems, whose details are presented in Appendices \ref{sec:proof_consistency}, \ref{sec:proof_rate} and \ref{sec:proof_denoising}.

    \paragraph{Proof of Theorem \ref{thm:consistency}.}
    We show that $\risklambda{\calD_n}(y;x)$ weakly converges to $\riskzero{\nu^*}(y;x)$ in the $\ell^{\infty}(\calM)$-sense. 
    According to \cite[Theorem 1.5.4]{vaart1996weak}, it suffices to show that
    (1) $\risklambda{\calD_n}(y;x) - \riskzero{\nu^*}(y;x) = o_p(1)$ for all $y \in \calM$, and 
    (2) $\risklambda{\calD_n}$ is asymptotically equicontinuous in probability. 

    \paragraph{Proof of Theorem \ref{thm:rate}.}
    We prove upper bounds for the bias and the variance separately. 
    
    \textit{To control the bias} (Appendix \ref{sec:proof_rate}, Lemma \ref{lem:bias_population}), we show an upper bound for $R\big( \varphi\lth(x); x \big) - R\big( \varphi(x); x \big)$, and convert it to restrain the distance between the minimizers $d\big(\varphi\lth(x), \varphi(x) \big)$ using the \texttt{Growth} condition \ref{cond:growth}. In this conversion, we employ the ``peeling technique'' in empirical process theory.
    
    \textit{To control the variance} (Appendix \ref{sec:proof_rate}, Lemma \ref{lem:variance}), we follow a similar strategy as in Lemma \ref{lem:bias_population}, but with additional technical considerations. 
    Defining the `fluctuation variable' $Z_n\lth(y;x) \coloneqq \risklambda{\calD_n}(y;x) - \risklambda{\nu^*}(y;x)$ parameterized by $y \in \calM$, we derive a probabilistic upper bound for $ \risklambda{\nu^*} (\regfrechetlambda{\calD_n}(x); x ) - \risklambda{\nu^*} (\regfrechetlambda{\nu^*}(x); x )$ by establishing a uniform upper bound for $Z_n\lth( y; x ) - Z_n\lth( \varphi(x); x )$; here, the \texttt{Entropy} condition \ref{cond:entropy} is used.  Again, we use the \texttt{Growth} condition \ref{cond:growth} and the peeling technique to obtain a probabilistic upper bound for the distance $d ( \regfrechetlambda{\calD_n}(x), \regfrechetlambda{\nu^*}(x) )$.

    \paragraph{Proof of Theorem \ref{thm:denoising_simple}.} 
    Expressing the difference in the weights $\weightlambda{\tcalD_n}(y;x) - \weightlambda{\calD_n}(y;x)$ in terms of $\bfX$ and $\bfZ$, we utilize classical matrix perturbation theory to control $\risklambda{\tcalD_n}(y;x) - \risklambda{\calD_n}(y;x)$, and transform it to an upper bound on the distance $ d( \regfrechetlambda{\tcalD_n}(x), \regfrechetlambda{\calD_n}(x) )$ using the \texttt{Growth} condition \ref{cond:growth}.

\section{Experiments}\label{sec:experiments}
In this section, we present numerical simulation results to validate and support our theoretical findings. 
We focus on global Fr\'echet regression analysis for one-dimensional distribution functions (Example \ref{example:Wasserstein}). 
These simulations cover various conditions, allowing us to evaluate and compare our methodology's performance with alternative approaches. 
For a summary of the experimental results, please refer to Figure \ref{fig:sim-wasserstein} and Table \ref{tab:sim-wasserstein}. 
Further details about simulation settings and additional results are provided in Appendix \ref{sec:addtitional_experiments}.

\paragraph{Experimental setup.} 
We consider combinations of $p \in \{ 150, 300, 600 \}$ and $n \in \{ 100, 200, 400 \}$. 
The datasets $\calD_n = \{ (X_i, Y_i): i \in [n] \}$ and $\tcalD_n = \{ (Z_i, Y_i): i \in [n] \}$ are generated as follows. 
Let $X_i \sim \calN_p\big(\mathbf{0}_p, \Sigma\big)$ be IID multivariate Gaussian with mean $\mathbf{0}_p$ and covariance $\Sigma$ such that $\spec(\Sigma) = \{ \kappa_j > 0:j \in [p] \}$ is an exponentially decreasing sequence such that $\trace(\Sigma) = \sum_{j=1}^p \kappa_j = p$ and $\kappa_1/\kappa_p = 10^3$. 
Note that $\sum_{j=1}^{\lfloor p/3 \rfloor} \kappa_j  \big/ \sum_{j'=1}^p \kappa_{j'} \approx 0.9$, and thus, $\Sigma$ is effectively low-rank.
We generate $Z_i$ following \eqref{eiv-covariate} under two scenarios $\varepsilon_{ij} \stackrel{IID}{\sim} \calN\big(0, \sigma_\varepsilon^2 \big)$ and $\varepsilon_{ij} \stackrel{IID}{\sim} \mathrm{Laplace}\big(0, \sigma_\varepsilon \big)$, respectively. 
Lastly, given $X = x$, let $Y$ be the distribution function of $\calN \big(\mu_{\alpha, \beta}(x) + \eta, \tau^2 \big)$, where
(i) $\mu_{\alpha, \beta}(x) = \alpha + \beta^\top x$ with $\alpha = 1$ and $\beta = p^{-1/2} \cdot \bfones_p$; 
(ii) $\eta \sim \calN \big( 0, \sigma_\eta^2\big)$; and
(iii) $\tau^2\sim \mathcal{I}\mathcal{G}(s_1, s_2)$, an inverse gamma distribution with shape $s_1$ and scale $s_2$. 
We performed $B = 500$ Monte Carlo experiments by drawing $\calD_n^{(b)}$ and $\tcalD_n^{(b)}$ as independent copies of $\calD_n$ and $\tcalD_n$, respectively, for $b \in [B]$.

\paragraph{Performance evaluation.}
We assess the in-sample and out-of-sample performance of the Fr\'echet regression function estimator by using the mean squared error (MSE) and the mean squared prediction error (MSPE). 
To this end, we create a ``test set'' $\calD_N^{\textrm{new}} = \{ (X_i^{\textrm{new}}, Y_i^{\textrm{new}}): i \in [N] \}$, with $N=1000$. 
The MSE and the MSPE are computed as the average of squared metric-distance residuals from the observed responses in the ``training set'' $\calD_n$ and in the ``test set'' $\calD_N^{\textrm{new}}$), respectively:
\[
    \mathrm{MSE}(\regfrechetlambda{\nu})
        = \frac{1}{n} \sum_{i=1}^n d_W\big( Y_i, \regfrechetlambda{\nu}(X_i) \big)^2
    \qquad \text{and}\qquad
    \mathrm{MSPE}(\regfrechetlambda{\nu}) = \frac{1}{N} \sum_{i=1}^N d_W\big(Y_i^{\textrm{new}}, \regfrechetlambda{\nu}(X_i^{\textrm{new}})\big)^2.
\]
We report the MSE averaged over $B=500$ random trials:  ${\mathrm{MSE}}(\regfrechetlambda{\nu}) = B^{-1} \sum_{b=1}^B \mathrm{MSE}(\regfrechetlambda{\nu^{(b)}})$, and likewise for MSPE. 
Furthermore, we evaluate the accuracy and efficiency of the estimator using bias and variance, with detailed definitions deferred to Appendix \ref{sec:addtitional_experiments}.

\begin{figure}[ht]
  \centering
  \includegraphics[width=0.45\textwidth]{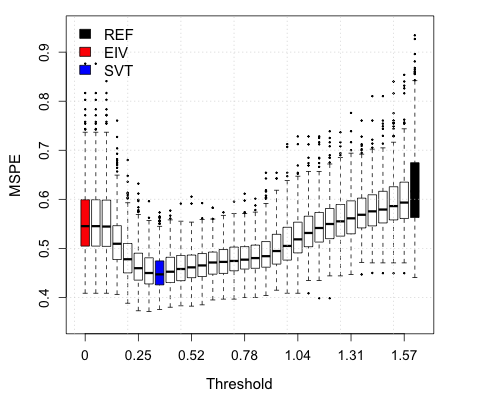}
  \hfill
  \includegraphics[width=0.45\textwidth]{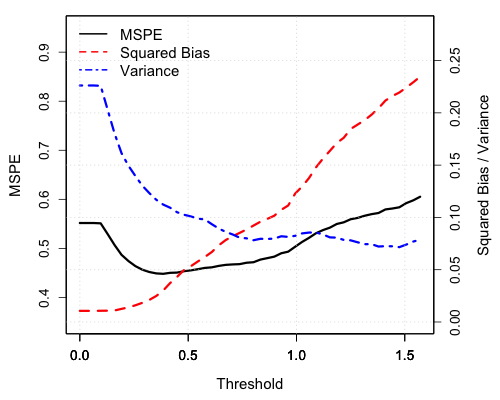}
  \caption{Comparison of the prediction performance of $\regfrechetzero{\calD_n}$ (REF), $\regfrechetzero{\tcalD_n}$ (EIV), and $\regfrechetlambda{\tcalD_n}$ (SVT) (left), and the trade-off between the bias and the variance (right) for $B = 500$, $p = 50$ and $n = 100$.} \label{fig:sim-wasserstein}
\end{figure}

 \begin{table}[t!]
    \caption{Average performance of $\regfrechetzero{\calD_n}$ (REF), $\regfrechetzero{\tcalD_n}$ (EIV), and $\regfrechetlambda{\tcalD_n}$ (SVT) in various settings (\textbf{boldface} indicates the best). 
    The choice of threshold $\lambda$ for SVT is detailed in Appendix \ref{sec:addtitional_experiments}. }
    \label{tab:sim-wasserstein}
    \vspace{0.1in}
    \resizebox{1.0\textwidth}{!}{
    \begin{tabular}{| c | cc | rrr | rrr | rrr|}
        \hline
        Error    &   \multicolumn{2}{c|}{Sample size}   &   \multicolumn{3}{c|}{$n = 100$}   &   \multicolumn{3}{c|}{$n = 200$}   &   \multicolumn{3}{c|}{$n = 400$}\\
        \cline{2-3} \cline{4-6} \cline{7-9} \cline{10-12}
        distribution    &   $p$    &    Criterion  &   REF  &   EIV  &   SVT   &   REF  &   EIV  &   SVT   &   REF  &   EIV  &   SVT\\
        \hline
        \multirow{11}{*}{Gaussian}   
        &   \multirow{4}{*}{$150$}    &   $\mathrm{Bias}$   &    
                {\bf 0.107}    &  0.164   &   0.312   &   
                {\bf 0.040}    &  0.087   &   0.189   &    
                {\bf 0.020}    &  0.078   &   0.136\\
        &   &   $\sqrt{\mathrm{Var}}$   &    
                1.147    &  0.973   &   {\bf 0.531}   &
                0.969    &  0.857   &   {\bf 0.414}   &    
                0.432    &  0.381   &   {\bf 0.295}\\
        &   &   $\mathrm{MSE}$   &   
                {\bf 0.000}   &   0.154   &   0.205   &
                {\bf 0.075}    &  0.334   &   0.232   &    
                {\bf 0.190}    &  0.255   &   0.267\\
        &   &   $\mathrm{MSPE}$   &    
                1.613    &  1.267   &   {\bf 0.686}   &   
                1.245    &  1.042   &   {\bf 0.513}   &    
                0.492    &  0.456   &   {\bf 0.412}\\
        \cline{2-12}
        \multirow{5}{*}{$N(0, 0.05^2)$}
        &   \multirow{4}{*}{$300$}    &   $\mathrm{Bias}$   &    
                {\bf 0.317}    &  0.352   &   0.485   &   
                {\bf 0.112}    &  0.170   &   0.327   &       
                0.039    &  0.091   &   0.207\\
        &   &   $\sqrt{\mathrm{Var}}$   &    
                0.805    &  0.731   &   {\bf 0.523}   &   
                1.137    &  0.970   &   {\bf 0.512}   &    
                0.964    &  0.845   &   {\bf 0.405}\\
        &   &   $\mathrm{MSE}$   &   
                {\bf 0.000}    &  0.033   &   0.217   &   
                {\bf 0.000}    &  0.152   &   0.235   &    
                {\bf 0.075}    &  0.327   &   0.237\\
        &   &   $\mathrm{MSPE}$   &    
                1.045    &  0.956   &   {\bf 0.808}   &   
                1.602    &  1.267   &   {\bf 0.667}   &     
                1.232    &  1.025   &   {\bf 0.509}\\
        \cline{2-12}
        &   \multirow{4}{*}{$600$}    &   $\mathrm{Bias}$   &    
                {\bf 0.568}    &  0.592   &   0.619   &   
                {\bf 0.311}    &  0.353   &   0.451   &       
                {\bf 0.104}    &  0.155   &   0.298   \\
        &   &   $\sqrt{\mathrm{Var}}$   &    
                0.663    &  0.631   &   {\bf 0.598}   &   
                0.799    &  0.735   &   {\bf 0.589}   &       
                1.135    &  0.956   &   {\bf 0.510}\\
        &   &   $\mathrm{MSE}$   &   
                {\bf 0.000}    &  0.016   &   0.067   &   
                {\bf 0.000}    &  0.036   &   0.112   &       
                {\bf 0.000}    &  0.153   &   0.208\\
        &   &   $\mathrm{MSPE}$   &    
                1.075    &  1.062   &   {\bf 1.054}   &   
                1.039    &  0.968   &   {\bf 0.858}   &       
                1.602    &  1.242   &   {\bf 0.657}\\
        \hline
        \multirow{11}{*}{Laplacian}   
        &   \multirow{4}{*}{$150$}    &   $\mathrm{Bias}$   &    
                {\bf 0.102}    &  0.112   &   0.286   &   
                {\bf 0.045}    &  0.053   &   0.171   &       
                {\bf 0.019}    &  0.025   &   0.121\\
        &   &   $\sqrt{\mathrm{Var}}$   &    
                1.155    &  1.113   &   {\bf 0.549}   &   
                0.971    &  0.949   &   {\bf 0.421}   &       
                0.431    &  0.416   &   {\bf 0.297}\\
        &   &   $\mathrm{MSE}$   &   
                {\bf 0.000}    &  0.026   &   0.187   &   
                {\bf 0.075}    &  0.151   &   0.218   &       
                {\bf 0.189}    &  0.205   &   0.253\\
        &   &   $\mathrm{MSPE}$   &    
                1.633    &  1.538   &   {\bf 0.688}   &   
                1.254    &  1.207   &   {\bf 0.513}   &       
                0.489    &  0.477   &   {\bf 0.406}\\
        \cline{2-12}
        \multirow{5}{*}{$\mathrm{DE}(0, 0.05)$}
        &   \multirow{4}{*}{$300$}    &   $\mathrm{Bias}$   &   
                {\bf 0.321}    &  0.325   &   0.491   &   
                {\bf 0.105}    &  0.117   &   0.344   &       
                {\bf 0.043}    &  0.048   &   0.198\\
        &   &   $\sqrt{\mathrm{Var}}$   &    
                0.805    &  0.794   &   {\bf 0.525}   &   
                1.140    &  1.099   &   {\bf 0.511}   &       
                0.960    &  0.929   &   {\bf 0.413}\\
        &   &   $\mathrm{MSE}$   &   
                {\bf 0.000}    &  0.003   &   0.233   &   
                {\bf 0.000}    &  0.025   &   0.227   &       
                {\bf 0.076}    &  0.148   &   0.229\\
        &   &   $\mathrm{MSPE}$   &    
                1.049    &  1.034   &   {\bf 0.814}   &   
                1.610    &  1.521   &   {\bf 0.677}   &       
                1.226    &  1.169   &   {\bf 0.514}\\
        \cline{2-12}
        &   \multirow{4}{*}{$600$}    &   $\mathrm{Bias}$   &   
                {\bf 0.566}    &  0.568   &   0.608   &   
                {\bf 0.312}    &  0.317   &   0.443   &       
                {\bf 0.102}    &  0.109   &   0.282\\
        &   &   $\sqrt{\mathrm{Var}}$   &    
                0.664    &  0.661   &   {\bf 0.614}   &   
                0.800    &  0.792   &   {\bf 0.606}   &       
                1.134    &  1.094   &   {\bf 0.521}\\
        &   &   $\mathrm{MSE}$   &   
                {\bf 0.000}    &  0.001   &   0.074   &   
                {\bf 0.000}    &  0.004   &   0.157   &       
                {\bf 0.000}    &  0.025   &   0.200\\
        &   &   $\mathrm{MSPE}$   &    
                1.073    &  1.071   &   {\bf 1.059}   &   
                1.045    &  1.035   &   {\bf 0.872}   &       
                1.602    &  1.515   &   {\bf 0.662}\\
        \hline
    \end{tabular}
    }
\end{table}

\paragraph{Simulation results.} 
Our numerical study demonstrates that the proposed SVT method consistently improves both estimation and prediction performance, especially in the errors-in-variables setting. 
Figure \ref{fig:sim-wasserstein} highlights how the SVT estimator outperforms the na\"ive errors-in-variables (EIV) estimator, which corresponds to SVT with $\lambda=0$. 
The na\"ive EIV suffers from an intrinsic model bias, called the attenuation effect \cite{carroll2006measurement}, as it regresses responses on error-prone covariates. 
This leads to a misrepresentation of the association between responses and true covariates, potentially leading to statistical inference based on a mis-specified model.

Remarkably, the SVT estimator achieved a smaller MSPE even compared to the oracle estimator (REF) obtained from the error-free sample. 
Although the REF estimator had the smallest MSE due to its small bias, we observed its overfitting to the training sample, resulting in poor prediction performance. 
Notably, even the na\"ive EIV estimator outperformed the REF estimator in MSPE. 
We believe this is mainly because the true covariate matrix was nearly singular in our simulation setup, causing multicollinearity issues for the REF. 
In contrast, measurement errors introduced non-ignorable minimum singular values in the EIV covariate matrix, unintentionally mitigating multicollinearity for the na\"ive EIV and causing it to behave like ridge regression.

\section{Discussion}\label{sec:discussion}

This paper has addressed errors-in-variables regression of non-Euclidean response variables through the (global) Fréchet regression framework enhanced by low-rank approximation of covariates. 
Specifically, we introduce a novel \textit{regularized (global) Fr\'echet regression} framework (Section \ref{sec:frechet_pcr}), which combines the Fr\'echet regression with principal component regression. 
We also provide a comprehensive theoretical analysis in three main theorems (Section \ref{sec:theory}), and validate our theory through numerical experiments on simulated datasets.
Moreover, our numerical experiments demonstrate empirical evidence of the effectiveness and superiority of our approach, reinforcing its practical relevance and potential impact in non-Euclidean regression analysis.

We conclude this paper by proposing several promising directions for future research. 
First, it would be worthwhile to explore the large sample theory for selecting the optimal threshold parameter $\lambda$ in the proposed SVT method, in order to characterize the theoretical phase transition of the bias-variance trade-off in the regularized (global) Fr\'echet regression. 
Second, we believe that our framework could be extended to errors-in-variables Fréchet regression for response variables in a broader class of metric spaces, e.g., by leveraging the quadruple inequality proposed by Sch\"otz \cite{schotz2019convergence, schotz2022nonparametric}.
Lastly, investigating the asymptotic distribution of the proposed SVT estimator would be highly appealing in the statistical literature, as it would enable us to make statistical inferences on the conditional Fr\'echet mean in non-Euclidean spaces \cite{barden2013central, bhattacharya2017omnibus} with errors-in-variables covariates. 







\bibliographystyle{alpha}
\bibliography{bibliography_frechet_pcr}

\newpage
\appendix
\section{Verification of the model assumptions}\label{sec:verification_examples}

\subsection{Additional background}
\begin{definition}[$\varepsilon$-net and covering number]\label{def:covering_number}
    Let $(\calM, d)$ be a metric space. 
    Let $S \subseteq T$ be a subset and let $\varepsilon > 0$. 
    A subset $\calN \subseteq S$ is called an \emph{$\varepsilon$-net} of $S$ if every point in $S$ is within distance $\varepsilon$ of some point $\calN$, i.e.,
    \[
        \forall x \in S, ~~\exists x_0 \in \calN \text{ such that } d(x,x_0) \leq \varepsilon.
    \]
    The \emph{$\varepsilon$-covering number} of $S$, denoted by $\mathfrak{N}(S, \varepsilon)$, is the smallest possible cardinality of an $\varepsilon$-net of $S$, i.e.,
    \begin{equation}\label{eqn:covering_number}
         \mathfrak{N} ( S, \varepsilon ) 
            \coloneqq \min \left\{ k \in \NN: \exists y_1, \ldots, y_k  \in \calM \text{ such that } S \subseteq \bigcup_{i=1}^k B_d(y_i, \varepsilon) \right\},
    \end{equation}
    where $B_{d}(y, \varepsilon) = \{ y' \in \calM: d(y, y') \leq \varepsilon \}$ denotes the closed $\varepsilon$-ball centered at $y \in \calM$. 
\end{definition}

Let $B_2^r(0,1)\coloneqq \left\{ x \in \RR^r: \| x \|_2 \leq 1 \right\}$ denote the unit $\ell^2$-norm ball in $\RR^r$. It is well known\footnote{See \cite[Corollary 4.2.13]{vershynin2018high} for example.} that for any $\varepsilon > 0$,
\begin{equation}\label{eqn:ball_covering}
    \left( \frac{1}{\varepsilon}\right)^r  \leq \mathfrak{N}\big( B_2^r(0,1), \varepsilon \big)  \leq  \left( \frac{2}{\varepsilon} + 1 \right)^r.
\end{equation}

\subsection{Example \ref{example:Euclidean}: Euclidean space}

\begin{proposition}\label{prop:Hilbert}
    The space $(\calH, d_{\HS})$ defined in Example \ref{example:Euclidean} satisfies Assumptions \ref{cond:existence}, \ref{cond:growth}, and \ref{cond:entropy}.
\end{proposition}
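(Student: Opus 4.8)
The plan is to exploit that on a finite-dimensional Hilbert space the weighted Fr\'echet risk in \eqref{eqn:frechet_estimator} is simply a strictly convex quadratic in $y$, which makes \ref{cond:existence} and \ref{cond:growth} immediate with explicit universal constants, and then to verify \ref{cond:entropy} by a scaling argument that reduces everything to the standard covering-number bound \eqref{eqn:ball_covering} for the Euclidean unit ball. Throughout I would fix an orthonormal basis of $\calH$ to identify $(\calH, d_{\HS})$ isometrically with $(\RR^r, d_2)$, where $r = \dim \calH$, so that $d^2(y_1, y_2) = \| y_1 - y_2 \|^2$ with $\|\cdot\|$ the Euclidean norm.

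For \ref{cond:existence} and \ref{cond:growth}, I would expand $d^2(Y, y) = \|Y\|^2 - 2\langle Y, y\rangle + \|y\|^2$ inside $\risklambda{\nu}(y;x)$ and use that $\bbE_{(X,Y)\sim\nu}\big[\weightlambda{\nu}(X,x)\big] = 1$, since the linear term in $\weightlambda{\nu}$ has mean zero ($\bbE_\nu[X] = \covmean{\nu}$). Assuming the defining integrals converge (which holds, e.g., when $\diam(\calM) < \infty$), this yields
\[
    \risklambda{\nu}(y;x) = \bbE_\nu\big[\weightlambda{\nu}(X,x)\|Y\|^2\big] - 2\big\langle \bbE_\nu[\weightlambda{\nu}(X,x)Y],\, y\big\rangle + \|y\|^2 = \mathrm{const} + \big\| y - \bbE_\nu[\weightlambda{\nu}(X,x)Y] \big\|^2,
\]
where $\mathrm{const}$ does not depend on $y$. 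Hence $\regfrechetlambda{\nu}(x) = \bbE_\nu[\weightlambda{\nu}(X,x)Y]$ is the unique minimizer, giving \ref{cond:existence}, and $\risklambda{\nu}(y;x) - \risklambda{\nu}\big(\regfrechetlambda{\nu}(x);x\big) = d^2\big(y, \regfrechetlambda{\nu}(x)\big)$ holds identically; thus \ref{cond:growth} holds with $\Cgrth = 1$, $\alpha = 2$, and any $\Dgrth \in (0, \infty]$ (only the first branch is ever needed). Taking $\nu = \nu^*$ and $\lambda = 0$ gives $R(y;x) = R\big(\frechetlin(x);x\big) + d^2\big(y, \frechetlin(x)\big)$, whence $\inf_{d(y,\frechetlin(x)) > \varepsilon} R(y;x) = R\big(\frechetlin(x);x\big) + \varepsilon^2 > R\big(\frechetlin(x);x\big)$, the remaining part of \ref{cond:existence}.

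For \ref{cond:entropy}, I would note that $B_d(y, \delta) = y + \delta\, B_2^r(0,1)$ and that $z \mapsto (z-y)/\delta$ maps $\delta\varepsilon$-nets of $B_d(y,\delta)$ bijectively onto $\varepsilon$-nets of $B_2^r(0,1)$, so $\mathfrak{N}\big(B_d(y,\delta),\, \delta\varepsilon\big) = \mathfrak{N}\big(B_2^r(0,1),\, \varepsilon\big) \leq (2/\varepsilon + 1)^r$ by \eqref{eqn:ball_covering}. Consequently
\[
    \int_0^1 \sqrt{1 + \log\mathfrak{N}\big(B_d(y,\delta),\, \delta\varepsilon\big)}\, \deps \;\leq\; \int_0^1 \sqrt{1 + r\log\big(2/\varepsilon + 1\big)}\, \deps,
\]
and the right-hand side is a finite constant --- the integrand is $O\big(\sqrt{\log(1/\varepsilon)}\big)$ as $\varepsilon \to 0^+$, hence integrable on $(0,1]$ --- which does not depend on $\delta$ and may be chosen uniform over $y$; taking $\Cent$ to be this value makes the $\limsup_{\delta\to0}$ in \eqref{eqn:entropic_integral} trivially at most $\Cent$.

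I do not expect a genuine obstacle: the only points requiring care are the integrability of the entropic integral at $\varepsilon = 0$ (controlled by the $\sqrt{\log}$ growth) and the fact that finite-dimensionality of $\calH$ is exactly what \ref{cond:entropy} relies on, since in infinite dimensions the balls $B_d(y,\delta)$ are not totally bounded and the covering numbers in \eqref{eqn:entropic_integral} become infinite.
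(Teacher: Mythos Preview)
Your proposal is correct and follows essentially the same approach as the paper: both exploit the explicit quadratic form of $\risklambda{\nu}(\cdot\,;x)$ in $\calH$ to obtain the unique minimizer $\bbE_\nu[\weightlambda{\nu}(X,x)Y]$ with exact quadratic growth ($\Cgrth=1$, $\alpha=2$, $\Dgrth=\infty$), and both verify \ref{cond:entropy} by rescaling balls to the unit ball and invoking \eqref{eqn:ball_covering}. The only cosmetic difference is that the paper computes the entropy integral explicitly via the change of variable $t=-\log\varepsilon$ to get $\Cent = \sqrt{1+\log C} + \tfrac{\sqrt{r\pi}}{2}$, whereas you simply note integrability; either suffices.
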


\begin{proof}[Proof of Proposition \ref{prop:Hilbert}]
    For any probability distribution $\nu$ and any $\lambda \in \RR_+$, let 
    $y\lth_{\nu} \coloneqq \bbE_{\nu} \big[ \weightlambda{\nu} (X, x) \cdot Y \big]$. 
    Then we observe that
    \begin{align*}
        \risklambda{\nu}(y;x) &= \bbE_{\nu} \big[ \weightlambda{\nu} (X, x) \cdot d^2(Y, y) \big]\\
            &= \bbE_{\nu} \left[ \weightlambda{\nu} (X, x) \cdot \| Y - y \|^2 \right]\\
            &= \bbE_{\nu} \left[ \weightlambda{\nu} (X, x) \cdot \| Y - y\lth_{\nu} \|^2 \right]
                + \| y -  y\lth_{\nu} \|_{\HS}^2
                + 2 \left\langle \underbrace{\bbE_{\nu} \left[ \weightlambda{\nu} (X, x) \cdot \big( Y - y\lth_{\nu} \big) \right]}_{=0}, ~ y\lth_{\nu} - y \right\rangle\\
            &= \risklambda{\nu}( y\lth_{\nu};x) + \| y - y\lth_{\nu} \|_{\HS}^2.
    \end{align*}
    As $\risklambda{\nu}(y;x)$ is a strictly convex and coercive function, there exists a unique minimizer, $\regfrechetlambda{\nu}$. 
    Thus, Condition \ref{cond:existence} is proved.  
    Furthermore, Condition \ref{cond:growth} is also satisfied with $\Dgrth = \infty$, $\Cgrth = 1$, and $\alpha = 2$.

    Lastly, for any $y \in \calH$ and any $\varepsilon > 0$,
    \[
        \mathfrak{N} \big( B_{d_{\HS}}(y, \delta), \delta \varepsilon \big) = \mathfrak{N} \big( B_{d_{\HS}}(y, 1), \varepsilon \big) 
            \leq \left( \frac{2}{\varepsilon} + 1 \right)^r 
            \leq C \cdot \varepsilon^{-r}
    \]
    where $r = \dim \calH$ and $C > 1$ is a constant that depends on $r$ only; see the covering number upper bound in \eqref{eqn:ball_covering}. 
    Thus, the integral \eqref{eqn:entropic_integral} is bounded as follows:
    \begin{align*}
        \int_0^1 \sqrt{1 + \log \mathfrak{N} \big( B_d \big(\varphi(x), \delta \big), \delta \varepsilon \big)} \, \mathrm{d}\varepsilon
            &\leq  \int_0^1 \sqrt{1 + \log C - r \log \varepsilon } \, \deps \\
            &\leq \sqrt{ 1 + \log C } + \sqrt{r} \int_0^1 \sqrt{ - \log \varepsilon} \, \deps       \\
            &= \sqrt{ 1 + \log C } + \sqrt{r} \int_0^{\infty} e^{-t} \sqrt{t} \, \dt\\
            &= \sqrt{ 1 + \log C} + \frac{\sqrt{r \pi}}{2}
    \end{align*}
    using the change of variable $t = - \log \varepsilon$. 
    Therefore, Assumption \ref{cond:entropy} holds with $\Cent = \sqrt{ 1 + \log C} + \frac{\sqrt{r \pi}}{2}$.

\end{proof}

\subsection{Example \ref{example:Wasserstein}: Set of probability distributions}

\begin{proposition}\label{prop:Wasserstein}
    The space $(\calW, d_W)$ defined in Example \ref{example:Wasserstein} satisfies Assumptions \ref{cond:existence}, \ref{cond:growth}, and \ref{cond:entropy}.
\end{proposition}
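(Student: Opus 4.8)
The plan is to transport the whole problem into the Hilbert space $L^2([0,1])$ via the quantile map. As in Example~\ref{example:Wasserstein}, write $Q\colon \calW \to L^2([0,1])$, $Q(G) = G^{-1}$, for the (left-continuous) quantile function; then $d_W(G_1,G_2) = \| Q(G_1) - Q(G_2) \|_{L^2}$, so $Q$ is an isometric embedding, and its image $\calQ := Q(\calW) = \{ q \in L^2([0,1]): q \text{ nondecreasing}\}$ is a closed convex cone in $L^2([0,1])$ (closedness because $L^2$-convergence forces a.e.\ convergence along a subsequence, which preserves monotonicity). With this in hand, the verification of \ref{cond:existence} and \ref{cond:growth} is a near-verbatim transcription of the argument for Proposition~\ref{prop:Hilbert}, while \ref{cond:entropy} becomes a question about the metric entropy of monotone functions.

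For \ref{cond:existence} and \ref{cond:growth}, I would first note that $\bbE_{\nu}[\weightlambda{\nu}(X,x)] = 1$ since $\bbE_{\nu}[X - \covmean{\nu}] = 0$, and then expand the weighted risk exactly as in the proof of Proposition~\ref{prop:Hilbert}:
\[
    \risklambda{\nu}(y;x) = \bbE_{\nu}\!\left[ \weightlambda{\nu}(X,x)\,\big\| Q(Y) - Q(y) \big\|_{L^2}^2 \right] = \big\| Q(y) - \bar{q}_\nu \big\|_{L^2}^2 + c_\nu,
\]
where $\bar{q}_\nu := \bbE_{\nu}[\weightlambda{\nu}(X,x)\,Q(Y)] \in L^2$ and $c_\nu$ does not depend on $y$. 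Minimizing over $y \in \calM$ is thus the same as projecting $\bar{q}_\nu$ onto the closed convex set $\calQ$, so $\regfrechetlambda{\nu}(x) = Q^{-1}\big( \Pi_{\calQ} \bar{q}_\nu\big)$ exists and is unique by the Hilbert projection theorem, giving \ref{cond:existence}. Moreover, the obtuse-angle inequality for projections onto convex sets yields, for every $y \in \calM$,
\[
    \risklambda{\nu}(y;x) - \risklambda{\nu}\big( \regfrechetlambda{\nu}(x);x \big) = \big\| Q(y) - \bar{q}_\nu \big\|^2 - \big\| \Pi_{\calQ}\bar{q}_\nu - \bar{q}_\nu \big\|^2 \;\geq\; \big\| Q(y) - \Pi_{\calQ}\bar{q}_\nu \big\|^2 = d_W\big( y, \regfrechetlambda{\nu}(x) \big)^2,
\]
which is \ref{cond:growth} with $\Dgrth = \infty$, $\Cgrth = 1$, $\alpha = 2$; specializing to $\nu = \nu^*$, $\lambda = 0$ also gives the quantitative bound stated in \ref{cond:existence}.

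For \ref{cond:entropy}, the isometry identifies $\mathfrak N\big( B_{d_W}(y,\delta), \delta\varepsilon \big)$ with the $\delta\varepsilon$-covering number, in $L^2$, of $B_{L^2}(Q(y),\delta) \cap \calQ$. The natural tool is the classical metric-entropy bound for monotone functions: nondecreasing functions on $[0,1]$ with range contained in an interval of length $L$ have $L^2$ $\eta$-covering number with $\log \mathfrak N \lesssim L/\eta$. I would try to reduce to this bounded-range case by truncating each $q$ in the ball to the band $Q(y) \pm M$ (the truncation is still monotone and closer to $q$ in $L^2$) with $M = M(\varepsilon)$ chosen so that the discarded tail costs $\lesssim \delta\varepsilon$, which should leave an integrand of order $\varepsilon^{-1/2}$ near $0$ in \eqref{eqn:entropic_integral}, hence integrable. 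This step is where essentially all the difficulty sits, and it is the main obstacle: $\calQ$ is infinite-dimensional, quantile functions need not have bounded range, and a small $d_W$-ball around $y$ can still contain a rich family of nearby monotone functions, so obtaining a small-ball covering-number bound that is \emph{uniform as $\delta \to 0$} is delicate — most cleanly handled by restricting $\calW$ to distributions supported in a fixed compact interval, which is in any case the regime where the $\diam(\calM) < \infty$ hypotheses of Theorems~\ref{thm:rate} and~\ref{thm:denoising_simple} are relevant.
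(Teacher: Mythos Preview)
Your approach matches the paper's: the same quantile isometry into $L^2[0,1]$, the same convex-projection argument for \ref{cond:existence} and \ref{cond:growth} (yielding $\Cgrth=1$, $\alpha=2$, $\Dgrth$ arbitrary), and the same appeal to the $L^2$-entropy of bounded monotone functions for \ref{cond:entropy}. The paper shortcuts your truncation discussion by directly citing \cite[Proposition~1]{petersen2019frechet} for the bound $\log\mathfrak N\big(B_{d_W}(y,\delta),\delta\varepsilon\big)\leq C\,\varepsilon^{-1}$, which indeed relies on the uniformly-bounded (compact-support) restriction you correctly flagged as the clean route --- and which is in any case the regime implied by the standing $\diam(\calM)<\infty$ hypothesis.
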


\begin{proof}[Proof of Proposition \ref{prop:Wasserstein}]
    For a probability distribution function $y \in \calW$ defined on $\mathbb{R}$, let $\calQ = Q(\calW) \coloneqq \{ Q(y): y \in \calW\}$ denote the collection of corresponding quantile functions, where $\big(Q(y)\big)(u) = y^{-1}(u)$ for $u \in [0,1]$.
    
    We note that $f \mapsto \Exp_\nu \big[ \weightlambda{\nu}(X, x) \, \langle Q(Y), f \rangle \big]$ is a bounded linear functional defined on $L^2[0,1]$ because $\Exp_\nu | \weightlambda{\nu}(X, x)|^2 \leq 2 + 2 p \,  \| (x - \covmean{\nu}) \|_{\covcovar{\nu}}^2$ implies that $\Exp_{\nu} \big[ \weightlambda{\nu}(X, x)| \cdot \| Q(Y) \|_2 \big]$ is bounded.
    It follows from the Riesz representation theorem that there exists $f_x\lth \in L^2[0,1]$ such that 
    \begin{align}
        \Exp_\nu \big[ w\lth(X, x) \, \langle Q(Y), g \rangle_2 \big] = \langle f_x\lth, g \rangle_2
    \end{align}
    for all $g \in L^2[0,1]$. 
    Then, we have
    \begin{align}
        \begin{split}
            R_\nu\lth(y; x) 
            &= \Exp_\nu \big[ \weightlambda{\nu}(X, x) \, \| Q(Y) - f_x\lth \|_2^2 \big] + \| Q(y) - f_x\lth\|_2^2,
        \end{split}
    \end{align}
    which yields that
    \begin{align}
        \begin{split}
            \varphi_\nu\lth(x) 
            &= Q^{-1}\bigg( \argmin_{Q \in \calQ} \| Q - f_x\lth \|_2^2 \bigg).
        \end{split}
    \end{align}
    
    The condition \ref{cond:existence} follows from the convexity of $(\calQ, \, \| \cdot \|_2)$.
    Moreover, the convexity also gives $\big\langle Q(y) - Q(\varphi_\nu\lth(x)), f_x\lth(x) - Q(\varphi_\nu\lth(x)) \big\rangle_2 \leq 0$ for all $y \in \calW$, so that
    \begin{align}
        \begin{split}
            R_\nu\lth(y; x) - R_\nu\lth(\varphi\lth(x); x)
            &= \| Q(y) - f_x\lth(x) \|_2^2 - \| Q(\varphi_\nu\lth(x)) - f_x\lth(x) \|_2^2\\
            &= \| Q(y) - Q(\varphi_\nu\lth(x)) \|_2 - 2 \big\langle Q(y) - Q(\varphi_\nu\lth(x)), f_x\lth(x) - Q(\varphi_\nu\lth(x)) \big\rangle_2\\
            &\geq \| Q(y) - Q(\varphi_\nu\lth(x)) \|_2\\
            &= d_W^2(y, \varphi_\nu\lth(x)).
        \end{split}
    \end{align}
    Therefore, the condition \ref{cond:growth} holds for any arbitrary constant $\Dgrth > 0$ with $\Cgrth = 1$ and $\alpha = 2$.
    
    Finally, we refer to \cite[Proposition 1]{petersen2019frechet} to ensure that for any $\delta > 0$ and any $\varepsilon > 0$,
    \begin{align}
        \sup_{y \in \calW} \log \mathfrak{N}\big( B_{d_W}(y, \delta), \Dent \varepsilon \big) 
            \leq \sup_{Q \in \calQ} \log \mathfrak{N}\big( B_{d_2}(Q, \delta), \delta \varepsilon \big) 
            \leq C \cdot \varepsilon^{-1} 
    \end{align}
    holds with an absolute constant $C > 0$. 
    Technically, this fact comes from the covering number bound for a class of uniformly bounded and monotone functions in $L^2$. 
    This confirms that the entropy condition \ref{cond:entropy} holds. 
\end{proof}

\subsection{Example \ref{example:correlation}: Set of correlation matrices}

\begin{proposition}\label{prop:correlation}
    The space $(\calM, d_F)$ defined in Example \ref{example:correlation} satisfies Assumptions \ref{cond:existence}, \ref{cond:growth}, and \ref{cond:entropy}.
\end{proposition}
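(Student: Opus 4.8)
The plan is to reduce the three conditions to facts about the ambient Euclidean space $(\RR^{r\times r}, d_F)$ and the geometry of the correlation-matrix set $\calC_r \coloneqq \calM$. The key structural observation is that $\calC_r$ is a \emph{closed, bounded, convex} subset of the Euclidean space $\mathcal{S}^r$ of symmetric $r\times r$ matrices: the constraints $M=M^\top$, $M\succeq 0$, and $M_{ii}=1$ are all linear or convex, so intersections preserve convexity; boundedness follows since $\|M\|_F^2 = \sum_{ij} M_{ij}^2 \le r^2$ (indeed each $|M_{ij}|\le 1$ for a correlation matrix). Consequently $\diam(\calC_r) \le r\sqrt{2}<\infty$, which is also needed to invoke Theorems \ref{thm:consistency}--\ref{thm:rate} for this example. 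The weighted Fréchet functional is, exactly as in the proof of Proposition \ref{prop:Hilbert},
\begin{equation*}
    \risklambda{\nu}(y;x) = \bbE_\nu\big[\weightlambda{\nu}(X,x)\,\|Y-y\|_F^2\big] = \risklambda{\nu}(m_x\lth;x) + \|y - m_x\lth\|_F^2,
\end{equation*}
where $m_x\lth \coloneqq \bbE_\nu[\weightlambda{\nu}(X,x)\,Y]$ is the (possibly non-correlation) symmetric matrix obtained by taking the weighted average; this identity holds because $\calC_r$ sits inside the Hilbert space $\mathcal{S}^r$ and the cross term vanishes by the definition of $m_x\lth$.

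First I would establish \ref{cond:existence}: since $y\mapsto \|y-m_x\lth\|_F^2$ is strictly convex and coercive on $\mathcal{S}^r$, its restriction to the nonempty closed convex set $\calC_r$ attains a unique minimizer, namely the Euclidean projection $\Pi_{\calC_r}(m_x\lth)$; this gives existence and uniqueness of $\regfrechetlambda{\nu}(x)$, and the separation-from-minimizer inequality $\inf_{d(y,\frechetlin(x))>\varepsilon} R(y;x) > R(\frechetlin(x);x)$ is immediate from the quadratic form. Next, for \ref{cond:growth}: writing $\varphi\lth \coloneqq \regfrechetlambda{\nu}(x) = \Pi_{\calC_r}(m_x\lth)$, the variational characterization of Euclidean projection onto a convex set gives $\langle y - \varphi\lth,\, m_x\lth - \varphi\lth\rangle_F \le 0$ for all $y\in\calC_r$, so
\begin{equation*}
    \risklambda{\nu}(y;x) - \risklambda{\nu}(\varphi\lth;x) = \|y - m_x\lth\|_F^2 - \|\varphi\lth - m_x\lth\|_F^2 = \|y-\varphi\lth\|_F^2 - 2\langle y-\varphi\lth, m_x\lth-\varphi\lth\rangle_F \ge \|y-\varphi\lth\|_F^2 = d_F(y,\varphi\lth)^2,
\end{equation*}
so \ref{cond:growth} holds with $\Dgrth = \infty$, $\Cgrth = 1$, $\alpha = 2$ — exactly as in Examples \ref{example:Euclidean} and \ref{example:Wasserstein}.

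Finally, for the entropy condition \ref{cond:entropy}: $\calC_r$ is contained in a Euclidean space of dimension $r(r-1)/2$ (the free off-diagonal entries), so for any $y\in\calC_r$ the ball $B_{d_F}(y,\delta)\cap\calC_r$ is covered by a Euclidean $\delta$-ball in $\RR^{r(r-1)/2}$, whence by the standard volumetric bound \eqref{eqn:ball_covering},
\begin{equation*}
    \mathfrak{N}\big(B_{d_F}(y,\delta),\delta\varepsilon\big) \le \Big(\tfrac{2}{\varepsilon}+1\Big)^{r(r-1)/2} \le C\,\varepsilon^{-r(r-1)/2},
\end{equation*}
and then the entropic integral is bounded exactly as in the proof of Proposition \ref{prop:Hilbert}, yielding $\Cent = \sqrt{1+\log C} + \tfrac{1}{2}\sqrt{\pi\, r(r-1)/2}$ (finite, depending only on $r$). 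I expect no genuine obstacle here — the main point to get right is simply the reduction through the Hilbert-space identity plus the projection-onto-convex-set argument, being careful that $m_x\lth$ need not lie in $\calC_r$ (only its projection does), and recording the finite-dimensionality $r(r-1)/2$ for the covering bound. The argument is essentially a verbatim adaptation of Proposition \ref{prop:Hilbert} with "minimizer over a Hilbert space" replaced by "Euclidean projection onto a closed convex subset."
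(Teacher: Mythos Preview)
Your proposal is correct and follows essentially the same route as the paper: the correlation-matrix set is a closed convex subset of the Euclidean space $\calS^r$ of symmetric matrices, so \ref{cond:existence} and \ref{cond:growth} follow from the projection-onto-convex-set argument, while \ref{cond:entropy} follows from finite dimensionality. You actually spell out more detail than the paper does---the paper's proof merely asserts that \ref{cond:existence} and \ref{cond:growth} are ``inherited'' from $\calS^r$ by convexity without writing out the obtuse-angle inequality you give, and it embeds into dimension $r(r+1)/2$ rather than your sharper $r(r-1)/2$.
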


\begin{proof}[Proof of Proposition \ref{prop:correlation}]
    First of all, we note that $\calM$ is a convex subset of $\calS^r \coloneqq \{ X \in \RR^{r \times r}: X = X^{\top} \}$, which is the set of $r \times r$ symmetric matrices. 
    It is because $\calM = \calS_+^r \cap H$ where $\calS_+^r$ denotes the cone of $r \times r$ positive semidefinite matrices and $H\coloneqq \{ X \in \calS^r: X_{ii} = 1, ~\forall i \in [r] \}$ denotes an affine subspace of $\calS^r$, both of which are convex. 

    Next, we observe that $\calS^r$ equipped with the Frobenius metric $d_F$ is isometrically isomorphic to $\RR^{r (r+1)/2}$ equipped with the $\ell^2$-metric. 
    Hence, $(\calM, d_F)$ satisfies Assumptions \ref{cond:existence}, \ref{cond:growth}, and \ref{cond:entropy}, inheriting these properties from the ambient space $\calS^r$, which is established by Proposition \ref{prop:Hilbert}. 
    We note that the inheritance of \ref{cond:existence}, \ref{cond:growth} relies on the convexity of $\calM$, while \ref{cond:entropy} is inherited simply based on the inclusion $\calM \subset \calS^r$.
\end{proof}

\subsection{Example \ref{example:Riemannian}: Bounded Riemannian manifold}

\begin{proposition}\label{prop:riemannian}
    The space $(\calM, d_g)$ defined in Example \ref{example:Riemannian} satisfies Assumption \ref{cond:entropy} provided that the Riemannian metric is equivalent to the ambient Euclidean metric. 

    Furthermore, let $T_y \calM$ be the tangent space of $\calM$ at $y$, and $\mathrm{Exp}_y: T_y \calM \to \calM$ be the manifold exponential map at $y$. 
    Let $g\lth_{\nu} (u; y, x) \coloneqq \risklambda{\nu} \big( \mathrm{\Exp}_y(u), x  \big)$ for $u \in T_y \calM$
    If \ref{cond:existence} holds and the Hessian of $g\lth_{\nu} \big(u; \regfrechetlambda{\nu}(x), x \big)$ is positive definite, then \ref{cond:growth} for some $\Dgrth > 0$.
\end{proposition}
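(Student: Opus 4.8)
\textbf{Proof proposal for Proposition \ref{prop:riemannian}.}

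The plan is to handle the two claims separately, as they are of different nature. For the entropy condition \ref{cond:entropy}, I would exploit the stated hypothesis that the geodesic distance $d_g$ is equivalent to the ambient Euclidean metric: there exist constants $0 < c_1 \leq c_2 < \infty$ such that $c_1 \| y - y' \|_2 \leq d_g(y, y') \leq c_2 \| y - y' \|_2$ for all $y, y'$ in a neighborhood. Since $\calM$ is a bounded $r$-dimensional submanifold of some ambient $\RR^N$, a geodesic ball $B_{d_g}(y, \delta)$ is contained in a Euclidean ball of radius $\delta / c_1$ intersected with $\calM$; locally $\calM$ is an $r$-dimensional graph, so this set can be covered by $O\big( (c/(\delta\varepsilon))^{r} \big)$ Euclidean balls of radius $c_1 \delta\varepsilon / c_2$, each of which lies inside a geodesic ball of radius $\delta\varepsilon$. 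Hence $\mathfrak{N}\big( B_{d_g}(y,\delta), \delta\varepsilon \big) \leq C \varepsilon^{-r}$ with $C$ depending only on $r$ and $c_1, c_2$, and then the entropy integral is bounded exactly as in the proof of Proposition \ref{prop:Hilbert} (change of variables $t = -\log\varepsilon$, giving the $\sqrt{1+\log C} + \tfrac{\sqrt{r\pi}}{2}$ bound). This gives \ref{cond:entropy} with a constant $\Cent$ independent of $y$.

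For the growth condition \ref{cond:growth}, fix a probability distribution $\nu$ and $\lambda \in \RR_+$, write $y^* = \regfrechetlambda{\nu}(x)$ (which exists and is unique by \ref{cond:existence}), and consider $g(u) \coloneqq g^{(\lambda)}_\nu(u; y^*, x) = \risklambda{\nu}\big( \mathrm{Exp}_{y^*}(u), x \big)$ for $u \in T_{y^*}\calM$. Since $y^*$ minimizes $\risklambda{\nu}(\cdot; x)$ and $\mathrm{Exp}_{y^*}(0) = y^*$, the function $g$ has a minimum at $u = 0$, so $\nabla g(0) = 0$; by hypothesis $\nabla^2 g(0) \succeq c \, \bfI$ for some $c > 0$. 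A second-order Taylor expansion then gives $g(u) - g(0) \geq \tfrac{c}{2}\| u \|^2 - o(\| u \|^2)$, so there is $\rho > 0$ with $g(u) - g(0) \geq \tfrac{c}{4}\| u \|_2^2$ whenever $\| u \|_2 \leq \rho$. Finally I would translate this back to geodesic distance: because $\mathrm{Exp}_{y^*}$ is a radial isometry, $d_g\big( \mathrm{Exp}_{y^*}(u), y^* \big) = \| u \|_2$ (at least for $\| u \|_2$ below the injectivity radius), so setting $\Dgrth = \min\{\rho, \mathrm{inj}(y^*)\}$ and $\Cgrth = c/4$, $\alpha = 2$ yields the first case of \eqref{condition-growth}; the second case (for $d_g(y, y^*) \geq \Dgrth$) follows because $\risklambda{\nu}(\cdot; x) - \risklambda{\nu}(y^*; x)$ is continuous and positive on the complement of the open geodesic ball — here one would want $\calM$ boundedness (or at least that the sublevel sets are compact, which follows from $\diam(\calM) < \infty$ used elsewhere) to guarantee the infimum over that region is attained and strictly positive, i.e., invoking \ref{cond:existence}'s uniqueness-of-minimizer clause.

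The main obstacle I anticipate is the dependence of $\Dgrth, \Cgrth$ on $\nu, \lambda$, and $y^* = \regfrechetlambda{\nu}(x)$: condition \ref{cond:growth} as stated requires constants that work uniformly over \emph{all} probability distributions $\nu$ and \emph{all} $\lambda$. The Hessian-positivity hypothesis is only assumed at the particular minimizer $\regfrechetlambda{\nu}(x)$, and the Taylor-remainder term depends on third derivatives of $\risklambda{\nu}$, which in turn depend on the curvature of $\calM$ near $y^*$ and on $\weightlambda{\nu}$. To get genuine uniformity one would need either a compactness argument over the (bounded) manifold together with a uniform lower bound on the Hessian, or one should read the proposition's conclusion as "\ref{cond:growth} holds for the fixed $\nu, \lambda, x$ under consideration with constants possibly depending on them" — the phrasing "\ref{cond:growth} for some $\Dgrth > 0$" suggests the latter, looser reading, and I would state explicitly which interpretation I adopt. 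A secondary technical point is ensuring $\mathrm{Exp}_{y^*}$ is a diffeomorphism on a ball of radius $\Dgrth$ so that every $y$ with $d_g(y, y^*) < \Dgrth$ is of the form $\mathrm{Exp}_{y^*}(u)$; this is where boundedness of $\calM$ (hence a positive lower bound on the injectivity radius, under mild completeness assumptions) is used.
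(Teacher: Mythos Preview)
Your proposal is correct and follows essentially the same route as the paper: metric equivalence reduces the entropy bound to the Euclidean covering-number estimate of Proposition \ref{prop:Hilbert}, and for the growth condition one pulls back to the tangent space via $\mathrm{Exp}_{y^*}$, uses $\nabla g(0)=0$ together with positive-definiteness of the Hessian to get a local quadratic lower bound, and converts $\|u\|$ to $d_g(y,y^*)$ by the radial isometry property within the injectivity radius. Your treatment is in fact more careful than the paper's terse argument---you explicitly handle the second case $d_g(y,y^*)\geq \Dgrth$ via compactness and flag the (genuine) issue of uniformity of $\Dgrth,\Cgrth$ over $\nu,\lambda$, which the paper leaves implicit; note also that the paper's proof writes $\alpha=1$, evidently a typo for $\alpha=2$.
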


\begin{proof}[Proof of Proposition \ref{prop:correlation}]
    Since $\calM$ has finite dimension and is bounded, the bounded entropy condition \ref{cond:entropy} follows from the metric equivalence. 

    Suppose that \ref{cond:existence} holds, and let $\delta > 0$ be the injectivity radius at $\regfrechetlambda{\nu}(x)$. 
    Consider $y \in \calM$ such that $d\big(y, \regfrechetlambda{\nu}(x) \big) < \delta$, and let $u_y = \mathrm{Log}_{\regfrechetlambda{\nu}(x)}(y)$. 
    Then we have
    \[
        \risklambda{\nu}\big( y; x \big) - \risklambda{\nu} \big( \regfrechetlambda{\nu}(x); x \big)
            = g\lth_{\nu} \big( u_y ; \regfrechetlambda{\nu}(x), x \big) - g\lth_{\nu} \big( 0 ; \regfrechetlambda{\nu}(x), x \big) 
            = u_y^{\top} \, \nabla^2 g\lth_{\nu} (\bar{u}_y)  \,  u_y
    \]
    for some $\bar{u}_y$ between $0$ and $u_y$. 
    Since $u_y^{\top} u_y = d\big( y, \regfrechetlambda{\nu}(x) \big)^2$ and $ g\lth_{\nu}$ is continuous, the positive definiteness of $\nabla^2 g\lth_{\nu} (\bar{u}_y)$ implies \ref{cond:growth} with $\alpha=1$.
\end{proof}
\section{Proof of Theorem \ref{thm:consistency}}\label{sec:proof_consistency}

\begin{proof}[Proof of Theorem \ref{thm:consistency}]

Recall from Definition \ref{defn:Frechet_sample}, cf. \eqref{eqn:frechet_estimator}, that for any probability distribution $\nu$ on $\RR^p$, any $\lambda \in \RR_+$, and any $x \in \RR^p$, the $\lambda$-regularized Fr\'echet regression function evaluated at $x$ is given as the minimizer of a function $\risklambda{\nu}$ as 
\[
    \regfrechetlambda{\nu}(x) = \argmin_{y \in \calM} \risklambda{\nu}(y;x)
\]
where 
\begin{align*}
    \risklambda{\nu}(y; x) &= \bbE_{(X,Y) \sim \nu} \left[ \weightlambda{\nu} (X, x) \cdot d^2(Y, y) \right]
    \quad\text{and}\quad
    \weightlambda{\nu} (x', x) = 1 + ( x' - \covmean{\nu} )^\top \left[ \SVTlambda \big( \covcovar{\nu} \big) \right]^\dagger ( x - \covmean{\nu} ).
\end{align*}

In this proof, we follow a similar strategy to that in the proof of \cite[Theorem 1]{petersen2019frechet}. 
Specifically, it suffices to show $\sup_{y \in \calM} \big| \risklambda{\calD_n}(y;x) - \riskzero{\nu^*}(y;x) \big|$ converges to zero in probability, due to \cite[Corollary 3.2.3]{vaart1996weak}. 
To this end, we show $\risklambda{\calD_n}(y;x)$ weakly converges to $\riskzero{\nu^*}(y;x)$ in the $\ell^{\infty}(\calM)$-sense, and then apply \cite[Theorem 1.3.6]{vaart1996weak}. 
Again, according to \cite[Theorem 1.5.4]{vaart1996weak}, this weak convergence can be proved by showing that
\begin{enumerate}[label=(S\arabic*)]
    \item \label{stat:pointwise}
    $\risklambda{\calD_n}(y;x) - \riskzero{\nu^*}(y;x) = o_p(1)$ for all $y \in \calM$, and
    \item \label{stat:equicontinuity}
    $\risklambda{\calD_n}$ is asymptotically equicontinuous in probability, i.e., for any $\varepsilon, \eta > 0$, there exists $\delta > 0$ such that
    \[
        \limsup_{n} P \left( \sup_{y_1, y_2 \in \calM: \, d(y_1, y_2) < \delta} \left| \risklambda{\calD_n}(y_1;x) - \risklambda{\calD_n}(y_2;x) \right| > \varepsilon \right) < \eta.
    \]
\end{enumerate}
In what follows, we prove these two statements, \ref{stat:pointwise} and \ref{stat:equicontinuity}, thereby completing the proof of Theorem \ref{thm:consistency}.

\paragraph{Step 1: proof of \ref{stat:pointwise}.} 
First of all, we observe that
\begin{equation}
    \risklambda{\calD_n}(y; x) - \riskzero{\nu^*}(y; x)
        = \underbrace{\left( \risklambda{\calD_n}(y; x) - \riskzero{\calD_n}(y; x) \right)}_{=:T_1} 
            + \underbrace{\left(  \riskzero{\calD_n}(y; x) - \riskzero{\nu^*}(y; x) \right)}_{=:T_2}.
\end{equation}
We separately analyze the two terms $T_1$ and $T_2$ below to show $T_1 = o_p(1)$ and $T_2 = o_p(1)$ as $n \to \infty$.

\begin{enumerate}[label=(\roman*)]
    \item 
    \emph{$T_1 = o_p(1)$}.

    Let $\calD_n = \left\{ (X_i, Y_i): i \in [n] \right\}$, and we re-write 
    \[
        T_1 = \frac{1}{n} \sum_{i=1}^n \left( \weightlambda{\calD_n}(X_i, x) - w_{\calD_n}^{(0)}(X_i, x) \right) \cdot d^2(Y_i, y).
    \]
    Letting $\hmu_n = \covmean{\calD_n}$, $\hSigma_n = \covcovar{\calD_n}$, and $\hSigma_n\lth = \SVTlambda(\hSigma_n)$ for shorthand, we observe that
    \begin{align*}
        \weightlambda{\calD_n}(X_i, x) - w_{\calD_n}^{(0)}(X_i, x)
            &= ( X_i - \hmu_n )^\top \left[ \hSigma_n\lthdg - \hSigma_n^{\dagger} \right] ( x - \hmu_n ).
    \end{align*}
    
    Let $\bfX = \begin{bmatrix} X_1 & \cdots & X_n \end{bmatrix}^{\top} \in \RR^{n \times p}$, and note that $\hSigma_n = \frac{1}{n} \big( \bfX - \bfones_n \hmu_n^{\top} \big)^{\top} \big( \bfX - \bfones_n \hmu_n^{\top} \big)$. 
    Then it follows that
    \begin{align*}
        \frac{1}{n}\sum_{i=1}^n ( X_i - \hmu_n )^\top \left[ \hSigma_n\lthdg - \hSigma_n^{\dagger} \right]
            &= \frac{1}{n} \bfones_n^{\top} \big( \bfX - \bfones_n \hmu_n^{\top} \big) \left[ \hSigma_n\lthdg - \hSigma_n^{\dagger} \right]
    \end{align*}
    Consider a singular value decomposition of $\bfX - \bfones_n \hmu_n^{\top}$, namely, 
    \[
        \bfX - \bfones_n \hmu_n^{\top} = \sum_{i=1}^{\min\{n,p\}} s_i \cdot u_i v_i^{\top},
    \] 
    and observe that $\hSigma_n = \sum_{i=1}^{\min\{n,p\}} s_i^2 \cdot v_i v_i^{\top}$ is an eigenvalue decomposition of $\hSigma_n$. 
    Letting $\calV\lth_n \coloneqq \vspan \left\{ v_i: i \in [p], ~ 0 < s_i \leq \sqrt{\lambda} \right\}$ be a subspace of $\RR^p$ spanned by the eigenvectors of $\hSigma_n$ corresponding to the nonzero eigenvalues no greater than $\lambda$, we have
    \begin{align}
        \hSigma_n\lthdg - \hSigma_n^{\dagger}
            &= \sum_{i=1}^p \frac{1}{s_i^2} \cdot \indic\{ s_i > \sqrt{\lambda} \} \cdot v_i v_i^{\top}
                - \sum_{i=1}^p \frac{1}{s_i^2} \cdot \indic\{ s_i > 0 \} \cdot v_i v_i^{\top}   \nonumber\\
            &= \sum_{i=1}^p \frac{1}{s_i^2} \cdot \indic\{ 0 < s_i \leq \sqrt{\lambda} \} \cdot v_i v_i^{\top}   \nonumber\\
            &= \hSigma_n^{\dagger} \cdot \Pi_{\calV\lth_n}    \nonumber\\
            &= n \cdot  \big( \bfX - \bfones_n \hmu_n^{\top} \big)^{\dagger}  \big( \bfX - \bfones_n \hmu_n^{\top} \big)^{\dagger, \top} \cdot \Pi_{\calV\lth_n}
                \label{eqn:pinv_diff}
    \end{align}
    where $\Pi_{\calV\lth_n}$ denotes the projection matrix onto the subspace $\calV\lth_n$. 
    Note that $\Pi_{\calV\lth_n} = 0$ if and only if $\min\left\{ i \in [p]: 0 < s_i \leq \sqrt{\lambda} \right\} = \emptyset$. 

    Therefore, we have
    \begin{align*}
        T_1 &= \frac{1}{n} \sum_{i=1}^n \left( \weightlambda{\calD_n}(X_i, x) - w_{\calD_n}^{(0)}(X_i, x) \right) \cdot d^2(Y_i, y)\\
            &\leq \frac{\diam(\calM)^2}{n} \bfones_n^{\top} \big( \bfX - \bfones_n \hmu_n^{\top} \big) \left[ \hSigma_n\lthdg - \hSigma_n^{\dagger} \right] ( x - \hmu_n )\\
            &= \diam(\calM)^2 \cdot \bfones_n^{\top} \big( \bfX - \bfones_n \hmu_n^{\top} \big)^{\dagger,\top} \cdot \Pi_{\calV\lth_n} \cdot ( x - \hmu_n )       &&\because \eqref{eqn:pinv_diff}\\
            &= o_p(1).
    \end{align*}
    The last line follows from the fact that $\sup_{i \in [p]} \big( \sigma_i( \hSigma_n ) - \sigma_i( \covcovar{\nu^*} ) \big) \to 0$ in probability, and thus, $\Pi_{\calV\lth_n} \to 0$ in probability.

    \item
    \emph{$T_2 = o_p(1)$}. 

    Letting $\tilde{R}_n (y;x) = \frac{1}{n}\sum_{i=1}^n \weightzero{\nu^*}(X_i, x) \cdot d^2(Y_i, y)$, we decompose $T_2$ as follows:
    \begin{align*}
        T_2 
            &= \riskzero{\calD_n}(y; x) - \tilde{R}_n (y;x) + \tilde{R}_n (y;x) - \riskzero{\nu^*}(y; x)\\
            &= \underbrace{\frac{1}{n} \sum_{i=1}^n \left\{ \weightzero{\calD_n}(X_i, x) - \weightzero{\nu^*}(X_i, x) \right\} \cdot d^2(Y_i, y)}_{=:T_{2A}}\\
                &\qquad+ \underbrace{\frac{1}{n} \sum_{i=1}^n \left\{ \weightzero{\nu^*}(X_i, x) \cdot d^2(Y_i, y) - \bbE\left[ \weightzero{\nu^*}(X_i, x) \cdot d^2(Y_i, y) \right]  \right\}}_{=:T_{2B}} 
    \end{align*}
    Note that $T_{2B}$ converges to $0$ in probability by the weak law of large numbers. 

    Now it remains to show $T_{2A} = o_p(1)$. 
    To this end, we note that
    \begin{equation} \label{lem1:pf-weight-diff}
    \begin{split}
        \weightzero{\calD_n}(X_i, x) - \weightzero{\nu^*}(X_i, x)
            &=  V_n(x) + X_i^{\top} W_n(x)\\
        &\quad\text{where}\quad
        \begin{cases}
            V_n(x) = -\hmu_n^{\top} \hSigma_n^{\dagger} (x - \hmu_n ) + \mu^{\top} \Sigma^{\dagger}( x - \mu),\\
            W_n(x) = \hSigma_n^{\dagger} (x - \hmu_n ) - \Sigma^{\dagger}( x - \mu).
        \end{cases}
    \end{split}
    \end{equation}
    Since $\hmu_n$ and $\hSigma_n$ respectively converge to $\mu$ and $\Sigma$ in probability, it is possible to verify that $|V_n(x)|, \| W_n(x) \|$ converge to $0$ in probability. 
    As a result, $T_2$ also converges to $0$ in probability.
\end{enumerate}
All in all, we have $\risklambda{\calD_n}(y; x) - \riskzero{\nu^*}(y; x) = o_p(1)$, and thus, proved \ref{stat:pointwise}.

\paragraph{Step 2: proof of \ref{stat:equicontinuity}.} 
For any $y_1, y_2 \in \calM$, 
\begin{align*}
    \left| \risklambda{\calD_n}(y_1;x) - \risklambda{\calD_n}(y_2;x) \right|
        &= \left| \frac{1}{n} \sum_{i=1}^n \weightlambda{\calD_n}(X_i, x) \cdot \left\{ d^2(Y_i, y_1) - d^2(Y_i, y_2) \right\} \right|\\
        &\leq \frac{1}{n} \sum_{i=1}^n \left| \weightlambda{\calD_n}(X_i, x) \right| \cdot \left| d(Y_i, y_1) + d(Y_i, y_2) \right| \cdot \left| d(Y_i, y_1) - d(Y_i, y_2) \right|\\
        &\leq 2 \, \diam(\calM) \cdot d(y_1, y_2) \cdot \left( \frac{1}{n} \sum_{i=1}^n \left| \weightlambda{\calD_n}(X_i, x) \right| \right)\\
        &= O_p \left( d(y_1, y_2) \right)
\end{align*}
where the $O_p$ term is independent of $y_1, y_2 \in \calM$. 
Therefore, 
\[
    \sup_{y_1, y_2 \in \calM: \, d(y_1, y_2) < \delta} \left| \risklambda{\calD_n}(y_1;x) - \risklambda{\calD_n}(y_2;x) \right|
        = O_p( \delta ),
\]
which proves \ref{stat:equicontinuity}.

\end{proof}


\section{Proof of Theorem \ref{thm:rate}}\label{sec:proof_rate}

In this section, we prove the two claims in Theorem \ref{thm:rate}. 
Specifically, in Section \ref{sec:lemma_bias}, we present and prove a lemma that controls the bias in the population estimator (Lemma \ref{lem:bias_population}), 
and in Section \ref{sec:lemma_variance}, we present and prove a lemma that controls the variance of the empirical estimator (Lemma \ref{lem:variance}).

\subsection{Bias in the population estimator}\label{sec:lemma_bias}

We recall the definition of Mahalanobis seminorm from Definition \ref{defn:mahalanobis}: $\| x \|_{\Sigma} \coloneqq \left( x^{\top} \Sigma^{\dagger} x \right)^{1/2}$.

\begin{lemma} \label{lem:bias_population}
    Suppose that Assumptions \ref{cond:existence} and \ref{cond:growth} hold. 
    If 
    \begin{equation}
        \| x -\covmean{\nu^*} \|_{\covcovar{\nu^*}} \leq \frac{ \Cgrth \cdot \Dgrth^{\alpha} }{ \diam(\calM)^2 \cdot \sqrt{ \rank \covcovar{\nu^*}}},
    \end{equation}
    then for any $\lambda \in \RR_+$,
    \begin{align}
        d\big( \varphi\lth(x), \varphi(x) \big) \leq 2^{K_0} \cdot \bias(x)^{\frac{1}{\alpha-1}}
            = O \Big( \bias(x)^{\frac{1}{\alpha-1}} \Big)
    \end{align}
    where 
    \begin{align*}
        K_0 &= \left\lfloor \frac{1}{ (\alpha-1) \log 2 } \cdot \log \left( \frac{4 \, \diam(\calM) }{ \Cgrth \cdot \big( 1 - 2^{-(\alpha-1)}\big) } \right) \right\rfloor + 1
        \quad\text{and}\quad
        \bias(x) = \sqrt{ \rank \big( \covcovar{\nu^*} - \covcovar{\nu^*}\lth \big) } \cdot \left\| x - \covmean{\nu^*} \right\|_{\covcovar{\nu^*} - \covcovar{\nu^*}\lth}.
    \end{align*}
\end{lemma}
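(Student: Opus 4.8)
The plan is to transfer a bound on the \emph{excess risk} $\riskzero{\nu^*}\!\big(\varphi\lth(x);x\big) - \riskzero{\nu^*}\!\big(\varphi(x);x\big)$ into a bound on the distance $d\big(\varphi\lth(x),\varphi(x)\big)$ via the \texttt{Growth} condition \ref{cond:growth}, where I abbreviate $\varphi \coloneqq \regfrechetzero{\nu^*}$ and $\varphi\lth \coloneqq \regfrechetlambda{\nu^*}$. Put $g(y) \coloneqq \riskzero{\nu^*}(y;x) - \risklambda{\nu^*}(y;x)$, so that $\risklambda{\nu^*}(\cdot;x) = \riskzero{\nu^*}(\cdot;x) - g(\cdot)$. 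Since $\varphi(x)$ minimizes $\riskzero{\nu^*}(\cdot;x)$ and $\varphi\lth(x)$ minimizes $\risklambda{\nu^*}(\cdot;x)$, the usual M-estimation ``basic inequality'' gives
\[
    0 \;\le\; \riskzero{\nu^*}\!\big(\varphi\lth(x);x\big) - \riskzero{\nu^*}\!\big(\varphi(x);x\big) \;\le\; g\big(\varphi\lth(x)\big) - g\big(\varphi(x)\big),
\]
so it suffices to control the fluctuation $g\big(\varphi\lth(x)\big) - g\big(\varphi(x)\big)$.

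The second step is a modulus-of-continuity estimate for $g$. The weight difference has the closed form $\weightzero{\nu^*}(x',x) - \weightlambda{\nu^*}(x',x) = (x'-\covmean{\nu^*})^{\top}\big[\covcovar{\nu^*}^{\dagger} - (\covcovar{\nu^*}\lth)^{\dagger}\big](x-\covmean{\nu^*})$, and the eigenvalue bookkeeping already carried out in the proof of Theorem \ref{thm:consistency} (cf. \eqref{eqn:pinv_diff}) gives the identity $\covcovar{\nu^*}^{\dagger} - (\covcovar{\nu^*}\lth)^{\dagger} = (\covcovar{\nu^*} - \covcovar{\nu^*}\lth)^{\dagger}$. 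Hence $g(y) = \bbE_{\nu^*}\!\big[(X-\covmean{\nu^*})^{\top}(\covcovar{\nu^*}-\covcovar{\nu^*}\lth)^{\dagger}(x-\covmean{\nu^*})\cdot d^2(Y,y)\big]$, and using $|d^2(Y,y_1)-d^2(Y,y_2)| \le 2\,\diam(\calM)\,d(y_1,y_2)$ together with $\diam(\calM)<\infty$,
\[
    \big|g(y_1)-g(y_2)\big| \;\le\; 2\,\diam(\calM)\,d(y_1,y_2)\cdot\bbE_{\nu^*}\big|(X-\covmean{\nu^*})^{\top}(\covcovar{\nu^*}-\covcovar{\nu^*}\lth)^{\dagger}(x-\covmean{\nu^*})\big|.
\]
Decomposing the inner product along the eigenvectors $v_i$ of $\covcovar{\nu^*}$ that survive the thresholding, bounding each $\bbE_{\nu^*}|v_i^{\top}(X-\covmean{\nu^*})| \le (v_i^{\top}\covcovar{\nu^*}v_i)^{1/2}=\sigma_i^{1/2}$, and applying Cauchy--Schwarz over the at most $\rank(\covcovar{\nu^*}-\covcovar{\nu^*}\lth)$ such directions, the expectation above is at most $\bias(x)=\sqrt{\rank(\covcovar{\nu^*}-\covcovar{\nu^*}\lth)}\,\|x-\covmean{\nu^*}\|_{\covcovar{\nu^*}-\covcovar{\nu^*}\lth}$. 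Combining, $\riskzero{\nu^*}\!\big(\varphi\lth(x);x\big) - \riskzero{\nu^*}\!\big(\varphi(x);x\big) \le 2\,\diam(\calM)\,\bias(x)\,d\big(\varphi\lth(x),\varphi(x)\big)$.

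The third step converts this linear-in-$d$ bound into the claimed power bound by invoking \ref{cond:growth}, applied both at $\varphi(x)$ (the minimizer of $\riskzero{\nu^*}(\cdot;x)$) and at $\varphi\lth(x)$ (the minimizer of $\risklambda{\nu^*}(\cdot;x)$) --- legitimate since \ref{cond:growth} is postulated for every distribution and every threshold --- followed by a dyadic peeling argument. On the shell $\{\,2^{k-1}\rho < d(y,\varphi(x)) \le 2^{k}\rho\,\}$ with $\rho \asymp \bias(x)^{1/(\alpha-1)}$, the \texttt{Growth} lower bound $\riskzero{\nu^*}(y;x)-\riskzero{\nu^*}(\varphi(x);x) \ge \Cgrth\,(2^{k-1}\rho)^{\alpha}$ eventually dominates the linear modulus bound $2\,\diam(\calM)\,\bias(x)\cdot 2^{k}\rho$, which excludes $\varphi\lth(x)$ from every shell of radius $\gtrsim \bias(x)^{1/(\alpha-1)}$; summing the resulting geometric series in $2^{-(\alpha-1)}$ over the shells produces the explicit prefactor $2^{K_0}$ (the factor $(1-2^{-(\alpha-1)})^{-1}$ inside $K_0$ being the value of that series). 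The smallness hypothesis $\|x-\covmean{\nu^*}\|_{\covcovar{\nu^*}} \le \Cgrth\Dgrth^{\alpha}/(\diam(\calM)^2\sqrt{\rank\covcovar{\nu^*}})$ --- equivalently $\bias(x) \le \Cgrth\Dgrth^{\alpha}/\diam(\calM)^2$ --- is exactly what rules out the ``far'' regime $d\big(\varphi\lth(x),\varphi(x)\big) \ge \Dgrth$: there \ref{cond:growth} forces both $\riskzero{\nu^*}\!\big(\varphi\lth(x);x\big)-\riskzero{\nu^*}\!\big(\varphi(x);x\big)\ge\Cgrth\Dgrth^{\alpha}$ and $\risklambda{\nu^*}\!\big(\varphi(x);x\big)-\risklambda{\nu^*}\!\big(\varphi\lth(x);x\big)\ge\Cgrth\Dgrth^{\alpha}$, whence $g\big(\varphi\lth(x)\big)-g\big(\varphi(x)\big)\ge 2\Cgrth\Dgrth^{\alpha}$, contradicting the bound $g\big(\varphi\lth(x)\big)-g\big(\varphi(x)\big)\le 2\,\diam(\calM)^2\,\bias(x)$ once $\bias(x)$ is this small. (When $\Dgrth=\infty$ the hypothesis is vacuous, matching the remark following Theorem \ref{thm:rate}.)

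The part I expect to be most delicate is this last conversion: making the peeling/bootstrap bookkeeping rigorous while extracting the explicit constant $2^{K_0}$, and checking that the smallness condition on $\|x-\covmean{\nu^*}\|_{\covcovar{\nu^*}}$ is sharp enough to confine the minimizer $\varphi\lth(x)$ to $B_d(\varphi(x),\Dgrth)$, where the power-law branch of \ref{cond:growth} (rather than its truncated branch) governs. A secondary technical nuisance is that the SVT weights $\weightlambda{\nu^*}(\cdot,x)$ may be negative, so $\risklambda{\nu^*}(\cdot;x)$ is not a genuine squared-distance risk and one must lean on \ref{cond:existence}--\ref{cond:growth} being imposed as hypotheses rather than on any convexity of the objective; and the spectral estimate relating $\bbE_{\nu^*}|\langle X-\covmean{\nu^*},(\covcovar{\nu^*}-\covcovar{\nu^*}\lth)^{\dagger}(x-\covmean{\nu^*})\rangle|$ to $\bias(x)$ has to be carried out with the pseudoinverse acting only on the thresholded-out subspace of $\covcovar{\nu^*}$.
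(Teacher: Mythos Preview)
Your proposal is correct and follows essentially the same route as the paper: the basic inequality $R\big(\varphi\lth(x);x\big)-R\big(\varphi(x);x\big)\le g\big(\varphi\lth(x)\big)-g\big(\varphi(x)\big)$, the modulus bound $|g(y_1)-g(y_2)|\le 2\,\diam(\calM)\,\bias(x)\,d(y_1,y_2)$, and the dyadic peeling via \ref{cond:growth} to reach $2^{K_0}\bias(x)^{1/(\alpha-1)}$ are exactly the paper's Steps~2--3. The only noticeable deviation is in handling the ``far'' regime $d\ge\Dgrth$: the paper bounds $R\big(\varphi\lth(x);x\big)-R\big(\varphi(x);x\big)$ directly by $\diam(\calM)^2\sqrt{\rank\covcovar{\nu^*}}\,\|x-\covmean{\nu^*}\|_{\covcovar{\nu^*}}\le\Cgrth\Dgrth^\alpha$ (a single application of \ref{cond:growth} at $\varphi(x)$), whereas you invoke \ref{cond:growth} at \emph{both} minimizers to force $g\big(\varphi\lth(x)\big)-g\big(\varphi(x)\big)\ge 2\Cgrth\Dgrth^\alpha$ and contradict the crude bound $2\,\diam(\calM)^2\,\bias(x)$ --- a slightly different bookkeeping that yields the same constraint on $\|x-\covmean{\nu^*}\|_{\covcovar{\nu^*}}$.
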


\begin{proof}[Proof of Lemma \ref{lem:bias_population}]

For the sake of brevity, we write $\varphi\lth(x) = \regfrechetlambda{\nu^*}(x)$ and $\varphi(x) = \regfrechetzero{\nu^*}(x)$ throughout this proof, dropping the subscript $\nu^*$. 
Likewise, we simply write $\mu = \covmean{\nu^*}$ and $\Sigma = \covcovar{\nu^*}$.

\paragraph{Step 1: A na\"ive upper bound.} 
Observe that for any $\lambda \in \RR_+$, $x \in \RR^p$, and $y \in \calM$,
\begin{align}
    &\big| R(y; x) - R\lth(y; x) \big|\nonumber\\
        &\qquad= \left| \, \bbE_{\nu^*} \Big[ (X - \mu)^\top \cdot \left( \Sigma^{\dagger} - \Sigma\lthdg \right) \cdot (x - \mu)  \cdot d^2(Y, y) \Big] \,\right|
            \nonumber\\
        &\qquad\leq \diam(\calM)^2 \cdot \bbE_{\nu^*} \left[ \left\| X - \mu \right\|_{\Sigma - \Sigma\lth} \right] \cdot \left\| x - \mu \right\|_{\Sigma - \Sigma\lth}
            &&\because\text{Cauchy-Schwarz inequality} \nonumber\\
        &\qquad\leq \diam(\calM)^2 \cdot \left( \bbE_{\nu^*}  \left\| X - \mu \right\|_{\Sigma - \Sigma\lth}^2 \right)^{1/2} \cdot \left\| x - \mu \right\|_{\Sigma - \Sigma\lth}
            &&\because\text{Jensen's inequality}    \nonumber\\
        &\qquad=  \diam(\calM)^2 \cdot \sqrt{ \rank \left( \Sigma - \Sigma\lth \right) } \cdot \left\| x - \mu \right\|_{\Sigma - \Sigma\lth},
    \label{eqn:bias_risk_diff}
\end{align}
where the last inequality follows from  $\bbE_{\nu^*}  \left\| X - \mu \right\|_{\Sigma - \Sigma\lth}^2 = \rank \big( \Sigma - \Sigma\lth \big)$.

We observe that the upper bound in \eqref{eqn:bias_risk_diff} is monotone non-decreasing with respect to $\lambda \in \RR_+$, and it converges to $0$ as $\lambda \to 0$. 
To see this, for any $\lambda \in \RR_+$, we let 
\[
    \calV\lth \coloneqq \vspan \left\{ v_i: i \in [p], ~ 0 < \lambda_i \leq \lambda \right\}
\]
where $\Sigma = \sum_{i=1}^p \lambda_i \cdot v_i v_i^\top$ is an eigenvdecomposition of $\Sigma$. 
Letting $\Pi_{\calV\lth}$ denote the projection matrix onto the subspace $\calV\lth$, we note that $\Sigma - \Sigma\lth = \Pi_{\calV\lth} \Sigma \Pi_{\calV\lth}$, and that $(\Sigma - \Sigma\lth)^{\dagger} = \Pi_{\calV\lth} \Sigma^{\dagger} \Pi_{\calV\lth}$. 
Thus, $\rank \left( \Sigma - \Sigma\lth \right) = \dim \calV\lth$, and furthermore, we notice that $\calV\lth = \{0\}$ if and only if $\lambda < \lambda_{\min} \coloneqq \min\{ \lambda_i: \lambda_i > 0 \}$. 
Therefore, 
\begin{equation}\label{eqn:risk_bias_conv}
    \lambda < \lambda_{\min}
    \qquad\implies\qquad
    R\lth(y; x) - R(y; x) = 0
    \qquad\implies\qquad
    \varphi^{(\lambda)}(x) = \varphi(x), ~~\forall x.
\end{equation}
The observation \eqref{eqn:risk_bias_conv}, together with Assumption \ref{cond:existence}, implies that $d \big( \varphi\lth(x), \varphi(x) \big) = o(1)$ as $\lambda \to 0$.

\paragraph{Step 2: Controlling risk difference.}
Next, we move on to determine the order of $d \big( \varphi\lth(x), \varphi(x) \big)$ --- as a function of $\bias(x)$ --- for a fixed $\lambda \in \RR$. 
We may assume $\lambda > \lambda_{\min}$ for the proof because the lemma is trivial otherwise, cf. \eqref{eqn:risk_bias_conv}. 
Assuming $\lambda > \lambda_{\min}$, we may decompose the difference in the population objective at $\varphi\lth(x)$ and $\varphi(x)$ as follows:
\begin{align*}
    R\big( \varphi\lth(x); x \big) - R\big( \varphi(x); x \big)
        & = \underbrace{\left\{ R\big( \varphi\lth(x); x \big) - R\lth\big( \varphi\lth(x); x \big) + R\lth\big( \varphi(x); x \big) - R\big( \varphi(x); x \big) \right\}}_{=: \mathfrak{R}_1}\\
            &\qquad - \underbrace{\left\{ R\lth\big( \varphi(x); x \big) - R\lth\big( \varphi\lth(x); x \big) \right\}}_{=: \mathfrak{R}_2}.
\end{align*}
We observe that both $\mathfrak{R}_1$ and $\mathfrak{R}_2$ are non-negative, due to the optimality of $\varphi(x)$ and $\varphi\lth(x)$. 
Then, we obtain an upper bound for $\mathfrak{R}_1$ using a similar argument as in \eqref{eqn:bias_risk_diff}. 
Specifically,
\begin{align}
    R\big( \varphi\lth(x); x \big) - R\big( \varphi(x); x \big)
        &\leq   \mathfrak{R}_1    \nonumber\\
        &= \bbE_{\nu^*} \left[ \left\{ \weightzero{\nu^*}(X,x) - \weightlambda{\nu^*}(X,x) \right\} \cdot \left\{ d^2\big(Y, \varphi\lth(x)\big) - d^2\big(Y, \varphi(x)\big) \right\} \right]
             \nonumber\\
        &\leq 2 \, \diam (\calM) \cdot \bias(x) \cdot d\big( \varphi\lth(x), \varphi(x)\big).
            \label{eqn:Delta_1}
\end{align}


\paragraph{Step 3: Converting risk difference to bias.}
Lastly, we convert the upper bound \eqref{eqn:Delta_1} to an upper bound on the distance $d \big( \varphi\lth(x), \varphi(x) \big)$ using Assumption \ref{cond:growth}. 
To this end, we begin by confirming that
\begin{align*}
    R\big( \varphi\lth(x); x \big) - R\big( \varphi(x); x \big)
        &= \bbE_{\nu^*} \left[ (X - \mu)^\top \cdot \Sigma^{\dagger} \cdot (x - \mu)  \cdot \Big\{ d^2\big( Y, \varphi\lth(x) \big) - d^2\big( Y, \varphi(x) \big) \Big\} \right]\\
        &\leq \diam(\calM)^2 \cdot \left( \bbE_{\nu^*}  \left\| X - \mu \right\|_{\Sigma}^2 \right)^{1/2} \cdot \left\| x - \mu \right\|_{\Sigma}\\
        &=  \diam(\calM)^2 \cdot  \sqrt{\rank \Sigma } \cdot \left\| x - \mu \right\|_{\Sigma}\\
        &\leq \Cgrth \cdot \Dgrth^{\alpha}.
\end{align*}
Thereafter, we choose an arbitrary $K \in \NN$ and $r \in \RR_+$ whose values will be determined later in this proof. 
Then we obtain the following inequality using the so-called peeling technique:
\begin{align}
    &\indic\left\{ d\big( \varphi\lth(x), \varphi(x) \big) > 2^K \cdot \bias(x)^r \right\}
            \nonumber\\
        &\qquad= \sum_{k=K}^{\infty} \indic\left\{ 2^{k} \cdot \bias(x)^r < d\big( \varphi\lth(x), \varphi(x) \big) \leq 2^{k+1} \cdot \bias(x)^r \right\}
            \nonumber\\
        &\qquad\leq \sum_{k=K}^{\infty} \indic\left\{ 2^{k} \cdot \bias(x)^r < d\big( \varphi\lth(x), \varphi(x) \big) \leq  2^{k+1} \cdot \bias(x)^r \right\}
            \nonumber\\
        &\qquad\leq \sum_{k=K}^{\infty} \frac{ R\big( \varphi\lth(x); x \big) - R\big( \varphi(x); x \big) }{\Cgrth \cdot \big( 2^{k} \cdot \bias(x)^r \big)^{\alpha} } \cdot \indic\left\{ d\big( \varphi\lth(x), \varphi(x) \big) \leq 2^{k+1} \cdot \bias(x)^r \right\}.
            &&\because \ref{cond:growth}
            \label{eqn:indicator_upper}
\end{align}
Moreover, we decompose the numerator in the fraction appearing in the upper bound \eqref{eqn:indicator_upper} as follows:

Combining \eqref{eqn:Delta_1} with \eqref{eqn:indicator_upper}, we have
\begin{align}
    &\indic\left\{ d\big( \varphi\lth(x), \varphi(x) \big) > 2^K \cdot \bias(x)^r \right\}
        \nonumber\\
        &\qquad\leq \sum_{k=K}^{\infty} \frac{ 2 \, \diam (\calM) \cdot \bias(x) \cdot d\big( \varphi\lth(x), \varphi(x)\big) }{ \Cgrth \cdot \big( 2^{k} \cdot \bias(x)^r \big)^{\alpha} } \cdot \indic\left\{ d\big( \varphi\lth(x), \varphi(x) \big) \leq 2^{k+1} \cdot \bias(x)^r \right\}
            \nonumber\\
        &\qquad\leq \frac{4 \, \diam (\calM) }{\Cgrth} \cdot \bias(x)^{1 - r(\alpha-1)} \sum_{k=K}^{\infty} \frac{1}{2^{k (\alpha-1)}}.
            \label{eqn:indic_series}
\end{align}
Note that $C \coloneqq \frac{4 \, \diam \calM}{\Cgrth} > 0$ is a constant independent of $\lambda$. 
Let $r = 1/(\alpha - 1)$, and observe that the upper bound in \eqref{eqn:indic_series} becomes smaller than $1$ for a sufficiently large $K$. 
Specifically, 
\[
    K \geq \left\lfloor \frac{1}{ (\alpha-1) \log 2 } \cdot \log \left( \frac{4 \, \diam(\calM) }{ \Cgrth \cdot \big( 1 - 2^{-(\alpha-1)}\big) } \right) \right\rfloor + 1
    \quad\implies\quad
    \frac{4 \, \diam (\calM) }{\Cgrth} \cdot \sum_{k=K}^{\infty} \frac{1}{2^{k (\alpha-1)}} < 1.
\]
As a result, the inequality ``$d\big( \varphi\lth(x), \varphi(x) \big) > 2^{K_0} \cdot \bias(x)^r$'' in the indicator function must be false, and we conclude that 
\[
    d\big( \varphi\lth(x), \varphi(x) \big) \leq 2^{K_0} \cdot \bias(x)^{\frac{1}{\alpha-1}}.
\]
\end{proof}

\subsection{Variance of the empirical estimator}\label{sec:lemma_variance}

\begin{lemma} \label{lem:variance}
    Suppose that Assumptions \ref{cond:existence}, \ref{cond:growth} and \ref{cond:entropy} hold. 
    For any $\lambda \in \RR_+$ such that $\lambda \not\in \spec\big( \covcovar{\nu^*} \big)$, it holds that
    \[
        d \Big( \regfrechetlambda{\calD_n}(x), \regfrechetlambda{\nu^*}(x) \Big) = O_P\Big(n^{-\frac{1}{2(\alpha - 1)}}\Big).
    \]
\end{lemma}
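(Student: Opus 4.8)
The plan is to run the classical $M$-estimation rate argument (cf. \cite[Section~3.2]{vaart1996weak} and the peeling device already used for Lemma~\ref{lem:bias_population}), the genuinely new ingredient being the randomness of the SVT-regularized weights. Fix $x$, abbreviate $\hat\varphi \coloneqq \regfrechetlambda{\calD_n}(x)$ and $\varphi\lth \coloneqq \regfrechetlambda{\nu^*}(x)$, and recall the fluctuation process $Z_n\lth(y;x) = \risklambda{\calD_n}(y;x) - \risklambda{\nu^*}(y;x)$. Since $\hat\varphi$ minimizes $\risklambda{\calD_n}(\cdot;x)$ we have $\risklambda{\calD_n}(\hat\varphi;x) \le \risklambda{\calD_n}(\varphi\lth;x)$, which rearranges to the basic inequality
\[
    \risklambda{\nu^*}(\hat\varphi;x) - \risklambda{\nu^*}(\varphi\lth;x) \;\le\; \bigl| Z_n\lth(\hat\varphi;x) - Z_n\lth(\varphi\lth;x) \bigr| .
\]
By Condition~\ref{cond:growth}, once $d(\hat\varphi,\varphi\lth) < \Dgrth$ the left side is at least $\Cgrth\, d(\hat\varphi,\varphi\lth)^{\alpha}$, so the task reduces to bounding the modulus $M_n(\delta) \coloneqq \sup_{y:\, d(y,\varphi\lth)\le\delta}\bigl| Z_n\lth(y;x) - Z_n\lth(\varphi\lth;x) \bigr|$ in terms of $\delta$.

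To control $M_n(\delta)$ I would split $Z_n\lth$ via the intermediate process $\tRnl(y;x) \coloneqq \frac1n\sum_{i=1}^n \weightlambda{\nu^*}(X_i,x)\, d^2(Y_i,y)$ (population weights, empirical average), so $Z_n\lth = (\risklambda{\calD_n} - \tRnl) + (\tRnl - \risklambda{\nu^*})$. For the first, ``plug-in'', part, write $\weightlambda{\calD_n}(X_i,x) - \weightlambda{\nu^*}(X_i,x) = V_n(x) + X_i^{\top} W_n(x)$ exactly as in \eqref{lem1:pf-weight-diff}, with the pseudoinverses replaced by their $\lambda$-SVT counterparts; its increment between $y$ and $\varphi\lth$ is then at most $2\diam(\calM)\, d(y,\varphi\lth)\bigl(|V_n(x)| + \|W_n(x)\|\cdot\frac1n\sum_i\|X_i\|\bigr)$. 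Here the hypothesis $\lambda \notin \spec(\covcovar{\nu^*})$ is essential: it makes $\bfM \mapsto [\SVTlambda(\bfM)]^{\dagger}$ Lipschitz in a neighbourhood of $\covcovar{\nu^*}$, so that by the central limit theorem and the delta method $\sqrt n\,|V_n(x)|$ and $\sqrt n\,\|W_n(x)\|$ are $O_P(1)$; this part therefore contributes $O_P(n^{-1/2})\cdot d(y,\varphi\lth)$ to $M_n(\delta)$. For the second, ``empirical process'', part, the increment equals $n^{-1/2}\,\mathbb{G}_n g_y$ with $g_y(X,Y) = \weightlambda{\nu^*}(X,x)\bigl\{ d^2(Y,y) - d^2(Y,\varphi\lth) \bigr\}$ (so $g_{\varphi\lth} \equiv 0$); the class $\{ g_y : d(y,\varphi\lth)\le\delta \}$ has an $L^2(\nu^*)$-envelope of norm $2\diam(\calM)\,\delta\,\bigl(\bbE|\weightlambda{\nu^*}(X,x)|^2\bigr)^{1/2} = O(\delta)$ --- finite since $\bbE|\weightlambda{\nu^*}(X,x)|^2 = 1 + \|x-\covmean{\nu^*}\|_{\covcovar{\nu^*}\lth}^2$ --- and its $L^2(\nu^*)$-covering numbers are dominated by those of $B_d(\varphi\lth,\delta)$. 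Thus Condition~\ref{cond:entropy} and a standard maximal inequality \cite[Lemma~3.4.2]{vaart1996weak} give $\phi_n(\delta)\coloneqq\bbE^{*}\bigl[ \sup_{d(y,\varphi\lth)\le\delta} |\mathbb{G}_n g_y| \bigr] = O(\delta)$ for all sufficiently small $\delta$. Combining the two parts, $M_n(\delta)$ is $n^{-1/2}\bigl(\phi_n(\delta) + O_P(1)\cdot\delta\bigr) = O_P(\delta\,n^{-1/2})$ on a fixed neighbourhood of $\delta=0$.

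It remains to extract the rate. Adapting the proof of Theorem~\ref{thm:consistency} --- replacing $\riskzero{\nu^*}$ by $\risklambda{\nu^*}$, the needed uniform convergence $\sup_y|\risklambda{\calD_n}(y;x) - \risklambda{\nu^*}(y;x)|\to 0$ again holding because $\lambda \notin \spec(\covcovar{\nu^*})$ --- shows $d(\hat\varphi,\varphi\lth) = o_P(1)$, so eventually we are inside the growth regime of \ref{cond:growth}. Since the plug-in part of $Z_n\lth$ contributes only $O_P(n^{-1/2})\,d(\hat\varphi,\varphi\lth)$ to the basic inequality, $\hat\varphi$ is, up to this slack, an approximate minimizer of the genuine empirical objective $\tRnl(\cdot;x)$; feeding the curvature exponent $\alpha$ from \ref{cond:growth} and the modulus $\phi_n(\delta)=O(\delta/\sqrt n)$ into the standard rate theorem for $M$-estimators \cite[Section~3.2]{vaart1996weak} --- equivalently, peeling over dyadic shells $2^{k-1}r_n < d(\hat\varphi,\varphi\lth)\le 2^k r_n$ exactly as in Lemma~\ref{lem:bias_population} --- gives the rate $r_n$ solving $\Cgrth r_n^{\alpha} \asymp \phi_n(r_n)/\sqrt n$, i.e. $r_n^{\alpha-1}\asymp n^{-1/2}$, so $r_n \asymp n^{-1/(2(\alpha-1))}$. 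Thus $d\bigl(\regfrechetlambda{\calD_n}(x),\regfrechetlambda{\nu^*}(x)\bigr) = O_P\bigl(n^{-1/(2(\alpha-1))}\bigr)$.

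The main obstacle is the modulus estimate of the second paragraph, which must simultaneously (i) absorb the error of substituting the empirical mean and covariance for the population ones --- precisely where the separation $\lambda \notin \spec(\covcovar{\nu^*})$ enters, keeping singular value thresholding locally well-behaved --- and (ii) apply the localized maximal inequality with the entropy control \ref{cond:entropy}, taking care that the relevant function class is indexed by $B_d(\varphi\lth,\delta)$ rather than by all of $\calM$ (here $\diam(\calM)<\infty$, inherited from the hypothesis of Theorem~\ref{thm:rate} where the lemma is used, enters to make the envelope integrable). The basic inequality, the adapted consistency step, and the peeling are then routine.
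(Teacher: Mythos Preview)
Your proposal is correct and follows essentially the same path as the paper's proof: the same splitting of $Z_n\lth$ via the intermediate process $\tRnl$ into a plug-in part (handled by the $V_n,W_n$ representation, where $\lambda\notin\spec(\covcovar{\nu^*})$ enters exactly as you say) and an empirical-process part (handled by the envelope/entropy bound using \ref{cond:entropy}), followed by a consistency step and a peeling/rate argument. The only cosmetic difference is that you invoke the packaged $M$-estimator rate theorem of \cite[Section~3.2]{vaart1996weak} at the end, whereas the paper writes out the dyadic peeling explicitly.
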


\begin{proof}[Proof of Lemma \ref{lem:variance}]
Recall from the definition of $\lambda$-regularized Fr\'echet regression (Definition \ref{defn:Frechet_sample}) and \eqref{eqn:frechet_estimator} that
\begin{align*}
    \risklambda{\calD_n}(y;x)
        = \frac{1}{n} \sum_{i=1}^n \weightlambda{\calD_n}(X_i, x) \cdot d^2(Y_i, y)
    \quad\text{and}\quad
    \risklambda{\nu^*}(y;x)
        = \bbE_{(X,Y) \sim \nu^*} \left[ \weightlambda{\nu^*}(X, x) \cdot d^2(Y, y) \right].
\end{align*}
Additionally, we define an auxiliary function $\tilde{R}_n(y;x)$ as the ``empirical risk with population weight'' such that
\[
    \tilde{R}_n(y;x) \coloneqq \frac{1}{n} \sum_{i=1}^n \weightlambda{\nu^*}(X_i, x) \cdot d^2(Y_i, y).
\]

We present the rest of this proof in three steps, outlined as follows. 
In Step 1, we show the consistency of $\regfrechetlambda{\calD_n}(x)$, i.e., $d \big( \regfrechetlambda{\calD_n}(x), \regfrechetlambda{\nu^*}(x) \big) = o_P(1)$ as $n \to \infty$. 
In Step 2, we define the discrepancy variable $Z_n\lth(y;x) \coloneqq \risklambda{\calD_n}(y;x) - \risklambda{\nu^*}(y;x)$ between the finite-sample and the population objectives, cf. \eqref{eqn:Z_definition}, and prove a uniform upper bound for $Z_n\lth(y;x)$ that holds in a neighborhood of $\regfrechetlambda{\nu^*}(y;x)$. 
Lastly, in Step 3, we utilize the peeling technique from empirical process theory to obtain the desired rate of convergence.

\paragraph{Step 1: Consistency.}
We first claim that $d \big( \regfrechetlambda{\calD_n}(x), \regfrechetlambda{\nu^*}(x) \big) = o_P(1)$ by an argument similar to that used in the proof of Theorem \ref{thm:consistency}. 
Specifcally, it suffices to show that
\begin{enumerate}[label=(S\arabic*')]
    \item\label{stat:pointwise_2}
    $\risklambda{\calD_n}(y;x) - \risklambda{\nu^*}(y;x) = o_P(1)$, and
    \item\label{stat:equicontinuity_2}
    $\risklambda{\calD_n}(\cdot;x): \calM \to \mathbb{R}$ is asymptotically equicontinuous in probability.
\end{enumerate}
Note that we already showed the asymptotic equicontinuity in the proof of Theorem \ref{thm:consistency}; see \ref{stat:equicontinuity}. 
Thus, it remains to show the pointwise convergence in probability. 
To show \ref{stat:pointwise_2}, we decompose $\risklambda{\calD_n}(y;x) - \risklambda{\nu^*}(y;x)$ as follows.
\begin{align*}
    \begin{split}
        \risklambda{\calD_n}(y;x) - \risklambda{\nu^*}(y;x)
        &= \big\{ \risklambda{\calD_n}(y;x) -\tilde{R}_n(y;x) \big\} + \big\{ \tilde{R}_n(y;x) - \risklambda{\nu^*}(y;x) \}\\
        &= \underbrace{\frac{1}{n} \sum_{i=1}^n \big\{ \weightlambda{\calD_n}(X_i, x) - \weightlambda{\nu^*}(X_i, x)\big\} \cdot d^2(Y_i, y)}_{\coloneqq A_n\lth(y; x)}\\
        &\qquad + \underbrace{\frac{1}{n} \sum_{i=1}^n \left( \weightlambda{\nu^*}(X_i, x) \cdot d^2(Y_i, y)
                - \bbE_{\nu^*} \big[ \weightlambda{\nu^*}(X_i, x) \cdot d^2(Y_i, y) \big] \right)}_{\coloneqq B_n\lth(y; x)}.
    \end{split}
\end{align*}

Next, we show that $A_n\lth(y; x)$ and $B_n\lth(y; x)$ respectively converge to $0$ in probability.
\begin{itemize}
    \item 
    Letting $\hmu_n = \covmean{\calD_n}$, $\hSigma_n = \covcovar{\calD_n}$, and $\hSigma_n\lth = \SVTlambda(\hSigma_n)$ for shorthand, we can write 
    \[
        \weightlambda{\calD_n}(X_i, x) - \weightlambda{\nu^*}(X_i, x)
            = V_n\lth(x) + X_i^\top W_n\lth(x),
    \]
    similarly to \eqref{lem1:pf-weight-diff}, where 
    \begin{equation}\label{lem2:pf-Vn-Wn}
        \begin{split}
            V_n\lth(x) &= -\hmu_n^\top \Big[ \hSigma_n\lth \Big]^\dagger (x - \hmu_n) + \mu^\top \Big[ \Sigma\lth \Big]^\dagger (x - \mu),\\ 
            W_n\lth(x) &= \Big[ \hSigma_n\lth \Big]^\dagger (x - \hmu_n) - \Big[ \Sigma\lth \Big]^\dagger (x - \mu).
        \end{split}
    \end{equation}
    Since $\| \hmu_n - \mu \|_2 = O_P(n^{-1/2})$ and $\| \hSigma_n\lth - \Sigma\lth\| = O_P(n^{-1/2})$ (if $\lambda \not\in \spec \Sigma$) independent of $\lambda > 0$, 
    we also have $|V_n\lth(x)| = O_P(n^{-1/2})$ and $\| W_n\lth(x) \|_2 = O_P(n^{-1/2})$. 
    This implies that $A_n\lth(y; x) = o_P(1)$. 

    \item
    Moreover, we note that if $\|x - \mu\|_\Sigma < \infty$, then the random variable $\weightlambda{\nu^*}(X, x)$ has finite second moment
    \begin{equation}\label{eqn:lem2_weight_moment}
    \begin{split}
        \bbE_{\nu^*}\left[ \weightlambda{\nu^*}(X, x)^2 \right]
            &\leq 2 \left( 1 +  \bbE_{\nu^*} \left[ \Big|(X - \mu)^\top \big[ \Sigma\lth \big]^\dagger (x - \mu) \Big|^2 \right] \right)\\
            &\leq 2 \left( 1 +  \bbE_{\nu^*} \left[ \big\| X - \mu \big\|_{\Sigma\lth}^2 \cdot \big\| x - \mu \big\|_{\Sigma\lth}^2 \right] \right)\\
            &\leq 2 \big\{1 + p \, \|x - \mu\|_\Sigma^2\big\},
    \end{split}
    \end{equation}
    regardless of the value of $\lambda > 0$. 
    When $\diam(\calM) < \infty$, the product $\weightlambda{\nu^*}(X, x) \cdot d^2(Y, y)$ also has finite second moment. 
    Since $B_n\lth(y; x)$ is the sample mean of IID random variables with mean zero and finite variance, it follows that
    \begin{align*}
        \begin{split}
            B_n\lth(y; x) 
            &= O_P\left( \sqrt{\frac{\mathrm{Var}\big[\, \weightlambda{\nu^*}(X_1, x) \cdot d^2(Y_1, y) \,\big]}{ n }} \right) 
            = O_P\big( n^{-1/2} \big).
        \end{split}
    \end{align*}
    
\end{itemize}

\paragraph{Step 2: Uniform control of the fluctuation in objective discrepancy.}
For any $\lambda \in \RR_+$ and any $(x,y) \in \RR^p \times \calM$, we let $Z_n\lth(y;x)$ denote the random variable defined as 
\begin{equation}\label{eqn:Z_definition}
    Z_n\lth(y;x)
        \coloneqq \risklambda{\calD_n}(y;x) - \risklambda{\nu^*}(y;x)
\end{equation}
We observed that
\begin{equation} \label{eqn:lem2_decomp}
\begin{split}
    Z_n\lth\big(y;x\big) - Z_n\lth\big(\regfrechetlambda{\nu^*}(x);x \big)
        &= \left\{  \risklambda{\calD_n}(y;x) - \risklambda{\nu^*}(y;x) \right\} - \left\{  \risklambda{\calD_n}\big(\regfrechetlambda{\nu^*}(x);x \big) - \risklambda{\nu^*}\big(\regfrechetlambda{\nu^*}(x);x \big) \right\}
            \\
        &= \left[ \left\{  \risklambda{\calD_n}(y;x) - \tilde{R}_n(y;x) \right\} - \left\{  \risklambda{\calD_n}\big(\regfrechetlambda{\nu^*}(x);x \big) - \tilde{R}_n\big(\regfrechetlambda{\nu^*}(x);x \big) \right\} \right]
            \\
            &\quad+ 
            \left[ \left\{ \tilde{R}_n(y;x) - \risklambda{\nu^*}(y;x) \right\} - \left\{  \tilde{R}_n\big(\regfrechetlambda{\nu^*}(x) ;x \big) - \risklambda{\nu^*}\big(\regfrechetlambda{\nu^*}(x) ;x \big) \right\} \right]
            \\
        &= \underbrace{\frac{1}{n} \sum_{i=1}^n \left\{ \weightlambda{\calD_n}(X_i, x) - \weightlambda{\nu^*}(X_i, x)\right\} \cdot \ell\lth_i(y; x)}_{=: \mathfrak{A}\lth_n(y;x)} \\
            &\quad
            + \underbrace{\frac{1}{n} \sum_{i=1}^n \left( \weightlambda{\nu^*}(X_i, x) \cdot \ell\lth_i(y; x)
                - \bbE_{\nu^*} \left[ \weightlambda{\nu^*}(X_i, x) \cdot \ell\lth_i(y; x) \right] \right)}_{=: \mathfrak{B}\lth_n(y;x)}
\end{split}
\end{equation}
where $\ell\lth_i(y; x) \coloneqq d^2\big( Y_i, y \big) -  d^2\big( Y_i, \regfrechetlambda{\nu^*}(x) \big)$. 

Next, we analyze the asymptotic behavior of the two terms, $\mathfrak{A}\lth_n(y;x)$ and $\mathfrak{B}\lth_n(y;x)$. 
Specifically, we establish upper bounds on their magnitudes that hold uniformly over a $\delta$-neighborhood of $\varphi\lth(x) = \regfrechetlambda{\nu^*}(x)$, which will be used later in Step 3 of this proof.
\begin{itemize}
    \item 
    Firstly, we observe that for any $\delta> 0$,
    \begin{align}
            &\sup_{y \in B_d \big(\varphi\lth_{\nu^*}(x); \,\delta \big)} \big| \mathfrak{A}_n\lth(y; x) \big| \nonumber\\
                &\qquad\leq \frac{1}{n} \sum_{i=1}^n \big| \weightlambda{\calD_n}(X_i, x) - \weightlambda{\nu^*}(X_i, x) \big| \cdot \sup_{y \in  B_d \big(\varphi\lth_{\nu^*}(x); \,\delta \big)} \big| d^2(Y_i, y) - d^2(Y_i, \regfrechetlambda{\nu^*}(x)) \big| \nonumber\\
                &\qquad\leq 2 \, \mathrm{diam}(\calM) \cdot \Bigg\{ \frac{1}{n} \sum_{i=1}^n \left\{ |V_n\lth(x)| + \| X_i \|_2 \, \|W_n\lth(x)\|_2 \right\} \Bigg\}
                \times \sup_{y \in  B_d \big(\varphi\lth_{\nu^*}(x); \,\delta \big) } d\big( y, \regfrechetlambda{\nu^*}(x) \big) \nonumber\\
                &\qquad = O_P\left( \delta \cdot n^{-1/2} \right), \label{lem2:pf-An}
    \end{align}
    where we used the property of $V_n\lth(x)$ and $W_n\lth(x)$ discussed in the paragraph following \eqref{lem2:pf-Vn-Wn}.
    Since the stochastic magnitudes of $V_n\lth(x)$ and $W_n\lth(x)$ are independent of $\delta$, \eqref{lem2:pf-An} implies that there exists $C_1\lth = C_1\lth(x) > 0$ such that for any $\delta > 0$,
    \begin{equation}\label{lem2:pf-An.2}
        \liminf_{n \to \infty} P\left( \sup_{y \in \calM} \Big\{ |\mathfrak{A}_n\lth(y; x)|: d\big(y, \regfrechetlambda{\nu^*}(x)\big) < \delta \Big\} \leq C_1\lth \cdot \delta \cdot n^{-1/2} \right) =1.
    \end{equation}
    Furthermore, for any $\gamma, \delta \in \RR_+$ such that $0 \leq \gamma < \delta$, let $\mathfrak{E}_n\lth(\gamma, \delta; x)$ be defined as an event such that
    \begin{align}
        \mathfrak{E}_n(\gamma, \delta; x) 
            = \left( \sup_{y \in \calM} \Big\{ |\mathfrak{A}_n\lth(y; x)|: d\big(y, \regfrechetlambda{\nu^*}(x)\big) \in [\gamma, \delta) \Big\} \leq C_1\lth \cdot \delta \cdot n^{-1/2} \right).
        \label{lem2:pf-An-prob}
    \end{align}
    For any $\gamma \in [0, \delta]$, we have $\mathfrak{E}_n(0, \delta; x) \subseteq \mathfrak{E}_n(\gamma, \delta; x)$, and thus, $\liminf_{n \to \infty} P\big( \mathfrak{E}_n(\gamma, \delta; x) \big)$ = 1.
    
    \item 
    Next, we note that 
    \begin{align*}
        \begin{split}
            \big| \weightlambda{\nu^*}(X_i, x) \cdot \ell_i\lth(y; x) \big| 
                \leq 2 \, \mathrm{diam}(\calM) \cdot d\big( y, \regfrechetlambda{\nu^*}(x) \big) \cdot \big| \weightlambda{\nu^*}(X_i, x) \big|.
        \end{split}
    \end{align*}
    Observe that $d\big( y, \regfrechetlambda{\nu^*}(x) \big) \leq \diam(\calM) < \infty$ and recall that $\bbE_{\nu^*}\left[ \weightlambda{\nu^*}(X, x)^2 \right] \leq 2 \big\{1 + p \, \|x - \mu\|_\Sigma^2\big\}$ as shown in Step 1 of this proof, cf. \eqref{eqn:lem2_weight_moment}. 
    %
    It follows from the uniform entropy condition \ref{cond:entropy}, Theorem 2.7.11, and Theorem 2.14.2 in \cite{vaart1996weak} that there exists $\Dent = \Dent(x) > 0$ such that for all $\delta \in [0, \Dent)$,
    \begin{align}    
        \begin{split}
            &\Exp \bigg[ \sup_{y \in \calM} \Big\{ \big| \mathfrak{B}_n\lth(y; x) \big|: d\big(y, \regfrechetlambda{\nu^*}(x)\big) < \delta \Big\} \bigg]\\
                &\qquad\leq 2 \, \mathrm{diam}(\calM) \cdot \delta\cdot n^{-1/2}\, \sqrt{1 + p \, \|x - \mu\|_\Sigma^2}  \int_0^1 \sqrt{1 + \log \mathfrak{N} \big( B_d (\varphi\lth(x); \,\delta ), \delta \epsilon \big)} \, \mathrm{d}\epsilon\\
                &\qquad\leq C_2\lth \cdot \delta \cdot n^{-1/2}
        \end{split}
        \label{lem2:pf-Bn-sup}
    \end{align}
    where $C_2\lth = 2 \, (\Cent + 1) \cdot \diam(\calM) \cdot \sqrt{1 + p \, \|x - \mu\|_\Sigma^2}$ is independent of $\delta > 0 $ and $n \geq 1$.
\end{itemize}

\paragraph{Step 3: Concluding the proof.}

Lastly, we combine the results from Steps 1-2 to show that, for any $\eta > 0$, there exist $K = K(\eta) > 0$ and $N = N(\eta) \geq 1$ such that $P\Big( d \big(\regfrechetlambda{\calD_n}(x), \regfrechetlambda{\nu^*}(x) \big) > 2^K \, n^{-\beta} \Big) < \eta$ 
for any $n \geq N$, where $\beta > 0$ is an absolute constant that will be determined later in this proof.
We prove this claim using the peeling technique, in a similar manner as we did in the proof of Lemma \ref{lem:bias_population}. 
To avoid cluttered notation, we let $\distx = d \big(\regfrechetlambda{\calD_n}(x), \regfrechetlambda{\nu^*}(x) \big)$ in the rest of this proof.

For any fixed $K \in \NN$ and a sufficiently large $n = n(K) \geq 1$ satisfying $2^K n^{-\beta} < D_* \coloneqq \Dgrth \wedge \Dent$, we observe that
\begin{align}
    P\Big( \distx > 2^K \, n^{-\beta} \Big)
        &= P\Big( \distx \geq D_* \Big) 
        + P\Big( 2^K \, n^{-\beta} \leq \distx < D_* \Big) \label{lem2:pf-step3-prob0}
\end{align}
where we used $P(A) \leq P(B^c) + P(A \cap B)$ to get the inequality. 
As we know that $P\Big( \distx \geq D_* \Big) = o(1)$ by Step 1 of this proof, we focus on showing an upper bound for the other term, $P\big( 2^K \, n^{-\beta} \leq \distx < D_* \big)$.

\bigskip
\noindent
\emph{Step 3-A: Decomposition of $P\big( 2^K \, n^{-\beta} \leq \distx < D_* \big)$}. 
For each $n, k \in \NN$, we define
\begin{equation}\label{eqn:FG}
    \begin{split}
        \mathfrak{F}_{n,k} 
            &=  \bigcap_{k'=K}^{k} \mathfrak{E}_n\lth \big( 2^{k'} n^{-\beta},\, 2^{k'+1} n^{-\beta} \wedge D_*; x \big),\\
        \mathfrak{G}_{n,k} 
            &= \bigg( \bigcap_{k'=K}^{k-1} \mathfrak{E}_n\lth \big( 2^{k'} n^{-\beta},\, 2^{k'+1} n^{-\beta} \wedge D_*; x \big) \bigg) \cap \mathfrak{E}_n\lth \big( 2^{k} n^{-\beta},\, 2^{k+1} n^{-\beta} \wedge D_*; x \big)^c,
     \end{split}
\end{equation}
where we set $\mathfrak{F}_{n,K-1}$ to be the entire event space so that $\mathfrak{G}_{n,K} = \big(\mathfrak{F}_{n,K}\big)^c$.
It is worth mentioning that $\mathfrak{G}_{n,k}$ and $\mathfrak{G}_{n,k'}$ are mutually exclusive for any $k \neq k' \geq K$, and we will use this property when concluding the proof in Step 3-C below.

Now, we observe that 
\begin{align*}
    P\Big( 2^K \, n^{-\beta} \leq \distx < D_* \Big)
        &\leq 
            P \Big( \mathfrak{E}_n\lth \big( 2^{K} n^{-\beta},\, 2^{K+1} n^{-\beta} \wedge D_*; x \big)^c \Big)\\
            &\quad 
            + P\bigg(\Big( 2^K \, n^{-\beta} \leq \distx < D_* \Big) \cap \mathfrak{E}_n\lth \big( 2^{k'} n^{-\beta},\, 2^{k'+1} n^{-\beta} \wedge D_*; x \big) \bigg)
            \\
        &= P \big( \mathfrak{G}_{n,K} \big) + P\bigg(\Big( 2^K \, n^{-\beta} \leq \distx < D_* \Big) \cap \mathfrak{F}_{n,K} \bigg)\\
        &=  P \big( \mathfrak{G}_{n,K} \big) + P\bigg(\Big( 2^K \, n^{-\beta} \leq \distx < 2^{K+1} \, n^{-\beta} \wedge D_* \Big) \cap \mathfrak{F}_{n,K} \bigg)\\
            &\quad
            + P\bigg(\Big( 2^{K+1} \, n^{-\beta} \leq \distx < D_* \Big) \cap \mathfrak{F}_{n,K} \bigg)
\end{align*}
and that for every $k \geq K$,
\begin{equation*}
    P\bigg( \Big( 2^{k+1} \, n^{-\beta} \leq \distx <  D_* \Big) \cap \mathfrak{F}_{n,k} \bigg)
        \leq P\bigg( \Big( 2^{k+1} \, n^{-\beta} \leq \distx <  D_* \Big) \cap \mathfrak{F}_{n,k+1} \bigg) + P\big( \mathfrak{G}_{n,k+1} \big).
\end{equation*}
As a result, we have
\begin{equation}\label{lem2:pf-step3-prob1}
     P\Big( 2^K \, n^{-\beta} \leq \distx < D_* \Big)
        = \sum_{k=K}^{\infty} P\big(\mathfrak{G}_{n,k}\big)
            + \sum_{k=K}^{\infty} \underbrace{ P\bigg( \Big( 2^{k} \, n^{-\beta} \leq \distx < 2^{k+1} \, n^{-\beta} \wedge D_* \Big) \cap \mathfrak{F}_{n,k} \bigg) }_{=: \mathfrak{C}_{n,k}}.
\end{equation}

\bigskip
\noindent
\emph{Step 3-B: Controlling $\mathfrak{C}_{n,k}$}. 
Next, we show an upper bound for $\mathfrak{C}_{n,k}$. 
Suppose that $2^k n^{-\beta} \leq \distx < 2^{k+1} n^{-\beta} \wedge D_* $ and the event $\mathfrak{F}_{n,k}$ occurs. 
Then it follows from Assumption \ref{cond:growth} that
\begin{align}
    &\Cgrth \cdot \distx^{\alpha} \nonumber\\
        &\qquad\leq \risklambda{\nu^*} \big(\regfrechetlambda{\calD_n}(x); x \big) - \risklambda{\nu^*} \big(\regfrechetlambda{\nu^*}(x); x \big)     \nonumber\\
        &\qquad\leq \left\{ \risklambda{\nu^*} \big(\regfrechetlambda{\calD_n}(x); x \big) - \risklambda{\nu^*} \big(\regfrechetlambda{\nu^*}(x); x\big) \right\} 
            + \underbrace{\left\{ \risklambda{\calD_n} \big(\regfrechetlambda{\nu^*}(x); x \big) - \risklambda{\calD_n} \big(\regfrechetlambda{\calD_n}(x); x \big) \right\}}_{\geq 0}     \nonumber\\
        &\qquad= Z_n\lth\big(\regfrechetlambda{\nu^*}; x \big) - Z_n\lth\big( \regfrechetlambda{\calD_n}(x); x \big)  
            && \text{cf. }\eqref{eqn:Z_definition} \nonumber\\
        &\qquad\leq \Big| \mathfrak{A}\lth_n \big(\regfrechetlambda{\nu^*}(x); x \big) \Big| + \Big| \mathfrak{B}\lth_n \big(\regfrechetlambda{\nu^*}(x); x \big) \Big|
            && \because\eqref{eqn:lem2_decomp}     \nonumber\\
        &\qquad\leq \sup_{y \in \calM}\left\{ \Big| \mathfrak{A}\lth_n \big(y; x \big) \Big| +  \Big| \mathfrak{B}\lth_n \big(y; x \big) \Big| : 
            2^k \, n^{-\beta} \leq d \big(y, \regfrechetlambda{\nu^*}(x) \big) < 2^{k+1} \, n^{-\beta} \wedge D_* \big) \right\}
            \nonumber\\
        &\qquad\leq C_1\lth \cdot \big( 2^{k+1} n^{-\beta} \wedge D_* \big) \cdot n^{-1/2} + \sup_{y \in \calM} \Big\{ \big| \mathfrak{B}\lth_n(y; x)\big|: d \big(y, \regfrechetlambda{\nu^*}(x) \big) < 2^{k+1} n^{-\beta} \wedge D_* \Big\}.
            && \because\eqref{lem2:pf-An-prob}
                \label{lem2:pf-step3-prob2}
\end{align}
Therefore, we obtain that for each $k \geq K$,
\begin{align}
    \mathfrak{C}_{n,k} 
        &= P\bigg( \Big( 2^k \, n^{-\beta} \leq \distx <2^{k+1} \, n^{-\beta} \wedge D_* \Big) \cap \mathfrak{F}_{n,k} \bigg)   \nonumber\\
        &\leq P\bigg( \Big( \distx^{\alpha} \geq \big( 2^k \, n^{-\beta} \big)^{\alpha} \Big) \cap  \mathfrak{F}_{n,k} \bigg)    \nonumber\\
        &\leq \frac{ C_1\lth \cdot \big( 2^{k+1} n^{-\beta} \wedge D_* \big) \cdot n^{-1/2} + \bbE \left[ \sup_{y \in \calM} \Big\{ \big| \mathfrak{B}\lth_n(y; x)\big|: d \big(y, \regfrechetlambda{\nu^*}(x) \big) < 2^{k+1} n^{-\beta} \wedge D_* \Big\} \right] }{\Cgrth \cdot \big( 2^k \, n^{-\beta} \big)^{\alpha} }
            \nonumber\\
            &\hspace{250pt}\because \eqref{lem2:pf-step3-prob2} ~\&~ \text{Markov's inequality}\nonumber\\
        &\leq \frac{ \big( C_1\lth + C_2\lth \big) \cdot \big( 2^{k+1} n^{-\beta} \wedge D_* \big) \cdot n^{-1/2} }{\Cgrth \cdot \big( 2^k \, n^{-\beta} \big)^{\alpha} }
            \hspace{65pt} \because \eqref{lem2:pf-Bn-sup}
            \label{lem2:pf-step3-prob3}
\end{align}

\bigskip
\noindent
\emph{Step 3-C: Concluding Step 3}. 
Combining \eqref{lem2:pf-step3-prob0}, \eqref{lem2:pf-step3-prob1}, and \eqref{lem2:pf-step3-prob3}, we have
\begin{align*}
    P\Big( \distx > 2^K \, n^{-\beta} \Big)
        &\leq 
        \frac{2 \big(C_1\lth + C_2\lth \big)}{\Cgrth} \, n^{-\frac{1}{2} + \beta(\alpha - 1)} \sum_{k = K}^\infty  2^{-k(\alpha - 1)} \\
        &\qquad + \underbrace{P\Big( \distx \geq D_* \Big)}_{=o(1)  ~\because\,\text{Step 1 of this proof}}
        + \quad \sum_{k=K}^{\infty}  P\Big( \mathfrak{G}_{n,k}\Big).
\end{align*}
Moreover, $\mathfrak{G}_{n,k}$ are mutually exclusive, and thus, 
\begin{align*}
    \sum_{k=K}^{\infty}  P\Big( \mathfrak{G}_{n,k}\Big)
        = P\left( \bigcup_{k=K}^{\infty} \mathfrak{G}_{n,k}\right)
        = P \left( \bigg( \bigcup_{k=K}^{\infty} \mathfrak{E}_n\lth \big( 2^{k} n^{-\beta},\, 2^{k+1} n^{-\beta} \wedge D_*; x \big) \bigg)^c \right)
        \to 0 \quad \because\eqref{lem2:pf-An-prob}
\end{align*}
Finally, we obtain the desired result by letting $\beta = \frac{1}{2(\alpha - 1)}$.

\end{proof}

\section{Proof of Theorem \ref{thm:denoising_simple}}\label{sec:proof_denoising}
In this section, we prove Theorem \ref{thm:denoising} that establishes an upper bound on $d\big( \regfrechetlambda{\tcalD_n}(x),\, \regfrechetlambda{\calD_n}(x) \big)$.
%
This section is organized as follows. 
Firstly, in Section \ref{sec:noisy_lemmas}, we present several useful results from matrix perturbation theory as lemmas.
Next, in Section \ref{sec:weight_stability}, we provide a key lemma (Lemma \ref{lem:weight_stability}) that establishes the stability of the weight function when there is covariate noise. 
Lastly, in Section \ref{sec:proof_denoising_completing}, we state and prove Theorem \ref{thm:denoising}, from which Theorem \ref{thm:denoising_simple} can be easily derived.

\subsection{Useful lemmas}\label{sec:noisy_lemmas}
\begin{definition}
    Let $n, p \in \NN$ and let $\bfM \in \RR^{n \times p}$. 
    The \emph{row projection matrix} for $\bfM$, denoted by $\Prow_{\bfM} \in \RR^{p \times p}$, is a matrix such that
    \begin{equation}\label{eqn:rowproj_k}
        \Prow_{\bfM} \coloneqq \bfM^{\dagger} \cdot \bfM.
    \end{equation}
    and the \emph{column projection matrix} for $\bfM$, denoted by $\Pcol_{\bfM} \in \RR^{n \times n}$, is a matrix such that
    \begin{equation}\label{eqn:colproj_k}
        \Pcol_{\bfM} \coloneqq \bfM \cdot \bfM^{\dagger}.
    \end{equation}
\end{definition}

We recall from \eqref{eqn:svt} that for any $\lambda \in \RR_+$, the singular value thresholding (SVT) operator $\SVTlambda$ is defined such that
\[
    \bfM = \sum_{i=1}^{\min\{n,p\}} s_i \cdot u_i v_i^{\top} \text{ is a SVD}
    \qquad\mapsto\qquad
    \SVTlambda(\bfM) = \sum_{i=1}^{\min\{n,p\}} s_i \cdot \indic\{ s_i > \lambda \} \cdot u_i v_i^{\top}.
\]
In the rest of this section, we let $\bfM\lth \coloneqq \SVTlambda(\bfM)$ for shorthand.

\begin{lemma}[Properties of the row/column projection matrices]\label{lem:proj_properties}
    Let $n, p \in \NN$, and $\bfM \in \RR^{n \times p}$. For any $\lambda \in \RR_+$, the following statements are true.
    \begin{enumerate}
        \item 
        $\Prow_{\bfM\lth}$ defines a projection in $\RR^p$ and $\rank \Prow_{\bfM\lth} = \rank \bfM\lth$.
        \item 
        $\Pcol_{\bfM\lth}$ defines a projection in $\RR^n$ and $\rank \Pcol_{\bfM\lth} = \rank \bfM\lth$.
        \item
        $\bfM \Prow_{\bfM\lth} \bfM^{\dagger} = \Pcol_{\bfM\lth}$ and $\bfM^{\dagger} \Pcol_{\bfM\lth} \bfM = \Prow_{\bfM\lth}$.
    \end{enumerate}
\end{lemma}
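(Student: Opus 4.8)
The plan is to prove all three items by a single explicit computation with a singular value decomposition, which trivializes everything once the index bookkeeping is set up. First I would fix an SVD $\bfM = \sum_{i=1}^{r} s_i \bfu_i \bfv_i^{\top}$ with $r = \rank \bfM$, $s_1 \geq \cdots \geq s_r > 0$, $\{\bfu_i\}_{i\in[r]}$ orthonormal in $\RR^n$ and $\{\bfv_i\}_{i\in[r]}$ orthonormal in $\RR^p$. Since the singular values are ordered, $\{i : s_i > \lambda\}$ is an initial segment $\{1,\dots,k\}$ with $k = k(\lambda) \leq r$ (and $k=0$ if empty, in which case $\bfM\lth$ is the zero matrix and all three claims hold trivially); hence $\bfM\lth = \SVTlambda(\bfM) = \sum_{i=1}^{k} s_i \bfu_i \bfv_i^{\top}$, and its Moore--Penrose pseudoinverse is $(\bfM\lth)^{\dagger} = \sum_{i=1}^{k} s_i^{-1} \bfv_i \bfu_i^{\top}$, while $\bfM^{\dagger} = \sum_{l=1}^{r} s_l^{-1} \bfv_l \bfu_l^{\top}$. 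Although the SVD is not unique when singular values repeat, the maps $\SVTlambda$, $(\cdot)^{\dagger}$, $\Prow_{(\cdot)}$, $\Pcol_{(\cdot)}$ are all basis-independent, so fixing one decomposition costs nothing.

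Next I would carry out the products, using $\bfu_i^{\top}\bfu_j = \bfv_i^{\top}\bfv_j = \indic\{i=j\}$ throughout. For items 1 and 2, orthonormality collapses the double sums to $\Prow_{\bfM\lth} = (\bfM\lth)^{\dagger}\bfM\lth = \sum_{i=1}^{k} \bfv_i \bfv_i^{\top}$ and $\Pcol_{\bfM\lth} = \bfM\lth (\bfM\lth)^{\dagger} = \sum_{i=1}^{k} \bfu_i \bfu_i^{\top}$; each is manifestly symmetric and idempotent, hence an orthogonal projection --- onto $\rowsp(\bfM\lth) = \mathrm{span}\,\{\bfv_1,\dots,\bfv_k\}$ in $\RR^p$ and onto $\colsp(\bfM\lth) = \mathrm{span}\,\{\bfu_1,\dots,\bfu_k\}$ in $\RR^n$, respectively --- of rank $k = \rank \bfM\lth$. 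For item 3, the \emph{key observation} is that because $k \leq r$, right-multiplying $\bfM$ by $\Prow_{\bfM\lth}$ keeps exactly the first $k$ SVD terms: $\bfM\,\Prow_{\bfM\lth} = \sum_{i=1}^{r}\sum_{j=1}^{k} s_i \bfu_i (\bfv_i^{\top}\bfv_j)\bfv_j^{\top} = \sum_{i=1}^{k} s_i \bfu_i \bfv_i^{\top} = \bfM\lth$; then multiplying on the right by $\bfM^{\dagger}$ gives $\bfM\,\Prow_{\bfM\lth}\,\bfM^{\dagger} = \bfM\lth \bfM^{\dagger} = \sum_{i=1}^{k}\bfu_i\bfu_i^{\top} = \Pcol_{\bfM\lth}$. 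The second identity is symmetric: $\Pcol_{\bfM\lth}\bfM = \sum_{i=1}^{k} s_i\bfu_i\bfv_i^{\top} = \bfM\lth$, so $\bfM^{\dagger}\Pcol_{\bfM\lth}\bfM = \bfM^{\dagger}\bfM\lth = \sum_{i=1}^{k}\bfv_i\bfv_i^{\top} = \Prow_{\bfM\lth}$.

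There is no genuine obstacle here; the only thing to watch is the inequality $k \leq r$ (so that the thresholded sum is a consistent sub-block of the full SVD, sharing the same left/right singular vectors) and the degenerate case $k = 0$, with everything else being pure orthonormality bookkeeping. If one prefers a coordinate-free write-up, items 1--2 follow from the Penrose axioms (symmetry and idempotency of $(\bfM\lth)^{\dagger}\bfM\lth$ and $\bfM\lth(\bfM\lth)^{\dagger}$, with ranges $\rowsp(\bfM\lth)$ and $\colsp(\bfM\lth)$), and item 3 follows from $\colsp(\bfM\lth)\subseteq\colsp(\bfM)$, $\rowsp(\bfM\lth)\subseteq\rowsp(\bfM)$ together with the fact that $\bfM^{\dagger}\bfM$ and $\bfM\bfM^{\dagger}$ are the orthogonal projections onto $\rowsp(\bfM)$ and $\colsp(\bfM)$; but the SVD route above is the most transparent and I would use it.
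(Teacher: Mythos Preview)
Your proposal is correct and follows essentially the same approach as the paper: fix a compact SVD of $\bfM$, express $\bfM\lth$, $(\bfM\lth)^{\dagger}$, and $\bfM^{\dagger}$ in those coordinates, and then read off $\Prow_{\bfM\lth} = \sum_i \indic\{s_i>\lambda\}\,\bfv_i\bfv_i^{\top}$, $\Pcol_{\bfM\lth} = \sum_i \indic\{s_i>\lambda\}\,\bfu_i\bfu_i^{\top}$, and the identities in item~3 by orthonormality of the singular vectors. Your write-up is slightly more careful (ordering the singular values to use a cutoff index $k$, noting the $k=0$ degenerate case and SVD non-uniqueness), but the argument is the same.
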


\begin{proof}
    Let $r = \rank \bfM$ and consider a compact singular value decomposition (SVD) of $\bfM$:
    \[
        \bfM = \sum_{i=1}^{r} s_i \cdot u_i v_i^{\top}
    \] 
    where $ s_1, \dots, s_r$ are non-zero singular values of $\bfM$. 
    Noticing that 
    \[
        \bfM\lth = \SVTlambda(\bfM) = \sum_{i=1}^{r} \indic\{ s_i > \lambda \} \cdot u_i v_i^{\top}
    \]
    and that $\bfM^{\dagger} = \sum_{i=1}^{r} s_i^{-1} \cdot v_i u_i^{\top}$, the three conclusions of the lemma follow straightforwardly from the orthonormality of singular vectors.
    \begin{itemize}
        \item 
        $\Prow_{\bfM\lth} = \sum_{i=1}^{r} v_i v_i^{\top} \cdot \indic\{ s_i > \lambda\}$ is the projection onto the row space of $\bfM\lth$.
        \item 
        $\Pcol_{\bfM\lth} = \sum_{i=1}^{r} u_i u_i^{\top} \cdot \indic\{ s_i > \lambda\}$ is the projection onto the column space of $\bfM\lth$.
        \item
        Due to the orthonormality of singular vectors,
        \begin{align*}
            \bfM \Prow_{\bfM\lth} \bfM^{\dagger} 
                &= \left( \sum_{i=1}^{r} s_i \cdot u_i v_i^{\top} \right) 
                    \left( \sum_{i=1}^{r} v_i v_i^{\top} \cdot \indic\{ s_i > \lambda\} \right)
                    \left( \sum_{i=1}^{r} s_i^{-1} \cdot v_i u_i^{\top} \right)\\
                &= \sum_{i=1}^{r} u_i u_i^{\top} \cdot \indic\{ s_i > \lambda\}\\
                &= \Pcol_{\bfM\lth},
        \end{align*}
        and likewise, $\bfM^{\dagger} \Pcol_{\bfM\lth} \bfM = \Prow_{\bfM\lth}$.
    \end{itemize}
\end{proof}

In addition, we collect two classical results from matrix perturbation theory and state them as lemmas.

\begin{lemma}[{\cite[Theorem 3.2]{stewart1977perturbation}}]\label{lem:pseudoinverse}
    Let $\bfX, \bfZ \in \RR^{n \times p}$. Then the following equation is true:
    \begin{equation}\label{eqn:pinv_perturb}
        \bfZ^{\dagger} - \bfX^{\dagger}
            = - \bfZ^{\dagger} \Pcol_{\bfZ} ( \bfZ - \bfX ) \Prow_{\bfX} \bfX^{\dagger}
                + \bfZ^{\dagger} \Pcol_{\bfZ} {\Pcol_{\bfX}}^{\perp} 
                - {\Prow_{\bfZ}}^{\perp} \Prow_{\bfX} \bfX^{\dagger}
    \end{equation}
    where ${\Pcol_{\bfX}}^{\perp} = \bfI_n - \Pcol_{\bfX}$ and ${\Prow_{\bfZ}}^{\perp} = \bfI_p - \Prow_{\bfZ}$.
\end{lemma}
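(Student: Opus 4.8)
The plan is to verify \eqref{eqn:pinv_perturb} directly, using only the defining Penrose identities of the Moore--Penrose pseudoinverse together with the facts collected in Lemma \ref{lem:proj_properties}. The key ingredients are the ``absorption'' identities
\[
    \bfZ^{\dagger} \Pcol_{\bfZ} = \bfZ^{\dagger}, \qquad
    \Prow_{\bfZ} \bfZ^{\dagger} = \bfZ^{\dagger}, \qquad
    \bfX^{\dagger} \Pcol_{\bfX} = \bfX^{\dagger}, \qquad
    \Prow_{\bfX} \bfX^{\dagger} = \bfX^{\dagger},
\]
each of which follows immediately from $\Pcol_{\bfM} = \bfM \bfM^{\dagger}$, $\Prow_{\bfM} = \bfM^{\dagger} \bfM$, and the Penrose relation $\bfM^{\dagger} \bfM \bfM^{\dagger} = \bfM^{\dagger}$.

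First I would introduce a telescoping decomposition of $\bfZ^{\dagger} - \bfX^{\dagger}$ by adding and subtracting the two ``mixed'' terms $\bfZ^{\dagger} \bfX \bfX^{\dagger}$ and $\bfZ^{\dagger} \bfZ \bfX^{\dagger}$:
\begin{align*}
    \bfZ^{\dagger} - \bfX^{\dagger}
        &= \bfZ^{\dagger}\big( \bfI_n - \bfX \bfX^{\dagger} \big)
            - \bfZ^{\dagger} ( \bfZ - \bfX ) \bfX^{\dagger}
            + \big( \bfZ^{\dagger} \bfZ - \bfI_p \big) \bfX^{\dagger} \\
        &= \bfZ^{\dagger}\big( \bfI_n - \Pcol_{\bfX} \big)
            - \bfZ^{\dagger} ( \bfZ - \bfX ) \bfX^{\dagger}
            - \big( \bfI_p - \Prow_{\bfZ} \big) \bfX^{\dagger}.
\end{align*}
Then, invoking the absorption identities, I would rewrite the leftmost $\bfZ^{\dagger}$ in the first two summands as $\bfZ^{\dagger}\Pcol_{\bfZ}$ and the rightmost $\bfX^{\dagger}$ in the last two summands as $\Prow_{\bfX}\bfX^{\dagger}$. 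Since these substitutions change nothing, the display becomes exactly
\[
    -\bfZ^{\dagger} \Pcol_{\bfZ} ( \bfZ - \bfX ) \Prow_{\bfX} \bfX^{\dagger}
    + \bfZ^{\dagger} \Pcol_{\bfZ} {\Pcol_{\bfX}}^{\perp}
    - {\Prow_{\bfZ}}^{\perp} \Prow_{\bfX} \bfX^{\dagger},
\]
with ${\Pcol_{\bfX}}^{\perp} = \bfI_n - \Pcol_{\bfX}$ and ${\Prow_{\bfZ}}^{\perp} = \bfI_p - \Prow_{\bfZ}$, which is precisely the claimed right-hand side.

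Since every step is an identity built from elementary pseudoinverse relations, there is no substantive obstacle here; the only point requiring mild care is to confirm that each absorption identity is genuinely valid (for instance $\Prow_{\bfX}\bfX^{\dagger} = \bfX^{\dagger}\bfX\bfX^{\dagger} = \bfX^{\dagger}$) and that inserting the projectors does not silently alter any factor on either side. Alternatively, as the statement is classical, one may simply cite \cite[Theorem 3.2]{stewart1977perturbation} without reproducing the derivation.
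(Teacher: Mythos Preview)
Your proof is correct: the telescoping identity you wrote expands to $\bfZ^{\dagger} - \bfZ^{\dagger}\bfX\bfX^{\dagger} - \bfZ^{\dagger}\bfZ\bfX^{\dagger} + \bfZ^{\dagger}\bfX\bfX^{\dagger} + \bfZ^{\dagger}\bfZ\bfX^{\dagger} - \bfX^{\dagger} = \bfZ^{\dagger} - \bfX^{\dagger}$, and the absorption identities $\bfZ^{\dagger}\Pcol_{\bfZ} = \bfZ^{\dagger}$ and $\Prow_{\bfX}\bfX^{\dagger} = \bfX^{\dagger}$ are immediate from $\bfM^{\dagger}\bfM\bfM^{\dagger} = \bfM^{\dagger}$, so inserting the projectors changes nothing and yields exactly \eqref{eqn:pinv_perturb}. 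The paper itself does not prove this lemma at all; it simply quotes it from \cite[Theorem 3.2]{stewart1977perturbation}, so your short self-contained verification goes beyond what the paper supplies while remaining entirely consistent with it.
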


\begin{lemma}[{\cite[Theorems 2.4 \& 2.5]{chen2016perturbation}}]\label{lem:proj_perturb}
    Let $\bfX, \bfZ \in \RR^{n \times p}$. Then
    \begin{equation}
        \left\| \Pcol_{\bfZ} - \Pcol_{\bfX} \right\|
            \leq \max \left\{ \left\| (\bfZ - \bfX) \bfX^{\dagger} \right\|, \left\| (\bfZ - \bfX) \bfZ^{\dagger} \right\| \right\}.
    \end{equation}
    Moreover, if $\rank \bfX = \rank \bfZ$, then 
    \begin{equation}
        \left\| \Pcol_{\bfZ} - \Pcol_{\bfX} \right\|
            \leq \min \left\{ \left\| (\bfZ - \bfX) \bfX^{\dagger} \right\|, \left\| (\bfZ - \bfX) \bfZ^{\dagger} \right\| \right\}.
    \end{equation}
\end{lemma}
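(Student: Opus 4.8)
The plan is to reduce both inequalities to two ingredients: a general identity for the spectral norm of the difference of two orthogonal projections, and the elementary algebraic observation that $(\bfZ - \bfX)\bfX^{\dagger}$ already encodes how far $\colsp(\bfX)$ tilts away from $\colsp(\bfZ)$. Throughout, write $P \coloneqq \Pcol_{\bfX} = \bfX \bfX^{\dagger}$ and $Q \coloneqq \Pcol_{\bfZ} = \bfZ \bfZ^{\dagger}$; these are the orthogonal projections in $\RR^n$ onto $\colsp(\bfX)$ and $\colsp(\bfZ)$, so $P = P^{\top} = P^2$, $Q = Q^{\top} = Q^2$, and $\|P\|, \|Q\|, \|\bfI_n - P\|, \|\bfI_n - Q\| \le 1$.

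First I would establish the projection identity
\[
    \|P - Q\| = \max\big\{\|(\bfI_n - Q)P\|,\ \|(\bfI_n - P)Q\|\big\}.
\]
Decomposing $P - Q = P(\bfI_n - Q) - (\bfI_n - P)Q$ and setting $A \coloneqq P(\bfI_n - Q)$, $B \coloneqq (\bfI_n - P)Q$, idempotence and symmetry give $A^{\top} B = (\bfI_n - Q)P(\bfI_n - P)Q = 0$ and $A B^{\top} = P(\bfI_n - Q)Q(\bfI_n - P) = 0$. Hence $(A - B)^{\top}(A - B) = A^{\top}A + B^{\top}B$, and since $A^{\top}A$ is supported on $\colsp(\bfI_n - Q)$ while $B^{\top}B$ is supported on the orthogonal subspace $\colsp(Q)$, the norm of the sum equals $\max\{\|A^{\top}A\|, \|B^{\top}B\|\}$. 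Taking square roots and using $\|A\| = \|A^{\top}\|$ yields the identity.

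Next I would record the algebraic identity $(\bfI_n - Q)P = -(\bfI_n - Q)(\bfZ - \bfX)\bfX^{\dagger}$, which holds because $(\bfI_n - Q)\bfZ = \bfZ - \bfZ\bfZ^{\dagger}\bfZ = 0$, so $(\bfI_n - Q)\bfX = -(\bfI_n - Q)(\bfZ - \bfX)$; right-multiplying by $\bfX^{\dagger}$ and using $\bfX\bfX^{\dagger} = P$ gives the claim. Submultiplicativity together with $\|\bfI_n - Q\| \le 1$ then yields $\|(\bfI_n - Q)P\| \le \|(\bfZ - \bfX)\bfX^{\dagger}\|$, and interchanging $\bfX \leftrightarrow \bfZ$ gives $\|(\bfI_n - P)Q\| \le \|(\bfZ - \bfX)\bfZ^{\dagger}\|$. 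Substituting both into the max identity proves the first (unconditional) inequality.

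For the refined bound under $\rank \bfX = \rank \bfZ$, the extra fact I need is the \emph{equality} $\|(\bfI_n - Q)P\| = \|(\bfI_n - P)Q\|$, which is also the step I expect to be the main obstacle. I would prove it spectrally: since $(\bfI_n-Q)^2 = \bfI_n - Q$, one has $\|(\bfI_n - Q)P\|^2 = \|P(\bfI_n - Q)P\| = 1 - \lambda_{\min}\big(PQP|_{\colsp P}\big)$, where $PQP|_{\colsp P}$ is the restriction of $PQP$ to $\colsp(P)$, a positive semidefinite operator with spectrum in $[0,1]$; symmetrically $\|(\bfI_n - P)Q\|^2 = 1 - \lambda_{\min}\big(QPQ|_{\colsp Q}\big)$. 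Now $PQP$ and $QPQ$ have identical nonzero eigenvalues, because the nonzero spectra of $XY$ and $YX$ coincide with $X = PQ$, $Y = QP$, giving $XY = PQP$ and $YX = QPQ$; and since $\rank P = \rank Q = r$ with $\rank(PQP) = \rank(QPQ)$, the two $r$-dimensional restrictions share the same multiplicity of the eigenvalue $0$ as well. Thus their full spectra, hence their minima, agree, giving the equality; combined with the max identity, $\|P - Q\|$ equals both off-diagonal norms, each already bounded by $\|(\bfZ - \bfX)\bfX^{\dagger}\|$ and $\|(\bfZ - \bfX)\bfZ^{\dagger}\|$ respectively, so $\|P - Q\| \le \min\{\cdots\}$. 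The subtlety is that upgrading ``each dominated by one quantity'' to ``both equal'' genuinely requires matching the full eigenvalue sets of $PQP$ and $QPQ$, and the hypothesis $\rank \bfX = \rank \bfZ$ is indispensable here: without it $PQP$ may carry an eigenvalue $0$ (a direction of $\colsp P$ orthogonal to $\colsp Q$) with no counterpart in $QPQ$, forcing one off-diagonal norm up to $1$ and breaking the symmetry.
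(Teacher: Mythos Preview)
The paper does not prove this lemma; it is stated with a citation to Chen et al.\ (2016) and invoked as a black box in the proof of Lemma~\ref{lem:weight_stability}. Your proposal therefore supplies what the paper omits, and the argument is correct. The decomposition $P - Q = P(\bfI_n - Q) - (\bfI_n - P)Q$ into pieces whose products vanish cleanly yields the max-identity, and the algebraic observation $(\bfI_n - Q)P = -(\bfI_n - Q)(\bfZ - \bfX)\bfX^{\dagger}$ immediately gives both one-sided bounds and hence the unconditional inequality. For the equal-rank refinement, your spectral comparison of $PQP|_{\colsp P}$ and $QPQ|_{\colsp Q}$ via the $XY$/$YX$ nonzero-eigenvalue correspondence (with $X = PQ$, $Y = QP$) is the right mechanism; the only delicate step---matching the multiplicity of the zero eigenvalue on the two restricted spaces---is precisely where the hypothesis $\rank \bfX = \rank \bfZ$ enters, and you handle it correctly by combining $\rank(PQP) = \rank(QPQ)$ with $\dim\colsp P = \dim\colsp Q$.
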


\subsection{Stability of the weights under (small) perturbation in covariates}\label{sec:weight_stability}

Let $\calD_n = \left\{ (x_i, y_i) \in \RR^p \times \calM: i \in [n] \right\}$ and $\tcalD_n = \left\{ (z_i, y_i) \in \RR^p \times \calM: i \in [n] \right\}$ be two sets in $\RR^p \times \calM$. 
We may identify these sets with their empirical distributions. 
Recall the definition of $\weightlambda{\nu}$ from \eqref{eqn:frechet_estimator}: for any probability measure $\nu$ on $\RR^p \times \calM$, any $\lambda \in \RR_+$, and any $x, x' \in \RR^p$,
\[
    \weightlambda{\nu} (x', x) = 1 + ( x' - \covmean{\nu} )^\top \left[ \SVTlambda \big( \covcovar{\nu} \big) \right]^\dagger ( x - \covmean{\nu} )
\]
where $\covmean{\nu} = \Exp_{(X,Y) \sim \nu}(X)$ and $\covcovar{\nu} = \Var_{(X,Y) \sim \nu}(X)$, cf. \eqref{eqn:sample_mean_var}. 
We define the \emph{weight vectors} induced by $\calD_n$ and $\tcalD_n$ as follows: for any $\lambda \in \RR_+$ and any $x \in \RR^p$,
\begin{equation}\label{eqn:weight_vector}
\begin{aligned}
    \bweightlambda{\calD_n}(x) 
        &\coloneqq \begin{bmatrix} \weightlambda{\calD_n}(x_1, x)    & \cdots & \weightlambda{\calD_n}(x_n, x) \end{bmatrix} \in \RR^n,\\
    \bweightlambda{\tcalD_n}(x) 
        &\coloneqq \begin{bmatrix} \weightlambda{\tcalD_n}(z_1, x)    & \cdots & \weightlambda{\tcalD_n}(z_n, x) \end{bmatrix} \in \RR^n.
\end{aligned}
\end{equation}

\begin{lemma}[Stability of weights]\label{lem:weight_stability}
    Let $\calD_n = \left\{ (x_i, y_i) \in \RR^p \times \calM: i \in [n] \right\}$ and $\tcalD_n = \left\{ (z_i, y_i) \in \RR^p \times \calM: i \in [n] \right\}$. 
    Let $\bfX = \begin{bmatrix} x_1 & \cdots & x_n \end{bmatrix}^{\top} \in \RR^{n \times p}$ and $\bfZ = \begin{bmatrix} z_1 & \cdots & z_n \end{bmatrix}^{\top} \in \RR^{n \times p}$. 
    For any $\lambda \in \RR_+$, if $x \in \RR^p$ satisfies $x - \covmean{\calD_n} \in \rowsp\big( \bfX_{\ctr} \big)$, then
    \begin{equation}
        \big\| \bweightlambda{\tcalD_n}(x) - \bweightlambda{\calD_n}(x) \big\|
            \leq   \frac{ \sqrt{n} \cdot \left\| \bfZ - \bfX \right\| }{ \min\left\{ \sigma\lth(\bfX_{\ctr}),~ \sigma\lth(\bfZ_{\ctr}) \right\} } 
                \cdot \left( 2 \cdot \big\| x - \covmean{\calD_n} \big\|_{\covcovar{\calD_n}} + 1 \right)
    \end{equation}
    where $\bfX_{\ctr} = \big( \bfI_n - \frac{1}{n} \bfones_n \bfones_n^{\top}\big) \bfX$ and $\sigma\lth(\bfX) \coloneqq \inf\{ \sigma_i(\bfX) > \lambda: i \in \NN  \}$ (likewise for $\bfZ$).
\end{lemma}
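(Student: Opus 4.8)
The plan is to write each entry of the weight vector in terms of the design matrices and then bound the difference via matrix perturbation theory. First I would record the matrix identities
$\covmean{\calD_n} = \frac{1}{n}\bfX^{\top}\bfones_n$, $\covcovar{\calD_n} = \frac{1}{n}\bfX_{\ctr}^{\top}\bfX_{\ctr}$, and similarly for $\tcalD_n$, so that $\big[\SVTlambda(\covcovar{\calD_n})\big]^{\dagger} = n \cdot \big(\bfX_{\ctr}\lth\big)^{\dagger}\big(\bfX_{\ctr}\lth\big)^{\dagger,\top}$ (using that SVT of the Gram matrix corresponds to squaring the thresholded singular values of $\bfX_{\ctr}$, with threshold $\lambda$ on $\covcovar{}$ matching threshold $\sqrt{\lambda}$ on $\bfX_{\ctr}$ — I will align the normalization carefully, or simply work with the convention already used in \eqref{eqn:pinv_diff} of the consistency proof). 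Writing $\bfx_{i,\ctr} = x_i - \covmean{\calD_n}$ and $\bfz_{i,\ctr} = z_i - \covmean{\tcalD_n}$, the $i$-th entry of $\bweightlambda{\calD_n}(x) - \bfones_n$ is $n \cdot \bfx_{i,\ctr}^{\top}\big(\bfX_{\ctr}\lth\big)^{\dagger}\big(\bfX_{\ctr}\lth\big)^{\dagger,\top}(x - \covmean{\calD_n})$, which stacks into the vector $n \cdot \bfX_{\ctr}\big(\bfX_{\ctr}\lth\big)^{\dagger}\big(\bfX_{\ctr}\lth\big)^{\dagger,\top}(x-\covmean{\calD_n})$. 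Using Lemma \ref{lem:proj_properties}(3) this simplifies to $\Pcol_{\bfX_{\ctr}\lth}\cdot\big[n\,(\bfX_{\ctr}\lth)^{\dagger,\top}(x-\covmean{\calD_n})\big]$, and in fact, since $x - \covmean{\calD_n}\in\rowsp(\bfX_{\ctr})$, I can reduce further: $\bweightlambda{\calD_n}(x) - \bfones_n = n\,\bfX_{\ctr}\,\big[\SVTlambda(\covcovar{\calD_n})\big]^{\dagger}\frac{1}{n}\bfX_{\ctr}^{\top}\cdot(\text{something})$ — more cleanly, define $\xi \coloneqq (x-\covmean{\calD_n})$ and observe $\bweightlambda{\calD_n}(x) = \bfones_n + \bfX_{\ctr}\,u$ where $u = \big[\SVTlambda(\covcovar{\calD_n})\big]^{\dagger}\xi$ and $\|u\|$ is controlled by $\|\xi\|_{\covcovar{\calD_n}}/\sigma\lth(\bfX_{\ctr})$-type quantities.

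The core step is then the perturbation bound. I would write
$\bweightlambda{\tcalD_n}(x) - \bweightlambda{\calD_n}(x) = \big(\bfZ_{\ctr}\,u_{\bfZ} - \bfX_{\ctr}\,u_{\bfX}\big) + (\text{mean-shift terms})$
where $u_{\bfX} = \big[\SVTlambda(\covcovar{\calD_n})\big]^{\dagger}(x-\covmean{\calD_n})$ and $u_{\bfZ} = \big[\SVTlambda(\covcovar{\tcalD_n})\big]^{\dagger}(x-\covmean{\tcalD_n})$, and split this telescoping difference into pieces involving $\bfZ_{\ctr}-\bfX_{\ctr}$, the difference of the thresholded pseudoinverses, and the difference $\covmean{\tcalD_n}-\covmean{\calD_n}$. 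The difference of pseudoinverses is handled by Lemma \ref{lem:pseudoinverse} applied to $\bfX_{\ctr}\lth$ and $\bfZ_{\ctr}\lth$ (or its Gram-matrix analogue), bounded in operator norm by $\|\bfZ_{\ctr}-\bfX_{\ctr}\| / \big(\sigma\lth(\bfX_{\ctr})\wedge\sigma\lth(\bfZ_{\ctr})\big)^2$ up to the spectral-gap condition implicit in $\sigma\lth$, and the projection differences by Lemma \ref{lem:proj_perturb}. Note $\|\bfZ_{\ctr}-\bfX_{\ctr}\| = \|(\bfI_n-\tfrac1n\bfones_n\bfones_n^{\top})(\bfZ-\bfX)\| \leq \|\bfZ-\bfX\|$. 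Throughout, $\|\bfX_{\ctr}\| = O(1)\cdot$(its own singular values), and crucially $\|\bfX_{\ctr}\,(\bfX_{\ctr}\lth)^{\dagger}\| \leq 1$ since this is (a submatrix of) a projection composed with an isometry on the retained subspace. Collecting the three types of terms, using $\|x-\covmean{\calD_n}\|_{\covcovar{\calD_n}}$ to absorb the quadratic-form factors and the extra $+1$ to absorb the $\sqrt n$ factor coming from $\|\bfones_n\| = \sqrt n$ in the constant-weight part, yields the claimed bound with the stated form $\frac{\sqrt n\,\|\bfZ-\bfX\|}{\sigma\lth(\bfX_{\ctr})\wedge\sigma\lth(\bfZ_{\ctr})}\big(2\|x-\covmean{\calD_n}\|_{\covcovar{\calD_n}}+1\big)$.

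The main obstacle I anticipate is bookkeeping: keeping the $\sqrt n$ normalization consistent between the covariance (which carries $\tfrac1n$) and the design matrix $\bfX_{\ctr}$, and correctly matching the threshold $\lambda$ on $\covcovar{}$ to the corresponding threshold on $\bfX_{\ctr}$ (the statement's use of $\sigma\lth(\bfX_{\ctr})$ rather than $\sigma^{(\sqrt\lambda)}$ suggests a specific convention I must respect). A secondary subtlety is that Lemma \ref{lem:pseudoinverse} produces three terms, two of which involve $\Pcol^{\perp}$ or $\Prow^{\perp}$ factors; I need the spectral-gap hypothesis encoded in $\sigma\lth$ (no singular values of $\bfX_{\ctr}$ or $\bfZ_{\ctr}$ strictly between the retained and discarded ones near $\lambda$ — or more precisely, that $\|\bfZ_{\ctr}-\bfX_{\ctr}\|$ is small relative to the gap) to ensure $\rank\bfX_{\ctr}\lth = \rank\bfZ_{\ctr}\lth$ so that the sharper bound in Lemma \ref{lem:proj_perturb} and a clean Davis–Kahan-type estimate apply; I would state this as part of the regime where the bound is non-vacuous, consistent with condition \eqref{eqn:thm3_x_condition.simple} in the theorem. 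Everything else is a routine triangle-inequality aggregation and I would not grind through the constants beyond tracking the factor $2$.
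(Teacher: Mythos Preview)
Your strategy---express the weight vectors via the design matrices and then invoke Lemmas \ref{lem:pseudoinverse} and \ref{lem:proj_perturb}---is the same as the paper's, but the paper organizes the decomposition differently in a way that dissolves both of your anticipated obstacles.

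The paper's first move is to simplify each weight vector \emph{all the way down} to $\bweightlambda{\calD_n}(x) = \bfones_n + \sqrt{n}\,\Pcol_{\bfX_{\ctr}\lth}\,\cx$, where $\cx \coloneqq \big(\tfrac{1}{\sqrt{n}}\bfX_{\ctr}^{\top}\big)^{\dagger}(x-\covmean{\calD_n}) \in \RR^n$ already satisfies $\|\cx\| = \|x-\covmean{\calD_n}\|_{\covcovar{\calD_n}}$ (and likewise $\tcx$ for $\tcalD_n$). The difference is then simply $\sqrt{n}\big(\Pcol_{\bfZ_{\ctr}\lth}(\tcx-\cx) + (\Pcol_{\bfZ_{\ctr}\lth}-\Pcol_{\bfX_{\ctr}\lth})\cx\big)$: a two-term split in place of your three-way telescope, with the Mahalanobis seminorm appearing automatically through $\|\cx\|$ rather than having to be ``absorbed.''

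For the $\tcx-\cx$ term, the paper applies Lemma \ref{lem:pseudoinverse} to the \emph{un-thresholded} transposes $\bfZ_{\ctr}^{\top}$ and $\bfX_{\ctr}^{\top}$, not to $\bfX_{\ctr}\lth$ and $\bfZ_{\ctr}\lth$ as you propose. The two orthogonal-projection terms of Stewart's identity then vanish structurally: one is annihilated on the right by $\Prow_{\bfX_{\ctr}}^{\perp}\bfX_{\ctr}^{\top}=0$ (this is exactly where the rowspan hypothesis $x-\covmean{\calD_n}\in\rowsp(\bfX_{\ctr})$ is spent), and the other on the left by $\Pcol_{\bfZ_{\ctr}\lth}\Pcol_{\bfZ_{\ctr}}^{\perp}=0$. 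Only the main Stewart term survives, already carrying $\|\bfZ-\bfX\|/\sigma\lth(\bfZ_{\ctr})$. For the projection-difference term, the paper uses only the $\max$ version of Lemma \ref{lem:proj_perturb}, so no rank-equality or spectral-gap hypothesis is invoked---your concern about needing $\rank\bfX_{\ctr}\lth=\rank\bfZ_{\ctr}\lth$ is unnecessary. Your route would likely still work, but applying Lemma \ref{lem:pseudoinverse} at the thresholded level is precisely what forces the extra assumption the paper avoids.
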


\begin{proof}[Proof of Lemma \ref{lem:weight_stability}]
This proof consists of three steps. 
In Step 1, we express the weight discrepancy $\bweightlambda{\tcalD_n}(x) - \bweightlambda{\calD_n}(x)$ as a sum of matrix products using projections. 
In Step 2, we establish upper bounds on the norm of the expression obtained in Step 1. 
In Step 3, we collect intermediate results together and conclude the proof.

\paragraph{Step 1: Decomposition of the weight discrepancy. }
First of all, we rewrite $\bweightlambda{\tcalD_n}(x) - \bweightlambda{\calD_n}(x)$ in a compact matrix representation that is presented in \eqref{eqn:weight_diff.compact} at the end of this step. 
To this end, we begin by observing that
\begin{equation}
    \covmean{\calD_n}     
        = \frac{1}{n} \bfX^{\top} \bfones_n,
    \qquad\text{and}\qquad
    \covcovar{\calD_n}  
        = \frac{1}{n} \left( \bfX - \bfones_n \covmean{\calD_n}^{\top} \right)^{\top} \left( \bfX - \bfones_n \covmean{\calD_n}^{\top} \right)
        = \frac{1}{n} \bfX_{\ctr}^{\top} \bfX_{\ctr}.
\end{equation} 
For given $\lambda \in \RR_+$, we let $\bfX_{\ctr}\lth \coloneqq \SVTlambda(\bfX_{\ctr})$, and observe that
\begin{equation}\label{eqn:Sigma_k}
    \covcovar{\calD_n}\lth
        = \Prow_{\bfX_{\ctr}\lth} \cdot  \left( \frac{1}{n} \bfX_{\ctr}^{\top} \bfX_{\ctr} \right) \cdot \Prow_{\bfX_{\ctr}\lth}
        = \frac{1}{n} \cdot {\bfX_{\ctr}\lth}^{\top} \cdot \bfX_{\ctr}\lth.
\end{equation}
Then it follows that
\begin{align*}
    \left[ \covcovar{\calD_n}\lth \right]^{\dagger}
        = n \cdot \left[ {\bfX_{\ctr}\lth}^{\top} \cdot \bfX_{\ctr}\lth \right]^{\dagger}
        = n \cdot \left[ \bfX_{\ctr}\lth \right]^{\dagger} \cdot \left[ {\bfX_{\ctr}\lth}^{\top} \right]^{\dagger}
        = n \cdot \Prow_{\bfX_{\ctr}\lth} \cdot \bfX_{\ctr}^{\dagger} \cdot \left(\bfX_{\ctr}^{\top}\right)^{\dagger} \cdot \Prow_{\bfX_{\ctr}\lth}.
\end{align*}
Therefore, we have
\begin{align}
    \bweightlambda{\calD_n}(x)
        &= \bfones_n + \left (\bfX- \bfones_n \covmean{\calD_n}^{\top} \right)  \cdot  \left[ \covcovar{\calD_n}\lth \right]^{\dagger}  \cdot  (x - \covmean{\calD_n})  \nonumber\\
        &= \bfones_n + n \cdot \bfX_{\ctr} \cdot \Prow_{\bfX_{\ctr}\lth} \cdot \bfX_{\ctr}^{\dagger} \cdot \left(\bfX_{\ctr}^{\top}\right)^{\dagger} \cdot \Prow_{\bfX_{\ctr}\lth} \cdot (x - \covmean{\calD_n})   \nonumber\\
        &= \bfones_n + n \cdot \Pcol_{\bfX_{\ctr}\lth} \cdot \left(\bfX_{\ctr}^{\top}\right)^{\dagger} \cdot \Prow_{\bfX_{\ctr}\lth} \cdot (x - \covmean{\calD_n}),   \label{eqn:vec_wnk}
\end{align}
where the equality in the last line follows from Lemma \ref{lem:proj_properties}: $\bfX_{\ctr} \Prow_{\bfX_{\ctr}\lth} \bfX_{\ctr}^{\dagger} = \Pcol_{\bfX_{\ctr}\lth}$.

Likewise, we repeat the above for $\tcalD_n$ and $\bfZ$ to write
\[
    \covmean{\tcalD_n}     = \frac{1}{n} \bfZ^{\top} \bfones_n
    \qquad\text{and}\qquad
    \covcovar{\tcalD_n}  = \frac{1}{n} \bfZ_{\ctr}^{\top} \bfZ_{\ctr}.
\]
Then, we obtain an expression for $\bweightlambda{\tcalD_n}(x)$ in a similar form to \eqref{eqn:vec_wnk}, namely,
\begin{equation}
    \bweightlambda{\tcalD_n}(x)
        = \bfones_n + n \cdot \Pcol_{\bfZ_{\ctr}\lth} \cdot \left(\bfZ_{\ctr}^{\top}\right)^{\dagger} \cdot \Prow_{\bfZ_{\ctr}\lth} \cdot (x - \covmean{\tcalD_n} ).    \label{eqn:vec_twnk}
\end{equation}

Thereafter, we define $\cx, \tcx \in \RR^{n \times 1}$ so that
\begin{equation}\label{eqn:coeffs_x}
\begin{split}
    \cx &= \left\| x - \covmean{\calD_n} \right\|_{\covcovar{\calD_n}}
        = \left( \frac{1}{\sqrt{n}} \bfX_{\ctr}^{\top} \right)^{\dagger} \cdot \left( x - \covmean{\calD_n} \right)
    \qquad
    \text{and}
    \qquad\\
    \tcx &= \left\| x - \covmean{\calD_n} \right\|_{\covcovar{\tcalD_n}}
        = \left( \frac{1}{\sqrt{n}}\bfZ_{\ctr}^{\top} \right)^{\dagger} \cdot \left( x - \covmean{\tcalD_n} \right).
\end{split}
\end{equation}
Then we observe that for any $x \in \RR^p$, 
\begin{equation}\label{eqn:coeff_x}
    n \cdot \Prow_{\bfX_{\ctr}\lth} \cdot (x - \covmean{\calD_n})
        = n \cdot \Prow_{\bfX_{\ctr}\lth} \cdot \frac{1}{\sqrt{n}} \bfX_{\ctr}^{\top} \cdot \left( \frac{1}{\sqrt{n}} \bfX_{\ctr}^{\top} \right)^{\dagger} \cdot (x - \covmean{\calD_n})
        = \sqrt{n} \cdot \Prow_{\bfX_{\ctr}\lth} \cdot \bfX_{\ctr}^{\top} \cdot \cx.
\end{equation}
Likewise, 
\begin{equation}\label{eqn:coeff_z}
    n \cdot  \Prow_{\bfZ_{\ctr}\lth} \cdot \big( x - \covmean{\tcalD_n} \big) 
        = n \cdot \Prow_{\bfZ_{\ctr}\lth} \cdot \frac{1}{\sqrt{n}} \bfZ_{\ctr}^{\top} \cdot \left( \frac{1}{\sqrt{n}} \bfZ_{\ctr}^{\top} \right)^{\dagger} \cdot (x - \covmean{\tcalD_n})
        = \sqrt{n} \cdot \Prow_{\bfZ_{\ctr}\lth} \cdot \bfZ_{\ctr}^{\top} \cdot \tcx.
\end{equation}

Consequently, for any $x \in \RR^p$, we obtain from \eqref{eqn:vec_wnk} and \eqref{eqn:vec_twnk} with aid of \eqref{eqn:coeff_x} and \eqref{eqn:coeff_z} that 
\begin{align}
    \bweightlambda{\tcalD_n}(x) - \bweightlambda{\calD_n}(x)
        &= \sqrt{n} \cdot  \Pcol_{\bfZ_{\ctr}\lth} \cdot \left(\bfZ_{\ctr}^{\top}\right)^{\dagger} \cdot \Prow_{\bfZ_{\ctr}\lth} \cdot \bfZ_{\ctr}^{\top} \cdot \tcx
                - \sqrt{n} \cdot \Pcol_{\bfX_{\ctr}\lth} \cdot \left(\bfX_{\ctr}^{\top}\right)^{\dagger} \cdot \Prow_{\bfX_{\ctr}\lth} \cdot \bfX_{\ctr}^{\top} \cdot \cx
                \nonumber\\
        &= \sqrt{n} \cdot \Pcol_{\bfZ_{\ctr}\lth} \cdot \tcx - \sqrt{n} \cdot \Pcol_{\bfX_{\ctr}\lth} \cdot \cx
                \qquad\qquad\qquad\qquad\qquad\qquad\because\text{Lemma \ref{lem:proj_properties}} 
                \nonumber\\
        &= \sqrt{n} \cdot \Pcol_{\bfZ_{\ctr}\lth} \cdot \left( \tcx - \cx \right) 
                + \sqrt{n} \cdot \left( \Pcol_{\bfZ_{\ctr}\lth} - \Pcol_{\bfX_{\ctr}\lth} \right) \cdot \cx.
                \label{eqn:weight_diff.compact}
\end{align}
By triangle inequality, we obtain the following upper bound:  
\begin{align}
    \left\| \bweightlambda{\tcalD_n}(x) - \bweightlambda{\calD_n}(x) \right\|
        &\leq \sqrt{n} \cdot \left\| \Pcol_{\bfZ_{\ctr}\lth} \cdot \left( \tcx - \cx \right) \right\| 
            + \sqrt{n} \cdot \left\| \left( \Pcol_{\bfZ_{\ctr}\lth} - \Pcol_{\bfX_{\ctr}\lth} \right) \cdot \cx \right\|.       \label{eqn:diff_step1}
\end{align}

\paragraph{Step 2: Upper bounding the norm.}
Next, we establish separate upper bounds for the two terms in \eqref{eqn:diff_step1}.
\subparagraph{(1) The first term in \eqref{eqn:diff_step1}.} 
First of all, we observe from the definition of $\cx$ and $\tcx$, cf. \eqref{eqn:coeffs_x}, that
\begin{align*}
    \tcx - \cx
        &= \left( \frac{1}{\sqrt{n}} \bfZ_{\ctr}^{\top} \right)^{\dagger} \cdot \left( x - \covmean{\tcalD_n} \right) - \left( \frac{1}{\sqrt{n}} \bfX_{\ctr}^{\top} \right)^{\dagger} \cdot \left( x - \covmean{\calD_n} \right)\\
        &= \sqrt{n} \cdot \left( {\bfZ_{\ctr}^{\top}}^{\dagger} - {\bfX_{\ctr}^{\top}}^{\dagger} \right) \cdot \left( x - \covmean{\calD_n} \right) 
            + \sqrt{n} \cdot \Big[ \bfZ_{\ctr}^{\top}\Big]^{\dagger} \cdot \left( \covmean{\tcalD_n} - \covmean{\calD_n} \right).
\end{align*}
Then we can upper bound the first term in \eqref{eqn:diff_step1} as follows:
\begin{equation}\label{eqn:thm3.step2.term1}
    \left\| \Pcol_{\bfZ_{\ctr}\lth} \cdot \left( \tcx - \cx \right) \right\|   
        = \sqrt{n} \cdot \left\| \Pcol_{\bfZ_{\ctr}\lth} \cdot \left( {\bfZ_{\ctr}^{\top}}^{\dagger} - {\bfX_{\ctr}^{\top}}^{\dagger} \right) \cdot \left( x - \covmean{\calD_n} \right)  \right\| 
        + \sqrt{n} \cdot \left\| \Pcol_{\bfZ_{\ctr}\lth} \cdot \Big[ \bfZ_{\ctr}^{\top}\Big]^{\dagger} \cdot \left( \covmean{\tcalD_n} - \covmean{\calD_n} \right) \right\|.
\end{equation}

Next, we consider the orthogonal decomposition of $x - \covmean{\calD_n}$:
\begin{equation}\label{eqn:ortho_decomp_x}
    x - \covmean{\calD_n}  
        = \Prow_{\bfX_{\ctr}} \big( x - \covmean{\calD_n} \big) + { \Prow_{\bfX_{\ctr}} }^{\perp} \cdot \big( x - \covmean{\calD_n} \big)
        = \frac{1}{\sqrt{n}} \bfX_{\ctr}^{\top} \cdot \cx + { \Prow_{\bfX_{\ctr}} }^{\perp} \cdot \big( x - \covmean{\calD_n} \big).
\end{equation}
If $x - \covmean{\calD_n} \in \rowsp(\bfX_{\ctr})$, then we obtain the following upper bound for the first term in \eqref{eqn:thm3.step2.term1}:
\begin{align*}
    &\sqrt{n} \cdot \left\| \Pcol_{\bfZ_{\ctr}\lth} \cdot \left( {\bfZ_{\ctr}^{\top}}^{\dagger} - {\bfX_{\ctr}^{\top}}^{\dagger} \right) \cdot \left( x - \covmean{\calD_n} \right)  \right\| \\
        &\qquad\leq \left\| \Pcol_{\bfZ_{\ctr}\lth} \cdot \left( {\bfZ_{\ctr}^{\top}}^{\dagger} - {\bfX_{\ctr}^{\top}}^{\dagger} \right) \cdot \bfX_{\ctr}^{\top} \cdot \cx 
     \right\| \\
        &\qquad\quad
            + \sqrt{n} \cdot \Big\| \Pcol_{\bfZ_{\ctr}\lth} \cdot \left( {\bfZ_{\ctr}^{\top}}^{\dagger} - {\bfX_{\ctr}^{\top}}^{\dagger} \right) \cdot \underbrace{{ \Prow_{\bfX_{\ctr}} }^{\perp} \cdot \big( x - \covmean{\calD_n} \big)}_{= 0} \Big\|       &&\because \eqref{eqn:ortho_decomp_x}\\
        &\qquad \leq \left\| \Pcol_{\bfZ_{\ctr}\lth} \cdot
                \left\{ - {\bfZ_{\ctr}^{\top}}^{\dagger} \cdot \Prow_{\bfZ_{\ctr}} \cdot \left( \bfZ_{\ctr}^{\top} - \bfX_{\ctr}^{\top} \right) \cdot \Pcol_{\bfX_{\ctr}} \cdot {\bfX_{\ctr}^{\top}}^{\dagger} \right\} \cdot \bfX_{\ctr}^{\top} \cdot \cx
            \right\|
                &&\because \text{Lemma \ref{lem:pseudoinverse}}
                \nonumber\\
        &\qquad \leq \Big\| \Big[ {\bfZ_{\ctr}\lth}^{\top}\Big]^{\dagger} \Big\| \cdot
            \left\| \Prow_{\bfZ_{\ctr}} \cdot \left( \bfZ - \bfX \right)^{\top} \cdot \Prow_{\bfones_n^{\perp}} \cdot \Pcol_{\bfX_{\ctr}} \right\| \cdot \left\| \cx \right\|\\
        &\qquad \leq \frac{ \left\| \bfZ - \bfX \right\| }{ \sigma\lth\left( \bfZ_{\ctr} \right)  } \cdot \left\| \cx \right\|.
\end{align*}

Similarly, the second term in \eqref{eqn:thm3.step2.term1} can be bounded by
\begin{align*}
    \sqrt{n} \cdot \left\| \Pcol_{\bfZ_{\ctr}\lth} \cdot \Big[ \bfZ_{\ctr}^{\top}\Big]^{\dagger} \cdot \left( \covmean{\tcalD_n} - \covmean{\calD_n} \right) \right\|
        &\leq \frac{1}{\sqrt{n}} \cdot \Big\| \Big[ {\bfZ_{\ctr}\lth}^{\top}\Big]^{\dagger} \Big\| \cdot \big\| \bfones_n^{\top} \cdot (\bfZ - \bfX) \big\|\\
        &\leq \frac{1}{\sqrt{n}} \cdot \frac{ \left\| \bfZ - \bfX \right\| }{ \sigma\lth\left( \bfZ_{\ctr} \right)  } \cdot \left\| \bfones_n \right\|.
\end{align*}

All in all, we obtain
\begin{equation}\label{eqn:diff_step2.a}
    \sqrt{n} \cdot \left\| \Pcol_{\bfZ_{\ctr}\lth} \cdot \left( \tcx - \cx \right) \right\|
        \leq \frac{ \left\| \bfZ - \bfX \right\| }{ \sigma\lth\left( \bfZ_{\ctr} \right)  } \cdot \Big( \sqrt{n} \cdot \left\| \cx \right\| + \left\| \bfones_n \right\| \Big)
\end{equation}

\subparagraph{(2) The second term in \eqref{eqn:diff_step1}.}
Letting $\bfE\lth \coloneqq \bfZ_{\ctr}\lth - \bfX_{\ctr}\lth$, we observe that
\begin{align*}
    \left\| \Pcol_{\bfZ_{\ctr}\lth} - \Pcol_{\bfX_{\ctr}\lth} \right\|
        &\leq \max \Big\{ \Big\| \bfE\lth \cdot {\bfX_{\ctr}\lth}^{\dagger} \Big\|, ~\Big\| \bfE\lth \cdot {\bfZ_{\ctr}\lth}^{\dagger} \Big\| \Big\}    && \because \text{Lemma \ref{lem:proj_perturb}} \\
        &\leq \left\| \bfE\lth \right\| \cdot \max\left\{ \Big\| {\bfX_{\ctr}\lth}^{\dagger} \Big\|, ~\Big\| {\bfZ_{\ctr}\lth}^{\dagger} \Big\| \right\} \\
        &\leq \frac{ \left\| \bfZ - \bfX \right\| }{ \min\left\{ \sigma\lth(\bfX_{\ctr}), ~\sigma\lth(\bfZ_{\ctr}) \right\} }. 
\end{align*}
All in all, we obtain the following upper bound:
\begin{equation}\label{eqn:diff_step2.b}
    \sqrt{n} \left\| \left( \Pcol_{\bfZ_{\ctr}\lth} - \Pcol_{\bfX_{\ctr}\lth} \right) \cdot \cx \right\|
        \leq \left\| \Pcol_{\bfZ_{\ctr}\lth} - \Pcol_{\bfX_{\ctr}\lth} \right\| \cdot \left\| \cx \right\|
        \leq \frac{ \left\| \bfZ - \bfX \right\| }{ \min\left\{ \sigma\lth(\bfX_{\ctr}),~ \sigma\lth(\bfZ_{\ctr}) \right\} } \cdot \sqrt{n} \cdot \left\| \cx \right\|.
\end{equation}

\paragraph{Step 3: Concluding the proof.} 
We conclude this proof by inserting the upper bounds \eqref{eqn:diff_step2.a} and \eqref{eqn:diff_step2.b} from Step 2 into the upper bound \eqref{eqn:diff_step1} in Step 1. 
Specifically, we obtain
\begin{align*}
    \big\| \bweightlambda{\tcalD_n}(x) - \bweightlambda{\calD_n}(x) \big\|  
        &\leq \frac{ \left\| \bfZ - \bfX \right\| }{ \sigma\lth\left( \bfZ_{\ctr} \right) } \cdot \Big( \sqrt{n} \cdot \left\| \cx \right\| + \left\| \bfones_n \right\| \Big) 
            + \frac{ \left\| \bfZ - \bfX \right\| }{ \min\left\{ \sigma\lth(\bfX_{\ctr}),~ \sigma\lth(\bfZ_{\ctr}) \right\} } \cdot \sqrt{n} \cdot \left\| \cx \right\|\\
        &\leq \frac{ \left\| \bfZ - \bfX \right\| }{ \min\left\{ \sigma\lth(\bfX_{\ctr}),~ \sigma\lth(\bfZ_{\ctr}) \right\} } 
            \cdot \big( 2\sqrt{n} \cdot \left\| \cx \right\| + \left\| \bfones_n \right\| \big).
\end{align*}
Lastly, we note that $\| \cx \| = \sqrt{\left( x - \covmean{\calD_n} \right)^{\top} \covcovar{\calD_n}^{\dagger} \left( x - \covmean{\calD_n} \right)} = \big\| x - \covmean{\calD_n} \big\|_{\covcovar{\calD_n}} $ and $\| \bfones_n \| = \sqrt{n}$.
\end{proof}

\subsection{Completing the proof of Theorem \ref{thm:denoising_simple}}\label{sec:proof_denoising_completing}
Recall that given a set $\calD_n = \left\{ (x_i, y_i): i \in [n] \right\}$, we let $\bfX_{\calD_n} \coloneqq \begin{bmatrix} x_1 & \cdots & x_n \end{bmatrix}^{\top} \in \RR^{n \times p}$. In addition, we let
\begin{equation}\label{eqn:disty}
    \forall y \in \calM, ~~
    \bdist{\calD_n}(y) \coloneqq \begin{bmatrix} d^2(y_1, y) & \cdots & d^2(y_n, y) \end{bmatrix} \in \RR^n.
\end{equation}
Recall that we let $\bfX = \bfX_{\calD_n}$ and $\bfZ = \bfX_{\tcalD_n}$ for shorthand, and further, we let $\bfX_{\ctr} = \big( \bfI_n - \frac{1}{n} \bfones_n \bfones_n^{\top} \big) \bfX$ and $\bfZ_{\ctr} = \big( \bfI_n - \frac{1}{n} \bfones_n \bfones_n^{\top} \big) \bfZ$ denote the `row-centered' matrices.
Here we present and prove the complete version of Theorem \ref{thm:denoising_simple}.

\begin{theorem}[De-noising covariates]\label{thm:denoising}
    Suppose that Assumptions \ref{cond:existence} and \ref{cond:growth} hold.
    For any $\lambda \in \RR_+$, if $x \in \covmean{\calD_n} + \rowsp \bfX_{\ctr}$ and
    \begin{equation}\label{eqn:thm3_x_condition}
        \| x - \covmean{\calD_n} \|_{\covcovar{\calD_n}} \leq \frac{1}{2} \left( \frac{\Cgrth \cdot \Dgrth^{\alpha}}{2 \, \diam(\calM)} \cdot \frac{ \min\left\{ \sigma\lth(\bfX_{\ctr}),~ \sigma\lth(\bfZ_{\ctr}) \right\} }{ \left\| \bfZ - \bfX \right\| } - 1 \right),
    \end{equation}
    then
    \begin{equation}\label{eqn:thm3_conclusion}
    \begin{split}
        &d\left( \regfrechetlambda{\tcalD_n}(x), \regfrechetlambda{\calD_n}(x) \right)\\
            &\qquad\leq \left( \frac{ \left\| \bfZ - \bfX \right\| }{ \min\left\{ \sigma\lth(\bfX_{\ctr}),~ \sigma\lth(\bfZ_{\ctr}) \right\} } \cdot 
            \frac{ 2 \cdot \big\| x - \covmean{\calD_n} \big\|_{\covcovar{\calD_n}} + 1 }{\Cgrth}
            \cdot \frac{ \big\| \bdist{\calD_n}(\tilde{\varphi}_n) \big\| + \big\| \bdist{\calD_n}(\varphi_n) \big\| }{\sqrt{n}} \right)^{\frac{1}{\alpha}}.
    \end{split}
    \end{equation}
\end{theorem}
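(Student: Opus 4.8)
The plan is to bound the distance $d\big(\regfrechetlambda{\tcalD_n}(x),\regfrechetlambda{\calD_n}(x)\big)$ by first controlling the discrepancy of the two objective functions $\risklambda{\tcalD_n}(y;x)$ and $\risklambda{\calD_n}(y;x)$ uniformly in $y$, and then converting that objective-discrepancy bound into a distance bound via the \texttt{Growth} condition \ref{cond:growth}, exactly as in the proof sketch. First I would write, for any $y\in\calM$,
\[
    \risklambda{\tcalD_n}(y;x)-\risklambda{\calD_n}(y;x)
        = \frac{1}{n}\,\big\langle \bweightlambda{\tcalD_n}(x)-\bweightlambda{\calD_n}(x),\ \bdist{\calD_n}(y)\big\rangle,
\]
since both empirical risks share the same responses $y_1,\dots,y_n$ and hence the same vector $\bdist{\calD_n}(y)$ of squared distances; the only difference is in the weight vectors. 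By Cauchy--Schwarz this is at most $\tfrac1n\big\|\bweightlambda{\tcalD_n}(x)-\bweightlambda{\calD_n}(x)\big\|\cdot\big\|\bdist{\calD_n}(y)\big\|$, and Lemma \ref{lem:weight_stability} (whose hypothesis $x-\covmean{\calD_n}\in\rowsp\bfX_{\ctr}$ is assumed) bounds the weight-difference norm by $\frac{\sqrt n\,\|\bfZ-\bfX\|}{\sigma\lth(\bfX_{\ctr})\wedge\sigma\lth(\bfZ_{\ctr})}\big(2\|x-\covmean{\calD_n}\|_{\covcovar{\calD_n}}+1\big)$. Writing $\varphi_n=\regfrechetlambda{\calD_n}(x)$ and $\tilde\varphi_n=\regfrechetlambda{\tcalD_n}(x)$ for brevity and abbreviating $\rho\coloneqq\frac{\|\bfZ-\bfX\|}{\sigma\lth(\bfX_{\ctr})\wedge\sigma\lth(\bfZ_{\ctr})}\cdot\frac{2\|x-\covmean{\calD_n}\|_{\covcovar{\calD_n}}+1}{\sqrt n}$, I get $\big|\risklambda{\tcalD_n}(y;x)-\risklambda{\calD_n}(y;x)\big|\le \rho\cdot\|\bdist{\calD_n}(y)\|$ for every $y$, in particular for $y=\varphi_n$ and $y=\tilde\varphi_n$.

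Next I would run the standard M-estimation comparison. Using optimality of $\varphi_n$ for $\risklambda{\calD_n}$ and of $\tilde\varphi_n$ for $\risklambda{\tcalD_n}$,
\[
    \risklambda{\calD_n}(\tilde\varphi_n;x)-\risklambda{\calD_n}(\varphi_n;x)
        \le \big[\risklambda{\calD_n}(\tilde\varphi_n;x)-\risklambda{\tcalD_n}(\tilde\varphi_n;x)\big]
          + \big[\risklambda{\tcalD_n}(\varphi_n;x)-\risklambda{\calD_n}(\varphi_n;x)\big]
        \le \rho\big(\|\bdist{\calD_n}(\tilde\varphi_n)\|+\|\bdist{\calD_n}(\varphi_n)\|\big),
\]
the middle inequality because $\risklambda{\tcalD_n}(\tilde\varphi_n;x)\le\risklambda{\tcalD_n}(\varphi_n;x)$. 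Now the \texttt{Growth} condition \ref{cond:growth} applied at the minimizer $\varphi_n$ of $\risklambda{\calD_n}$ gives: if $d(\tilde\varphi_n,\varphi_n)<\Dgrth$ then $\Cgrth\,d(\tilde\varphi_n,\varphi_n)^\alpha\le\risklambda{\calD_n}(\tilde\varphi_n;x)-\risklambda{\calD_n}(\varphi_n;x)$, hence
\[
    d(\tilde\varphi_n,\varphi_n)
        \le \left(\frac{\rho}{\Cgrth}\big(\|\bdist{\calD_n}(\tilde\varphi_n)\|+\|\bdist{\calD_n}(\varphi_n)\|\big)\right)^{1/\alpha},
\]
which is precisely \eqref{eqn:thm3_conclusion}. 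To rule out the other branch $d(\tilde\varphi_n,\varphi_n)\ge\Dgrth$ of \ref{cond:growth}, I would use the condition \eqref{eqn:thm3_x_condition} on $\|x-\covmean{\calD_n}\|_{\covcovar{\calD_n}}$: a crude bound $\|\bdist{\calD_n}(y)\|\le\sqrt n\,\diam(\calM)^2$ turns the objective-discrepancy bound into $\risklambda{\calD_n}(\tilde\varphi_n;x)-\risklambda{\calD_n}(\varphi_n;x)\le 2\sqrt n\,\rho\,\diam(\calM)^2=2\,\diam(\calM)^2\cdot\frac{\|\bfZ-\bfX\|}{\sigma\lth(\bfX_{\ctr})\wedge\sigma\lth(\bfZ_{\ctr})}(2\|x-\covmean{\calD_n}\|_{\covcovar{\calD_n}}+1)$, and \eqref{eqn:thm3_x_condition} is exactly what makes this $<\Cgrth\Dgrth^\alpha$, so the second branch of \ref{cond:growth} (which would force the gap to be $\ge\Cgrth\Dgrth^\alpha$) is impossible; hence $d(\tilde\varphi_n,\varphi_n)<\Dgrth$ and the first branch applies. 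This also shows \eqref{eqn:thm3_conclusion.simple} follows by further bounding $\|\bdist{\calD_n}(y)\|\le\sqrt n\,\diam(\calM)^2$ inside \eqref{eqn:thm3_conclusion} and absorbing $\diam(\calM)^2$ into the constant $C$.

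The genuinely substantive input is Lemma \ref{lem:weight_stability}, whose proof (already given in the excerpt) rests on matrix perturbation theory; granting it, the remaining steps are the routine empirical-process "basic inequality plus curvature" argument, so I anticipate no real obstacle beyond bookkeeping — the one point requiring care is verifying the consistency/well-posedness needed to invoke \ref{cond:existence} (existence and uniqueness of $\varphi_n$ and $\tilde\varphi_n$) and making sure the crude $\diam(\calM)^2$ bound used to eliminate the large-distance branch is applied correctly against the hypothesis \eqref{eqn:thm3_x_condition}.
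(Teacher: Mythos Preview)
Your proposal is correct and follows essentially the same path as the paper's own proof: write the risk discrepancy as the inner product $\tfrac1n\langle \bweightlambda{\tcalD_n}(x)-\bweightlambda{\calD_n}(x),\,\bdist{\calD_n}(y)\rangle$, apply Cauchy--Schwarz and Lemma~\ref{lem:weight_stability}, use the optimality of $\tilde\varphi_n$ to obtain the basic inequality \eqref{eqn:risk_discr_bound.2}, and then invoke \ref{cond:growth} after checking via \eqref{eqn:thm3_x_condition} that the risk gap falls below $\Cgrth\Dgrth^\alpha$. Your treatment of the second branch of \ref{cond:growth} (using the crude bound $\|\bdist{\calD_n}(y)\|\le\sqrt n\,\diam(\calM)^2$) is exactly the mechanism the paper relies on, though the paper states this step more tersely.
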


\begin{proof}[Proof of Theorem \ref{thm:denoising}]

First of all, we recall from \eqref{eqn:weight_vector} that
\begin{align*}
    \bweightlambda{\calD_n}(x) 
        = \begin{bmatrix} \weightlambda{\calD_n}(x_1, x)    & \cdots & \weightlambda{\calD_n}(x_n, x) \end{bmatrix}
    \qquad\text{and}\qquad
    \bweightlambda{\tcalD_n}(x) 
        = \begin{bmatrix} \weightlambda{\tcalD_n}(z_1, x)    & \cdots & \weightlambda{\tcalD_n}(z_n, x) \end{bmatrix}.
\end{align*}
In addition, recall that we let for any $y \in \calM$,
\[
    \bdist{\calD_n}(y) = \begin{bmatrix} d^2(y_1, y) & \cdots & d^2(y_n, y) \end{bmatrix} \in \RR^n.
\]

Thereafter, we observe that for any $y \in \calM$ and any $x \in \big( \covmean{\calD_n} + \rowsp \bfX_{\ctr} \big)$,
\begin{align}
    \left| \risklambda{\tcalD_n}(y;x) - \risklambda{\calD_n}(y;x) \right|
        &= \frac{1}{n} \left| \, \sum_{i=1}^n \left( \weightlambda{\tcalD_n}(z_i, x) - \weightlambda{\calD_n}(x_i, x) \right) \cdot d^2(y_i, y) \, \right|
            \nonumber\\
        &= \frac{1}{n} \left| \left\langle \bweightlambda{\tcalD_n}(x) - \bweightlambda{\calD_n}(x), ~ \bdist{\calD_n}(y) \right\rangle \right|
            \nonumber\\
        &\stackrel{(a)}{\leq} \frac{1}{n} \left\| \bweightlambda{\tcalD_n}(x) - \bweightlambda{\calD_n}(x) \right\| \cdot \left\| \bdist{\calD_n}(y) \right\|
            \nonumber\\
        &\stackrel{(b)}{\leq} \frac{ \left\| \bfZ - \bfX \right\| }{ \min\left\{ \sigma\lth(\bfX_{\ctr}),~ \sigma\lth(\bfZ_{\ctr}) \right\} } 
            \cdot \big( 2 \cdot \big\| x - \covmean{\calD_n} \big\|_{\covcovar{\calD_n}} + 1 \big) \cdot \frac{ \big\| \bdist{\calD_n}(y) \big\| }{\sqrt{n}}
                \label{eqn:risk_discr_bound}
\end{align}
where (a) is due to Cauchy-Schwarz inequality, and (b) follows from Lemma \ref{lem:weight_stability}.

Using shorthand notation $R_n = \risklambda{\calD_n}$, $\tilde{R}_n = \risklambda{\tcalD_n}$, $\varphi_n = \regfrechetlambda{\calD_n}(x)$, and $\tilde{\varphi}_n = \regfrechetlambda{\tcalD_n}(x)$, we observe that
\begin{align}
    &R_n ( \tilde{\varphi}_n ) - R_n ( \varphi_n ) \nonumber\\
        &\qquad= R_n ( \tilde{\varphi}_n ) - \tilde{R}_n ( \tilde{\varphi}_n ) + \tilde{R}_n ( \tilde{\varphi}_n ) - R_n ( \varphi_n )
            \nonumber\\
        &\qquad\stackrel{(a)}{\leq} R_n ( \tilde{\varphi}_n ) - \tilde{R}_n ( \tilde{\varphi}_n ) + \tilde{R}_n ( \varphi_n ) - R_n ( \varphi_n )
            \nonumber\\
        &\qquad\stackrel{(b)}{\leq}  \frac{ \left\| \bfZ - \bfX \right\| }{ \min\left\{ \sigma\lth(\bfX_{\ctr}),~ \sigma\lth(\bfZ_{\ctr}) \right\} } \cdot 
            \big( 2 \cdot \big\| x - \covmean{\calD_n} \big\|_{\covcovar{\calD_n}} + 1 \big) 
            \cdot \frac{ \big\| \bdist{\calD_n}(\tilde{\varphi}_n) \big\| + \big\| \bdist{\calD_n}(\varphi_n) \big\| }{\sqrt{n}}
                \label{eqn:risk_discr_bound.2}
\end{align}
where (a) follows from the optimality of $\tilde{\varphi}_n$, i.e., $\tilde{R}_n ( \varphi_n ) \geq \tilde{R}_n ( \tilde{\varphi}_n )$, and (b) is due to \eqref{eqn:risk_discr_bound}.

Finally, we note that if
\[
    \| x - \covmean{\calD_n} \|_{\covcovar{\calD_n}} \leq \frac{1}{2} \left( \frac{\Cgrth \cdot \Dgrth^{\alpha}}{2 \, \diam(\calM)} \cdot \frac{ \min\left\{ \sigma\lth(\bfX_{\ctr}),~ \sigma\lth(\bfZ_{\ctr}) \right\} }{ \left\| \bfZ - \bfX \right\| } - 1 \right),
\]
then the upper bound in \eqref{eqn:risk_discr_bound.2} certifies that $R_n ( \tilde{\varphi}_n ) - R_n ( \varphi_n ) < \Cgrth \cdot \Dgrth^{\alpha}$. 
Thus, we can use Assumption \ref{cond:growth} to convert the risk bound \eqref{eqn:risk_discr_bound.2} to derive a distance bound between the minimizers:
\begin{align*}
    d \left( \tilde{\varphi}_n, \varphi_n \right)
        &\leq \left( \frac{ R_n ( \tilde{\varphi}_n ) - R_n ( \varphi_n ) }{\Cgrth} \right)^{\frac{1}{\alpha}},
\end{align*}
which completes the proof. 
\end{proof}

\section{Further details on the experiments}\label{sec:addtitional_experiments}

\paragraph{Experimental setup in Section \ref{sec:experiments}.} 
We consider combinations of $p \in \{ 150, 300, 600 \}$ and $n \in \{ 100, 200, 400 \}$. 
The datasets $\calD_n = \{ (X_i, Y_i): i \in [n] \}$ and $\tcalD_n = \{ (Z_i, Y_i): i \in [n] \}$ are generated as follows.

\smallskip\noindent
\underline{(True covariate $X$)} 
Let $X_i \sim \calN_p\big(\mathbf{0}_p, \Sigma\big)$ be IID multivariate Gaussian with mean $\mathbf{0}_p$ and covariance $\Sigma$ such that $\spec(\Sigma) = \{ \kappa_j > 0:j \in [p] \}$ is an exponentially decreasing sequence such that $\trace(\Sigma) = \sum_{j=1}^p \kappa_j = p$. 
To be specific, for each $p$, we consider an exponentially decreasing sequence $1 = a_1 > \cdots > a_p = 10^{-3}$, and then set $\kappa_j = p \cdot a_j / (\sum_{j' = 1} a_{j'})$ for each $j \in [p]$. Note that $\sum_{j=1}^{\lfloor p/3 \rfloor} \kappa_j  \big/ \sum_{j'=1}^p \kappa_{j'} \approx 0.9$, and thus, $\Sigma$ is effectively low-rank.
    
\smallskip\noindent
\underline{(Noisy covariate $Z$)}
For the error-prone covariate $Z = X + \varepsilon$, we consider two scenarios $\varepsilon_j \stackrel{IID}{\sim} \calN\big(0, \sigma_\varepsilon^2\big)$ and $\varepsilon_j \stackrel{IID}{\sim} \mathrm{Laplace}\big(0, \sigma_\varepsilon\big)$. 
Note that in this setting, we have the signal-to-noise ratio $\Exp (\|X\|_2^2) / \Exp (\|\varepsilon\|_2^2) = 1 / \sigma_\varepsilon^2$. 
We set $\sigma_\varepsilon^2 = 0.05^2$. 
                
\smallskip\noindent
\underline{(Response $Y$)}
Given $X = x$, let $Y$ be the distribution function of $\calN \big(\mu_{\alpha, \beta}(x) + \eta, \tau^2 \big)$, where
\begin{itemize}
    \item $\mu_{\alpha, \beta}(x) = \alpha + \beta^\top x$ with $\alpha = 1$ and $\beta = p^{-1/2} \cdot \bfones_p$,
    \item $\eta \sim \calN \big( 0, \sigma_\eta^2\big)$,
    \item $\tau^2\sim \mathcal{I}\mathcal{G}(s_1, s_2)$, an inverse gamma distribution with shape $s_1$ and scale $s_2$. 
\end{itemize}
We note that $\Exp (\tau^2) = \frac{s_2}{s_1 - 1}$ and $\mathrm{Var}(\tau^2) = \frac{s_2^2}{(s_1 - 1)^2 (s_1 - 2)}$.
In particular, when $\tau^2 = 0$, this setting corresponds to the classical linear regression model for scalar responses.
We set $\sigma_\eta^2 = 0.5^2$, and $(s_1, s_2) = (18, 17)$. In this setting, we have 
\begin{itemize}
    \item $\Exp \big( \mu_{\alpha, \beta}(X) \big) = 1$ and $\mathrm{Var}\big( \mu_{\alpha, \beta}(X) \big) =  \beta^\top \Sigma \beta \approx 1$,
    \item $\Exp (\tau^2) = 1$ and $\mathrm{Var}(\tau^2) = 0.25^2$.
\end{itemize}

\smallskip\noindent
\underline{(Tuning parameter $\lambda$)} 
For simplicity, we chose a universal threshold value as 
\[
    \hat\lambda_n = \argmin_{\lambda \in \Lambda}  {\mathrm{MSPE}}(\regfrechetlambda{\tcalD_n}),
\]
where $\Lambda$ is a fine grid on $\big(0, \sqrt{\lambda_1 \cdot p/n}\big)$.
Then the same threshold $\hat\lambda_n$ was used to evaluate $\mathrm{Bias}^2(\regfrechetlambda{\tcalD^{(b)}}(x))$, $\mathrm{Var}(\regfrechetlambda{\tcalD^{(b)}}(x))$, and $\mathrm{MSE}(\regfrechetlambda{\tcalD^{(b)}}(x))$ for all $b = 1, \ldots, B$. 
Therefore, we claim that the performance of the SVT estimator reported in Table \ref{tab:sim-wasserstein} has further room for improvement if one substitute $\hat\lambda_n^{(b)} = \argmin_{\lambda \in \Lambda} \mathrm{MSPE}(\regfrechetlambda{\nu^{(b)}})$ for each Monte Carlo experiment.
Although suboptimal results are reported, we note that the proposed SVT outperforms both the oracle estimator and the naive EIV estimator in our simulation study.
In practice, one may employ cross-validation for better performance.
For the MSPE in Table \ref{tab:sim-wasserstein}, we reported $\min_{\lambda \in \Lambda} {\mathrm{MSPE}}(\regfrechetlambda{\tcalD_n})$.

\paragraph{Evaluation metrics: bias and variance.}
We evaluate the accuracy and efficiency of the Fr\'echet regression function estimator using the bias and the variance. 
For any given $x$, we define
\begin{equation*}
    \mathrm{Bias}_x \big(\regfrechetlambda{\nu} \big) 
        \coloneqq d_W \Big( \overline\varphi_\nu^{(\lambda)}(x),\, \regfrechetzero{\nu^*}(x) \Big)
    \quad\text{and}\quad 
    \mathrm{Var}_x \big(\regfrechetlambda{\nu} \big)
        \coloneqq \frac{1}{B} \sum_{b=1}^B d_W \Big( \regfrechetlambda{\nu^{(b)}}(x),\, \overline\varphi_\nu^{(\lambda)}(x) \Big)^2, 
\end{equation*}
where $\nu \in \{ \calD_n, \tcalD_n \}$ and 
$\overline\varphi_\nu^{(\lambda)}(x) \coloneqq \arg\min_{y} \sum_{b=1}^B d_W \big( \regfrechetlambda{\nu^{(b)}}(x),\, y \big)^2$ is the Fr\'echet mean of $\regfrechetlambda{\nu^{(1)}}(x), \ldots, \regfrechetlambda{\nu^{(B)}}(x)$. 
Note that these definitions are a generalization of the standard bias and variance of the regression function estimator in Euclidean spaces.
We evaluate the global performance of the estimator by considering a fixed set of evaluation points, $\calG_M = \{ x_m: m=1, \ldots, M \}$, and compute
\[
    {\mathrm{Bias}}^2 \big(\regfrechetlambda{\nu}\big) 
        \coloneqq \frac{1}{M} \sum_{m=1}^M \mathrm{Bias}_{x_m}^2 \big(\regfrechetlambda{\nu} \big)
    \quad\text{and}\quad
    {\mathrm{Var}} \big(\regfrechetlambda{\nu} \big) 
        \coloneqq \frac{1}{M} \sum_{m=1}^M \mathrm{Var}_{x_m} \big(\regfrechetlambda{\nu} \big). 
\]
In our experiment, we generate the set $\calG_M$ by drawing $x_1, \ldots, x_M$ IID from the same distribution as $X$, with $M = 500$.

\paragraph{Additional experiment with linear regression models} 
We also conducted an additional experiment for the standard linear regression models with three different metric metrics.

\smallskip\noindent
\underline{(Model)} 
        The linear regression model for $(X,Y) \in  \RR^{p} \times \RR^{d}$ is defined as $Y = \alpha + X \beta + \eta$ and the covariate is contaminated as $Z = X + \varepsilon$. 
        We generate $X$ using the effective low-rank model with a geometrically decaying spectrum and condition number $10^3$ (see Appendix E in the original submission). 

\smallskip\noindent
\underline{(Parameters)} 
        Here, we let $\alpha = \boldsymbol{1}_d + 0.1\cdot \boldsymbol{g}$, $\beta = d^{-1/2} \cdot \boldsymbol{1}_{p \times d} + 0.1 \cdot \boldsymbol{G}$, $\eta \sim \mathcal{N}(0, 0.5^2 \cdot I_d)$ and $\varepsilon \sim \mathcal{N}(0, 0.5^2 \cdot I_p)$, with $\boldsymbol{g}$, $\boldsymbol{G}$ being standard Gaussians. 

\begin{figure}[h!]
    \centering
    \begin{subfigure}[b]{0.32\textwidth}
         \includegraphics[width=\linewidth]{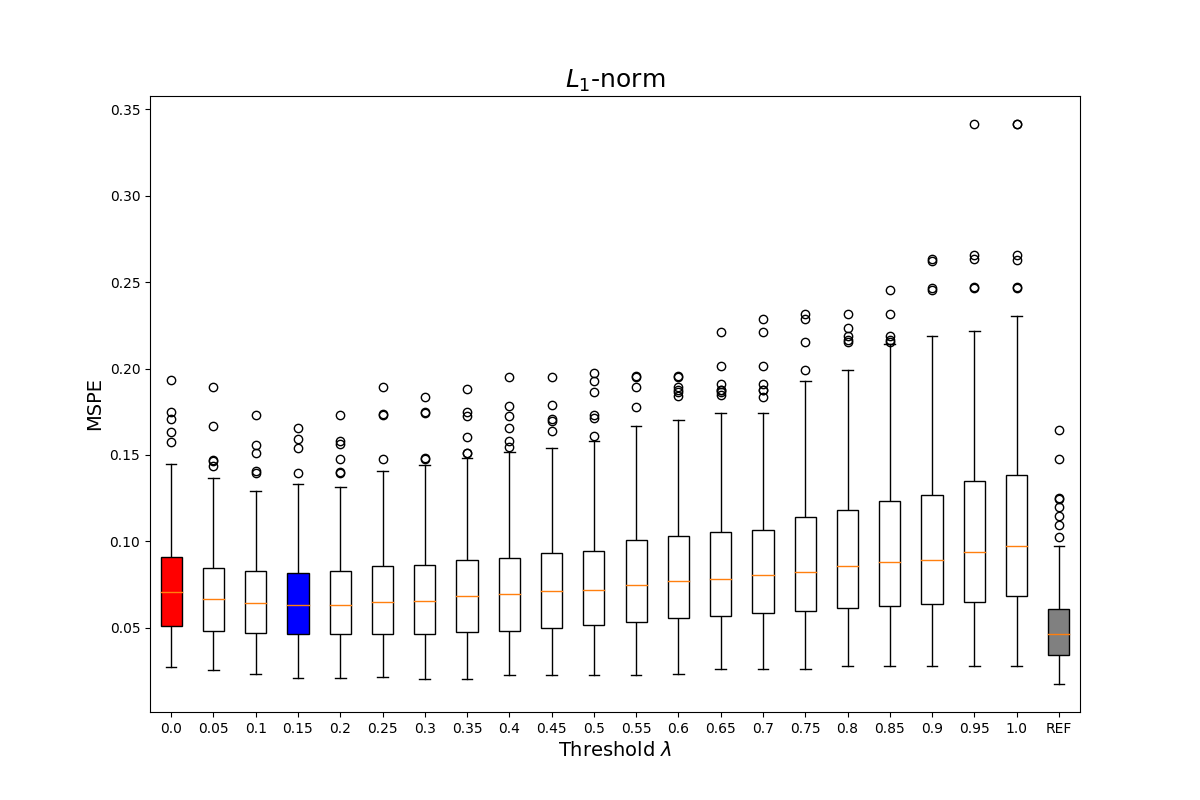}
         \caption{$L_1$}
         \label{fig:L1}
     \end{subfigure}
     \hfill
     \begin{subfigure}[b]{0.32\textwidth}
         \includegraphics[width=\linewidth]{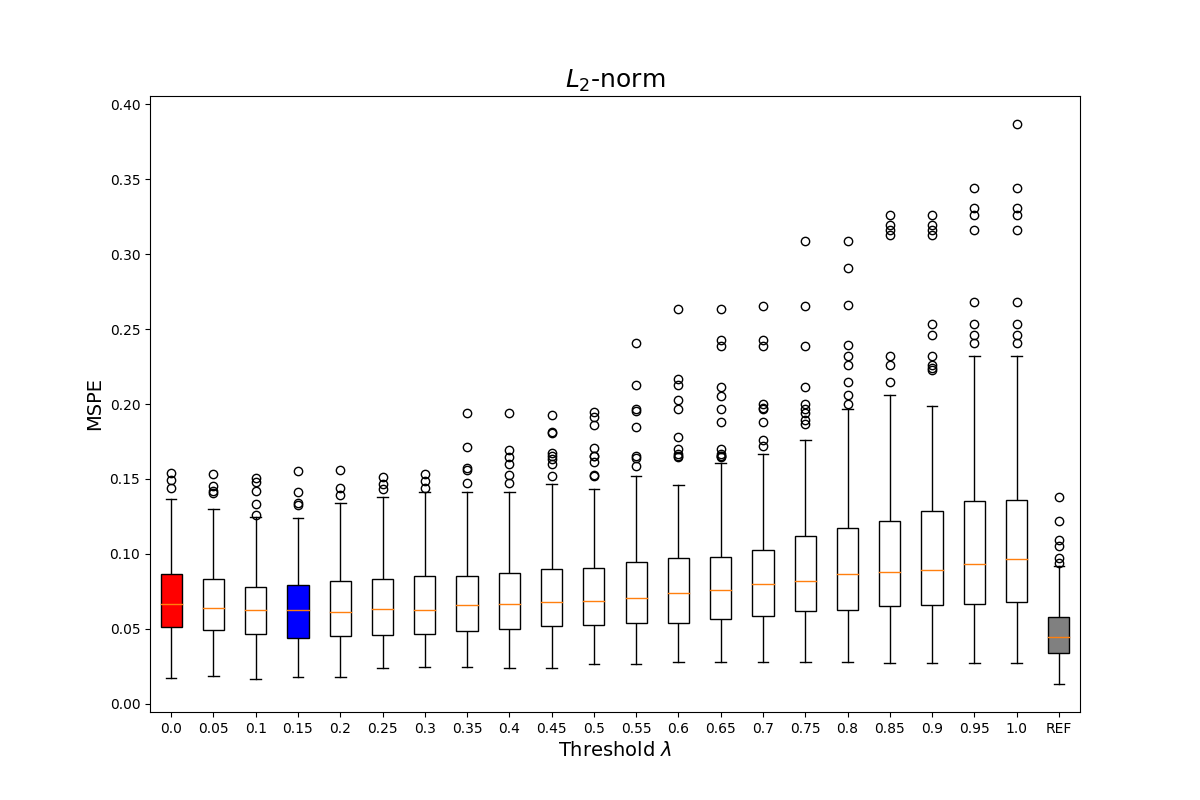}
         \caption{$L_2$}
         \label{fig:L2}
     \end{subfigure}
     \hfill
     \begin{subfigure}[b]{0.32\textwidth}
         \includegraphics[width=\linewidth]{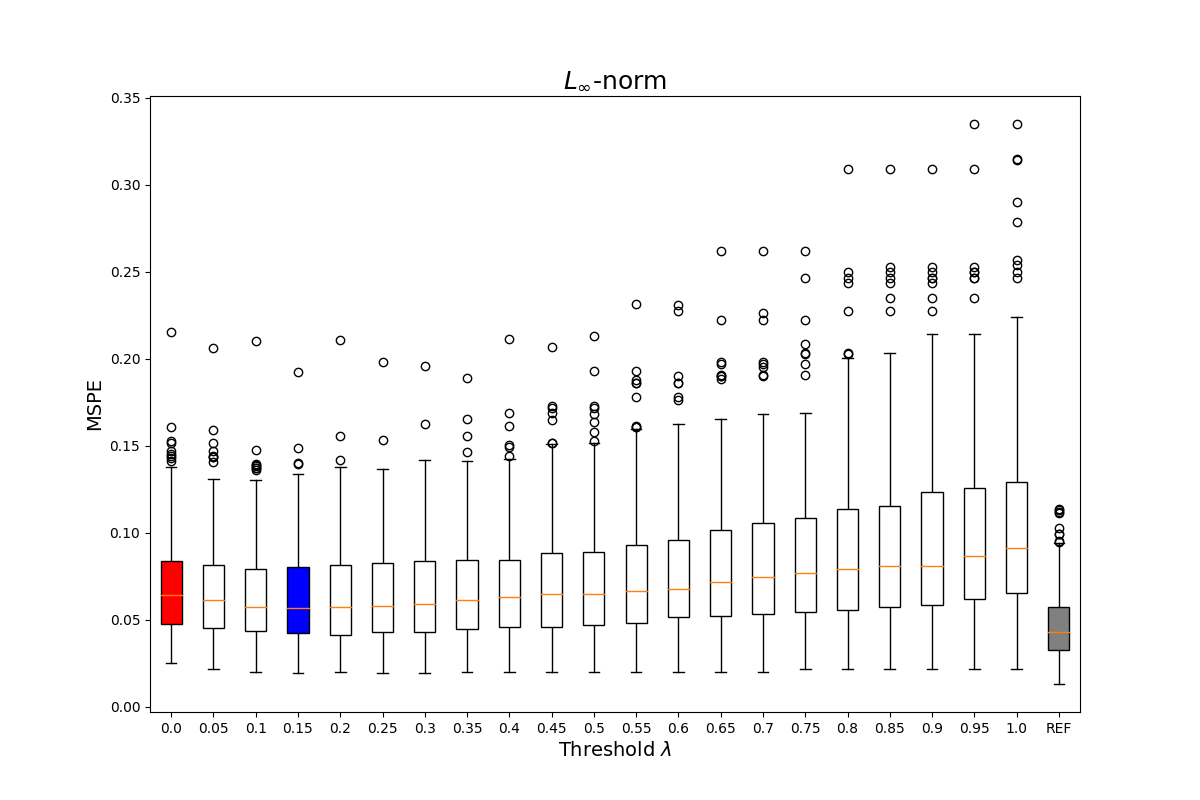}
         \caption{$L_{\infty}$}
         \label{fig:Linfty}
     \end{subfigure}
    \caption{\small 
        Normalized mean squared prediction error (NMSPE) versus threshold $\lambda$ for vector-valued linear regression with three different metrics for $\mathcal{Y}$: (a) $\ell_1$-metric; (b) $\ell_2$-metric; and (c) $\ell_{\infty}$-metric. 
        \textcolor{blue}{\bf [SVT]}: the regularized Fr\'echet regression estimator $\varphi_{\tilde{\mathcal{D}}_n}^{(\lambda)}$ with best NMSPE; \textcolor{red}{\bf [EIV]}: the unregularized Fr\'echet regression estimator $\varphi_{\tilde{\mathcal{D}}_n}^{(0)}$ with errors-in-variables covariates $Z$; \textcolor{gray}{\bf [REF]}: the unregularized Fr\'echet regression estimator $\varphi_{\mathcal{D}_n}^{(0)}$ with error-free covariates $X$ (oracle). 
        }
    \label{fig:linear}
\end{figure}

\end{document}